\theoremstyle{plain}
\newtheorem{thm}{Theorem}[section]
\newtheorem{lem}[thm]{Lemma}
\newtheorem{prop}[thm]{Proposition}
\newtheorem{cor}[thm]{Corollary}
\theoremstyle{definition}
\newtheorem{defn}[thm]{Definition}
\newtheorem{exmp}[thm]{Example}
\newtheorem{prob}{Problem}
\newtheorem{kprob}{Key Problem}
\newtheorem{cond}{Additional Condition}
\newtheorem{assumption}{Assumption}
\theoremstyle{remark}
\newtheorem{rem}[thm]{Remark}
\newtheorem*{notat}{Notation}
\DeclareMathOperator{\supp} {supp}
\DeclareMathOperator{\w} {w}
\DeclareMathOperator{\dd} {d}
\DeclareMathOperator{\rk} {rk}
\DeclareMathOperator{\ev}{ev}
\DeclareMathOperator{\lcm}{lcm}
\def\F{\mathbb{F}}
\def\Z{\mathbb{Z}}
\newcommand{\Fq}{\F_q}
\newcommand{\card}[1]{\left| #1 \right|}
\newcommand{\word}[1]{\ensuremath{\boldsymbol{#1}}}
\newcommand{\xv}{\word{x}}
\newcommand{\yv}{\word{y}}
\newcommand{\error}{\word{e}}
\newcommand{\av}{\word{a}}
\newcommand{\bv}{\word{b}}
\newcommand{\cv}{\word{c}}
\renewcommand{\error}{\word{e}}
\newcommand{\vv}{\word{v}}
\newcommand{\uv}{\word{u}}
\newcommand{\zv}{\word{z}}
\newcommand{\RS}[2]{\mathbf{RS}_{#1}[#2]}
\newcommand{\GRS}[2]{\mathbf{GRS}_{#1}[#2]}
\newcommand{\X}{\mathcal{X}}
\newcommand{\AGcode}[3]{\mathcal{C}_L (#1, #2, #3)}
\newcommand{\points}{\mathcal{P}}
\newcommand{\ip}[1]{\textcolor{red}{\bf [Isabella : #1 ]}}
\newcommand{\ac}[1]{\textcolor{blue}{\bf [Alain : #1 ]}}
\newcommand{\eqdef}{\stackrel{\textrm{def}}{=}}
\newcommand{\map}[4]{
    \left\{
    \begin{array}{ccc}
        #1 & \longrightarrow & #2 \\ #3 & \longmapsto & #4
    \end{array}
    \right.
}
\renewcommand{\leq}{\leqslant}
\renewcommand{\geq}{\geqslant}
\renewcommand{\le}{\leqslant}
\renewcommand{\ge}{\geqslant} 
\title{Power error locating pairs}
\author[1,2]{Alain Couvreur\thanks{\texttt{alain.couvreur@inria.fr}}}
\author[1,2]{Isabella Panaccione\thanks{\texttt{isabella.panaccione@inria.fr}}}
\affil[1]{Inria}
\affil[2]{
  LIX, CNRS UMR 7161\break
  École Polytechnique,\break
  91128 Palaiseau Cedex, France
}
\begin{document}

\maketitle

\begin{abstract}
  We present a new decoding algorithm based on error locating pairs
  and correcting an amount of errors exceeding half the minimum
  distance. When applied to Reed--Solomon or algebraic geometry codes,
  the algorithm is a reformulation of the so--called {\em power
    decoding} algorithm.  Asymptotically, it corrects errors up to
  Sudan's radius. In addition, this new framework applies to any code
  benefiting from an error locating pair. Similarly to Pellikaan's and
  K\"otter's approach for unique algebraic decoding, our algorithm
  provides a unified point of view for decoding codes with an
  algebraic structure beyond the half minimum distance. It permits to
  get an abstract description of decoding using only codes and linear
  algebra and without involving the arithmetic of
  polynomial and rational function algebras used for the definition of
  the codes themselves. Such algorithms can be valuable for instance
  for cryptanalysis to construct a decoding algorithm of a code
  without having access to the hidden algebraic structure of the code.
\end{abstract}

\medskip

\noindent {\bf Key words : } Error correcting codes; Reed--Solomon codes; algebraic geometry codes; decoding algorithms; power decoding; error correcting pairs;
cyclic codes.

\medskip

\noindent {\bf MSC : } 94B35, 94B27, 11T71,14G50.

\section*{Introduction}

Algebraic codes such as Reed--Solomon codes or algebraic geometry
codes are of central interest in coding theory because, compared to
random codes, these structured codes benefit from polynomial time
decoding algorithms that can correct a significant amount of errors.
The decoding of Reed--Solomon and algebraic geometry codes is a
fascinating topic at the intersection of algebra, algorithms, computer
algebra and complexity theory.

\subsection*{Decoding of Reed--Solomon codes}
Reed--Solomon codes benefit from an algebraic structure coming from
univariate polynomial algebras.  Thanks to this structure, one can
easily prove that they are maximum distance separable (MDS). In
addition, one can design an efficient unique decoding algorithm based
on the resolution of a so--called {\em key equation}
\cite{B68,B15,WB83,GS92}  and correcting up to half the minimum
distance.  This decoding algorithm is sometimes referred to as
Welch--Berlekamp algorithm in the literature.

In the late nineties, two successive breakthroughs due to Sudan
\cite{S97} and Guruswami and Sudan \cite{GS99} permitted to prove that
Reed--Solomon codes and algebraic geometry codes can be decoded in
polynomial time with an asymptotic radius reaching the so--called {\em
  Johnson bound} \cite{J62}. These algorithms have decoding radius
exceeding half the minimum distance at the price that they may return
a list of codewords instead of a single one. This drawback has
actually a very limited impact since in practice, the list size is
almost always less than or equal to $1$ (see \cite{M03} for further
details).  Note that decoding Reed--Solomon codes beyond the Johnson
bound remains a fully open problem: it is proved in \cite{GV05} that
the maximum likelihood decoding problem for Reed--Solomon codes is
NP--hard but the possible existence of a theoretical limit between the
Johnson bound and the covering radius under which decoding is possible
in polynomial time remains an open question with only partial answers
as in \cite{RW14}.

All the previously described decoders are {\em worst case}, i.e.
correct any corrupted codeword at distance less than or equal to some
fixed bound $t$. On the other side, some {\em probabilistic}
algorithms may correct more errors at the cost of some rare failures.
For instance, it is known for a long time that the classical
Welch--Berlekamp algorithm applied to interleaved Reed--Solomon is a
probabilistic decoder reaching the channel capacity \cite{SSB09} when
the number of interleaved codewords tends to infinity. Inspired by
this approach Bossert et. al. \cite{SSB10} proposed a probabilistic
decoding algorithm for decoding genuine Reed--Solomon codes by
interleaving the received word and some of its successive powers with
respect to the component wise product.  This algorithm has been called
{\em power decoding} in the sequel. A
striking feature of this power decoding is that it has the same
decoding radius as Sudan algorithm. Moreover an improvement of the
algorithm due to Rosenkilde \cite{N15} permits to reach Guruswami--Sudan radius, that is to say the Johnson bound.

However, compared to Sudan algorithm which is {\em worst case} and
returns always the full list of codewords at bounded distance from the
received word, the {\em power decoding} algorithm returns at most one
element and might fail. The full analysis of its failure probability
and the classification of failure cases is still an open
problem 
but practical experimentations give evidences that this failure
probability is very low.

\subsection*{Decoding of algebraic geometry codes}
All the previously described decoding algorithms for Reed--Solomon
codes have natural extensions to algebraic geometry codes at the cost
of a slight deterioration of the decoding radius which is proportional
to the curve's genus. The problem of decoding algebraic geometry codes
motivated hundreds of articles in the last three decades. The story
starts in the late 80's with an article of Justesen, Larsen, Jensen,
Havemose and H{\o}holdt \cite{JLJHH89} proposing a syndrome based
decoding algorithm for codes from plane curves. The algorithm has then
been extended to arbitrary curves by Skorobogatov and Vl\u{a}du\c{t}
\cite{SV90}.  The original description of these algorithms was
strongly based on algebraic geometry. However, subsequently, Pellikaan
\cite{P88, P92} and independently K\"otter \cite{K92} proposed an
abstract description of these algorithms expurgated from the formalism
of algebraic geometry. This description was based on an object called
{\em error correcting pair} . An error correcting pair for a code $C$
is a pair of codes $A, B$ satisfying some dimension and minimum
distance constraints and such that the space $A * B$ spanned by the
component wise products of words of $A$ and $B$ is contained in
$C^\perp$. The existence of such a pair 
provides an efficient decoding algorithm essentially based on linear algebra.
This approach provides a unified framework to describe algebraic
decoding for Reed--Solomon codes, algebraic geometry codes and some
cyclic codes \cite{D93, DK94}.

Concerning list decoding, Sudan and Guruswami--Sudan algorithms extend
naturally to algebraic geometry codes.  Sudan algorithm has been
extended to algebraic geometry codes by Shokrollahi and Wasserman
\cite{SW99} and Guruswami--Sudan original article \cite{GS99} treated
the list decoding problem for both Reed--Solomon and algebraic
geometry codes.  Similarly, the power decoding algorithm generalises
to such codes. As far as we know, no reference provides such a
generalisation in full generality, however, such an algorithm is
presented for the case of Hermitian curves in \cite{NB15} and the
generalisation to arbitrary curves is rather elementary and
sketched in Appendix~\ref{sec:appendix}.

For further details on unique and list decoding algorithms
for algebraic geometry codes, we refer the reader to the excellent
survey papers \cite{HP95,BH08}.

\subsection*{Our contribution}
In summary, on the one hand, Reed--Solomon and algebraic geometry
codes benefit from a Welch--Berlekamp like unique decoding algorithm,
a Sudan--like list decoding algorithm and a power decoding. On the
other hand, an abstract unified point of view for the unique decoding
of these codes is given by error correcting pairs.  In the present
article we partially fill the gap by proposing a new algorithm in the
spirit of error correcting pairs algorithms but permitting to correct
errors beyond half the minimum distance. For Reed--Solomon and
algebraic geometry codes, our algorithm is nothing but an abstraction
of the power decoding algorithm, exactly as Pellikaan's error
correcting pairs algorithm is an abstraction of
Welch--Berlekamp. However, this abstract version applies to any code
equipped with a power error locating pair such as many cyclic codes
\cite{D93, DK94}, for which our algorithm permits to correct errors
beyond the Roos bound (\cite{R83}). 

Interestingly, for algebraic geometry codes, the analysis of the
number of errors our algorithm can correct turns out to provide a
slightly better decoding radius compared to the analysis of the power
decoding. According to experimental observations, the radius we obtain
seems out to be optimal for both algorithms.

In addition to the intrinsic interest of having a unified point of
view for decoding beyond half the minimum distance of codes equipped
with a power error locating pair, a possible application of such an
approach lies in cryptography and in particular in cryptanalysis.
Indeed, it has been proved in \cite{CMP17}, that an 
error correcting
pair can be recovered from the single knowledge of a generator matrix
of an algebraic geometry code. Using the present article's results,
this approach can be extended in order to build a Power Error Locating Pair
from the very knowledge of a generator matrix of the code.
This shows that any
McEliece like scheme based on algebraic geometry codes and using a
Sudan--like decoder for the decryption step cannot be secure since
it is possible to recover in polynomial time all the necessary data to
decode efficiently from the single knowledge of the public key.

\begin{figure}[h]
  \centering
  \begin{tikzpicture}
    \node (A){$t\le\Bigl\lfloor\frac{d-1}{2}\Bigr\rfloor$};
    \node (B)[node distance=3.5cm, right of=A]{Welch-Berlekamp};
    \node (C)[node distance=4.5cm, right of=B]{Error correcting pairs};
    \node (F)[node distance=2cm, below of=A]{$t> \Bigl\lfloor\frac{d-1}{2}\Bigr\rfloor$};
    \node (D)[node distance=2cm, below of=B]{Sudan};
    \draw[->,font=\scriptsize] (B) to node [right]{} (D);
    \node(G)[node distance=0.5cm, below of=D]{Power decoding};
    \node (E)[node distance=2cm, below of=C]{?};
    \draw[->,font=\scriptsize] (C) to node [right]{} (E);
  \end{tikzpicture}
  \caption{Existing decoding algorithms for Reed--Solomon and algebraic
    geometry codes}
  \label{fig:existing}
\end{figure}
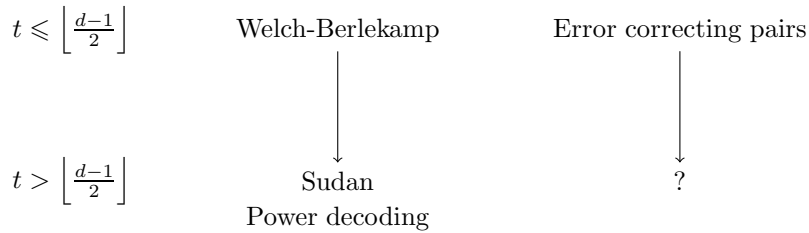

\subsection*{Outline of the article}
Notation and prerequisites are recalled in Section~\ref{sec:nota}.  In
Section~\ref{section1}, we present some well--known decoding
algorithms for algebraic codes with a particular focus on
Reed--Solomon codes. Namely, a presentation of Welch--Berlekamp
algorithm, power decoding and error correcting pairs algorithm is
given. Our contribution, the {\em power error locating pairs}
algorithm is presented in Section~\ref{secPECPRS}. Next, we explain
how this abstract algorithm can be applied to Reed--Solomon codes in
Section~\ref{sec:PELP_RS} and to algebraic geometry codes in
Section~\ref{section:AGcodes}. The latter section is concluded with a
discussion on cryptanalytic applications of power error locating
pairs.  Finally, our algorithm is applied to cyclic codes in
Section~\ref{sec:cyclic}.

 \section{Notation and prerequisites}\label{sec:nota}
\subsection{Basic notions of coding theory}
In this article, we work over a finite field $\F_q$. Given a positive
integer $n$, the {\em Hamming weight} on $\F_q^n$ is denoted as
\[
  \forall \xv = (x_1, \ldots, x_n) \in \F_q^n,\quad
  \w(\xv) \eqdef \card{\{i \in \{1, \ldots, n\} ~|~ x_i \neq 0\}}
\]
and the {\em Hamming distance} $\dd(\ \! \cdot\  ,\ \! \cdot\ \!)$ as
\[
  \forall \xv, \yv \in \Fq^n,\quad \dd(\xv, \yv) \eqdef \w (\xv - \yv).
\]
The {\em support} of a vector $\av \in \F_q^n$ is the subset
$\supp (\av) \subseteq \{1, \ldots, n\}$ of indexes $i$ satisfying
$a_i \neq 0$.

All along this article, term {\em code} denotes a {\em linear code},
i.e. a linear subspace of $\Fq^n$.  Given a code $C \subseteq \Fq^n$,
the minimum distance of $C$ is denoted by $\dd (C)$ or $\dd$ when
there is no ambiguity on the code.  In addition, we recall some
classical notions of coding theory and introduce notation for them.

\begin{defn}\label{def:short_punct}
  Given $J=\{j_1,\dots, j_s\}\subset \{1,\dots, n\}$ and
  $\xv=(x_1,\dots,x_n)\in~\mathbb{F}_q^n$, we denote by $\xv_J$ the
  projection of $\xv$ on the coordinates in $J$ and by $Z(\xv)$
  the complement of the support of $\xv$ in $\{1, \dots, n\}$.
  Namely,
  \[
    \xv_J\eqdef(x_{j_1},\dots, x_{j_s}) \qquad {\rm and} \qquad
    Z(\xv)\eqdef \{i\in\{1,\dots, n\}\mid x_i=0\}
  \]
Moreover, for $A\subseteq\mathbb{F}_q^n$, we define
\begin{enumerate}[$(i)$] 
\item $A_J\eqdef\{\av_J\mid \av\in A\}\subseteq\mathbb{F}_q^{|J|}$
  (\textit{puncturing});
\item
  $A(J)\eqdef\{\av\in A\mid \av_J=\textbf{0}\}\subseteq\mathbb{F}_q^n$
  (\textit{shortening});
\item $Z(A)\eqdef\{i\in\{1,\dots, n\}\mid a_i=0\ \ \forall \av\in A\}$
  ({\it complementary of the support}).
\end{enumerate}
\end{defn}

\begin{rem}
  In (\textit{ii}) we made an abuse of language by using the term
  \textit{shortening} which classically refers to the operation
  \[
    \{\av_{\{1,\dots, n\}\setminus J}\mid \av_J=
    \textbf{0}\}.
  \]
In comparison to the usual definition we do not remove the prescribed
zero positions. We do so because we need to compute some star products
(see \S\ref{subsec:star_prod}) and for this sake involved vectors
should have the same length.
\end{rem}

\begin{rem}
  Classically in the literature, {\em puncturing} a code at a subset
  $J$ means deleting the entries whose index is in $J$. Here our notation
  does the opposite operation by keeping only the positions whose indexes
  are in $J$ and removing all the other ones.
\end{rem}

\subsection{Decoding problems}
The purpose of the present article is to introduce a new algorithm
extending the {\em error correcting pairs algorithm} (see
\S~\ref{sec:ECP}) and permitting to correct errors beyond half the
minimum distance.  This is possible at one cost: the algorithm might
fail sometimes.  First, let us state the main decoding problem one
wants to solve.

\begin{prob}[Bounded decoding]\label{pb:bd_dec}
  Let $C \subseteq \Fq^n$, $\yv \in \Fq^n$ and $t \in \{0, \ldots, n\}$.
  Return (if exists) a word $\cv \in C$ such that $\dd(\cv, \yv) \leq t$.  
\end{prob}

\begin{defn}
For an algorithm solving Problem~\ref{pb:bd_dec}, the largest possible
$t$ such that the algorithm succeeds is referred to as the {\em decoding
radius} of the algorithm.  
\end{defn}

When $t \leq \left\lfloor {(\dd(C)-1)}/{2} \right\rfloor$, then the
solution, if exists, is unique and the corresponding problem is
sometimes referred to as the {\em unambiguous decoding problem}.  For
larger values of $t$, the set of codewords at distance less than or
equal to $t$ from $\yv$ might have more than one element.  To solve
the bounded decoding problem in such a situation, various decoders
exist. Some of them return the closest codeword (if unique), while other ones
return the whole list of codewords at distance less than or equal to
$t$. The algorithm of the present article, either returns a unique
solution or fails.

To conclude this section, let us state an assumption which we suppose
to be satisfied for any decoding problems considered in the sequel.

\begin{assumption}\label{as:a_solution_exists}
  In the following, given a code $C$ and a positive integer $t$, when
  considering Problem~\ref{pb:bd_dec}, we always suppose
  that the {\em received vector} $\yv \in \Fq^n$ is of the form
  $ \yv = \cv + \error $ where $\cv \in C$ and $\w (\error) = t$.
  Equivalently, we always suppose that the bounded decoding problem
  has at least one solution.
\end{assumption}

\subsection{Reed--Solomon codes}\label{subsec:RS}
The space of polynomials with coefficients in $\Fq$ and degree less
than $k$ is denoted by $\F_q[X]_{<k}$.  Given an integer $n\ge k$ and
a vector $\xv \in \Fq^n$ whose entries are distinct, the {\em
  Reed--Solomon code} of length $n$ and dimension $k$ is the image of
the space $\F_q[X]_{<k}$ by the map
\begin{equation}\label{eq:ev_map}
  \ev_{\xv} : \map{\Fq [X]}{\Fq^n}{f}{(f(x_1), \ldots, f(x_n)).}
\end{equation}
This code is denoted by $\RS{q}{\xv, k}$ or $\RS{q}{k}$ when
there is no ambiguity on the vector $\xv$.
That is:
\[
  \RS{q}{k} \eqdef \left\{ (f(x_1), \ldots, f(x_n)) ~|~
  f \in \Fq[X]_{<k}\right\} = \ev_{\xv} \left(\F_q[X]_{<k} \right).
\]
One can actually consider a larger class of codes called
\textit{generalised Reed--Solomon codes} and defined as:
\[\GRS{q}{\xv, \yv, k}\eqdef \left\{ (y_1f(x_1), \ldots, y_nf(x_n)) ~|~
    f \in \Fq[X]_{<k}\right\},\] where
$\yv\in{(\mathbb{F}_q^\times)}^n$. Such a code has length $n$,
dimension $k$ and minimum distance $\dd = n - k + 1$. In this article,
for the sake of simplicity, we focus on the case of Reed--Solomon codes
($\yv=(1,\ldots, 1)$) with $n = q$, i.e. the so--called {\em
  full--support} Reed--Solomon codes. This context is much more
comfortable for duality since we can assert that
\[
  \RS{q}{k}^\perp = \RS{q}{n-k}.
\]
In the general case, the above statement remains true by replacing
Reed--Solomon codes by generalised Reed--Solomon codes with a specific
choice of $\yv$. Indeed, it holds
$\GRS{q}{\xv, \yv, k}^{\perp}=\GRS{q}{\xv, \yv', n-k}$, where
\[\yv'=-\frac{1}{y_i\prod_{j=1\atop j\ne i }^n(x_j-x_i)}\cdot\]
See for instance \cite[Problem~5.7]{R06}.
Actually, the results
of the present article extend straightforwardly to generalised Reed--Solomon
codes at the cost of slightly more technical proofs.

\subsection{Algebraic geometry codes}
In what follows, by {\em curve} we always mean a smooth projective
geometrically connected curve defined over $\Fq$.
Given such a curve $\X$, a divisor $G$ on $\X$
and a sequence $\points = (P_1, \dots, P_n)$ of rational points of
$\X$ avoiding the support of $G$, one can define the code
\[
  \AGcode{\X}{\points}{G} \eqdef \left\{
    (f(P_1), \dots, f(P_n)) ~|~ f\in L(G)
  \right\},
\]
where $L(G)$ denotes the Riemann--Roch space associated to $G$.  When
$\deg G < n$, such a code has dimension $k \geq \deg G +1 - g$ where
$g$ denotes the genus of $\X$ and minimum distance $\dd > n - \deg G$.
We refer the reader to \cite{TVN07,S09} for further details
on algebraic geometry, function fields and algebraic
geometry codes.

\subsection{Star product of words and codes}\label{subsec:star_prod}
The space $\Fq^n$ is a product of $n$ fields and hence has a natural
structure of ring. We denote by $*$ the component wise product
of vectors
\[(a_1, \dots, a_n) * (b_1, \dots, b_n) \eqdef (a_1 b_1, \dots , a_n
  b_n).
\]
Given a vector $\av \in \F_q^n$, the $i$--th power $\av^i$ of $\av$ is
defined as $\av^i \eqdef (a_1^i, \dots, a_n^i)$.

\begin{rem}\label{propstarprod}
  This product should not be confused with the canonical inner product
  in $\mathbb{F}_q^n$ 
  defined by $\langle \av, \bv \rangle = \sum_{i=1}^n a_i b_i$.
  Note that these two operations are related by the following
  adjunction property
  \begin{equation}\label{eq:adjunction}
    \langle \av\ast \bv,\cv \rangle=\langle \av, \bv\ast\cv\rangle.
  \end{equation}
In particular
$\av\ast \bv\in \cv^{\perp}\iff \av\ast \cv\in \bv^{\perp}$.
\end{rem}

\begin{defn}
  Given two codes $A, B \subseteq \Fq^n$, the star product $A * B$ is
  the code spanned by all the products $\av * \bv$ for $\av \in A$ and
  $\bv \in B$. If $A = B$, the product is denoted by $A^2$.
  Inductively, one defines $A^i \eqdef A * A^{i-1}$ for
  all $i\geq 2$.
\end{defn}

\subsubsection{Star products of algebraic codes}
 Algebraic codes satisfy peculiar properties with respect
to the star product.

\begin{prop}[Star product of Reed--Solomon codes]\label{prop:star_prod_GRS}
  Let $\xv \in \Fq^n$ be a vector with distinct entries and $k, k'$
  be positive integers such that $k+k'-1\le n$. Then,
  \[
    \RS{q}{\xv, k} * \RS{q}{\xv, k'} = \RS{q}{\xv, k+k'-1}.
  \]
\end{prop}

\begin{rem}\label{rem:star_product_for_RS}
  Actually Proposition~\ref{prop:star_prod_GRS} holds true even if 
$k+k'-1> n$ but in this situation the right--hand side becomes $\F_q^n$.
\end{rem}

\begin{prop}[Star product of AG codes]\label{prop:star_prod_AGcodes}
  Let $\X$ be a curve of genus $g$, $\points = (P_1, \dots, P_n)$
  be a sequence of rational points of $\X$ and $G, G'$ be two divisors of
  $\X$ such that $\deg G \geq 2g$ and $\deg G' \geq 2g+1$. Then,
  \[
    \AGcode{\X}{\points}{G} * \AGcode{\X}{\points}{G'}
    = \AGcode{\X}{\points}{G+G'}.
  \]
\end{prop}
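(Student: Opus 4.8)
The plan is to prove the two inclusions separately. Throughout, write $\ev_\points(f)\eqdef(f(P_1),\dots,f(P_n))$ for a rational function $f$ regular at all the $P_i$, so that $\AGcode{\X}{\points}{G}=\ev_\points(L(G))$, and likewise for $G'$ and $G+G'$. The inclusion $\subseteq$ is immediate once one observes that $\ev_\points$ turns ordinary multiplication of functions into the star product: for $f\in L(G)$ and $g\in L(G')$,
\[
  \ev_\points(f)*\ev_\points(g)=\bigl(f(P_1)g(P_1),\dots,f(P_n)g(P_n)\bigr)=\ev_\points(fg).
\]
Since $\mathrm{div}(fg)=\mathrm{div}(f)+\mathrm{div}(g)\ge-(G+G')$, the product $fg$ lies in $L(G+G')$, so each generator $\ev_\points(f)*\ev_\points(g)$ belongs to $\AGcode{\X}{\points}{G+G'}$; as the latter is a linear code, $\AGcode{\X}{\points}{G}*\AGcode{\X}{\points}{G'}\subseteq\AGcode{\X}{\points}{G+G'}$ follows.

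For the reverse inclusion I would reduce to a purely algebraic statement about Riemann--Roch spaces, namely the equality
\[
  L(G)\cdot L(G')=L(G+G'),
\]
where $L(G)\cdot L(G')$ denotes the $\F_q$--span of the products $fg$ with $f\in L(G)$, $g\in L(G')$. Indeed, applying $\ev_\points$ and again using $\ev_\points(fg)=\ev_\points(f)*\ev_\points(g)$, this equality gives $\AGcode{\X}{\points}{G+G'}=\ev_\points(L(G+G'))=\ev_\points(L(G)\cdot L(G'))$, and the right--hand side is by definition the span of the products $\ev_\points(f)*\ev_\points(g)$, i.e. the star product code. Hence everything reduces to the surjectivity of the multiplication map $L(G)\otimes L(G')\to L(G+G')$.

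This surjectivity is a classical property of curves, which holds exactly under bounds of the shape $\deg G\ge2g$ and $\deg G'\ge2g+1$ (it is the statement underlying the projective normality of curves embedded by a linear system of large degree, going back to Mumford). To prove it I would first extend scalars to $\overline{\F_q}$, which changes neither side since multiplication is compatible with field extension and the dimensions $\ell(G),\ell(G'),\ell(G+G')$ are preserved, and then argue by induction on $\deg G'$, peeling off one rational point at a time. The core of the inductive step is a Riemann--Roch dimension count: writing $G'=D+P$ and $L(G')=L(D)\oplus\langle s\rangle$ for a function $s$ with maximal pole at $P$, one has $L(G)\cdot L(G')=L(G+D)+s\,L(G)$ by the inductive hypothesis applied to $D$; using $\deg G\ge2g$ to guarantee $\ell(G)-\ell(G-P)=1$ and the non--speciality of the divisors involved, one checks that $L(G+D)+s\,L(G)$ already attains the dimension $\ell(G+D+P)=\ell(G+G')$, forcing equality.

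The genuinely hard part is this surjectivity; the inclusion $\subseteq$ is a formality. The delicate points are the bookkeeping of non--specialness conditions---this is precisely where the numerical thresholds $2g$ and $2g+1$ originate, through the Riemann--Roch vanishing range $\deg>2g-2$---and the base case of the induction, which lands at the boundary $\deg G=2g$, $\deg G'=2g+1$ where the bound is sharp and which must therefore be treated directly, typically via the base--point--free pencil trick or a Castelnuovo-type argument; alternatively one bypasses this case by quoting the classical multiplication theorem for line bundles on curves.
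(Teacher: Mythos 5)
Your argument is correct and takes essentially the same route as the paper: the paper's proof is a one-line citation of Mumford's multiplication theorem (\cite[Theorem 6]{M70}, via \cite[Corollary 9]{CMP17}), which is exactly the surjectivity of $L(G)\otimes L(G')\to L(G+G')$ to which you reduce the nontrivial inclusion. The additional material you supply --- the easy inclusion via $\ev_\points(fg)=\ev_\points(f)*\ev_\points(g)$ and the inductive sketch with the boundary case $\deg G'=2g+1$ deferred to the classical literature --- is sound but goes no further than that citation.
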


\begin{proof}
  This is a consequence of \cite[Theorem 6]{M70}.
  For instance, see \cite[Corollary 9]{CMP17}.
\end{proof}

\subsubsection{A Kneser--like theorem}\label{sss:Kneser}
We conclude this section with a result which will be useful in the sequel
and can be regarded as a star product counterpart of the famous Kneser Theorem
in additive combinatorics (see \cite[Theorem~5.5]{TV06}).
We first have to introduce a notion.

\begin{defn}
  Let $H \subseteq \Fq^n$ be a code, the {\em stabiliser} of $H$ is defined
  as
  \[
    \textrm{Stab}(H) \eqdef \{\xv \in \Fq^n ~|~ \xv \ast H \subseteq H\}.
  \]
\end{defn}

\begin{thm}\label{thm:kneser}
  Let $A, B \subseteq \Fq^n$ be two codes. Then,
  \[
    \dim(A \ast B) \geq \dim A + \dim B - \dim \textrm{Stab}(A \ast B).
  \]
\end{thm}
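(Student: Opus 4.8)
The plan is to exploit the ring structure of $\Fq^n$ and to peel off the stabiliser through its idempotents, in close analogy with the way one reduces the additive Kneser theorem to a non-degenerate ``critical'' configuration.

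First I would record that $H\eqdef\textrm{Stab}(A\ast B)$ is not merely a subspace but a unital $\Fq$-subalgebra of the product ring $\Fq^n$: it contains $\mathbf 1$ (the identity for $\ast$) and is closed under $\ast$, since $\xv\ast H\subseteq H$ and $\yv\ast H\subseteq H$ give $(\xv\ast\yv)\ast H\subseteq H$. The crucial structural point is to classify such subalgebras. Because $\Fq^n=\prod_{i=1}^n\Fq$ is reduced, $H$ is a semisimple commutative algebra, hence a product of finite fields; but every coordinate projection restricts to a \emph{unital} $\Fq$-algebra homomorphism from such a field into $\Fq$, which forces each factor to be $\Fq$ itself. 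Consequently $H$ is exactly the algebra of vectors that are constant on the blocks of some partition $S_1\sqcup\cdots\sqcup S_r$ of $\{1,\dots,n\}$, its primitive idempotents $e_1,\dots,e_r$ are the indicators of the blocks, and $\dim H=r$.

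Next I would use these idempotents to split the product. Since $\mathbf 1=\sum_j e_j$ and each $e_j\in H$ stabilises $C\eqdef A\ast B$, one obtains an orthogonal decomposition $C=\bigoplus_j e_j\ast C$ into pieces supported on the disjoint blocks, so $\dim C=\sum_j\dim(e_j\ast C)$. Idempotency gives $e_j\ast(\av\ast\bv)=(e_j\ast\av)\ast(e_j\ast\bv)$, whence $e_j\ast C=A_j\ast B_j$ with $A_j\eqdef e_j\ast A$ and $B_j\eqdef e_j\ast B$. A short lifting argument shows that the stabiliser of $A_j\ast B_j$ inside $\Fq^{S_j}$ is reduced to the scalars $\Fq e_j$: any block-supported $\xv$ stabilising $A_j\ast B_j$ extends by $\mathbf 1-e_j$ to a stabiliser of $C$, hence lies in $H$, hence is constant on $S_j$. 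Finally, since $\av\mapsto(e_j\ast\av)_j$ embeds $A$ into $\bigoplus_j A_j$, one has $\dim A\le\sum_j\dim A_j$ and likewise for $B$. Thus the whole theorem reduces to the \emph{critical case} of trivial stabiliser: whenever $\textrm{Stab}(A\ast B)=\Fq\mathbf 1$ one must prove $\dim(A\ast B)\ge\dim A+\dim B-1$, and summing this over the blocks together with the two inequalities above yields the claim.

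It remains to settle the critical case, and this is where I expect the real difficulty. The engine is a multiplicative analogue of Kneser's $e$-transform: fixing a \emph{unit} $\bv\in(\Fq^\times)^n$ (which plays the role of a translation) and setting $A'\eqdef A+\bv\ast B$ and $B'\eqdef B\cap(\bv^{-1}\ast A)$, one checks that $A'\ast B'\subseteq A\ast B$ and, using the bijection $\bv^{-1}\ast(A\cap\bv\ast B)=B'$, that $\dim A'+\dim B'=\dim A+\dim B$. If one can always pick $\bv$ so that $1\le\dim B'<\dim B$, an induction on $\dim B$ closes the argument, the base case $\dim B=1$ being immediate since trivial stabiliser forces $Z(B)=\emptyset$ and hence a full-support generator, for which $\dim(A\ast B)=\dim A$. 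The hard part is exactly the point at which additive Kneser is hard: proving that a \emph{progress-making} unit exists whenever $\dim B\ge 2$, and that no growth of the stabiliser obstructs the induction. I would handle this by showing that the failure of every transform to make progress (each unit $\bv$ yielding $\bv\ast B\subseteq A$ or $(\bv\ast B)\cap A=0$) is incompatible with $\textrm{Stab}(A\ast B)$ being reduced to the scalars, thereby forcing the inequality in the remaining degenerate configurations.
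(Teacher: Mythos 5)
You should first note that the paper itself does not prove Theorem~\ref{thm:kneser}: it simply cites \cite[Theorem~18]{MZ15} and \cite[Theorem~4.1]{BL17}, so there is no in-paper argument to compare against. Your reduction to the critical case is correct and is essentially the standard first half of those proofs: $\textrm{Stab}(A\ast B)$ is indeed a unital subalgebra of $\Fq^n$, hence (being reduced, with every coordinate projection a unital map to $\Fq$) of the form $\bigoplus_j \Fq e_j$ for the indicators $e_j$ of a partition; the splitting $A\ast B=\bigoplus_j (e_j\ast A)\ast(e_j\ast B)$, the triviality of the stabiliser of each block, and the inequalities $\dim A\le\sum_j\dim(e_j\ast A)$ all check out, and summing the critical-case inequality over blocks does recover the theorem. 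This part I would accept as written.

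The genuine gap is that the critical case --- $\textrm{Stab}(A\ast B)=\Fq\mathbf{1}$ implies $\dim(A\ast B)\ge\dim A+\dim B-1$, i.e.\ precisely Corollary~\ref{cor:CauchyDavenport} --- carries the entire mathematical content of the theorem, and your treatment of it is a plan rather than a proof. Two essential steps are only asserted. First, the existence of a progress-making unit: you must show that if for every unit $\bv$ one has $\bv\ast B\subseteq A$ or $\bv\ast B\cap A=0$, then $\textrm{Stab}(A\ast B)$ cannot be reduced to the scalars; no argument is given, and this is exactly where the difficulty of Kneser-type theorems lives. Second, even granting such a $\bv$, the induction on $\dim B$ does not close as stated: the inductive hypothesis applied to $(A',B')$ yields $\dim(A\ast B)\ge\dim(A'\ast B')\ge\dim A+\dim B-\dim\textrm{Stab}(A'\ast B')$, and since $A'\ast B'$ is a possibly strict subspace of $A\ast B$ its stabiliser may well be larger than $\Fq\mathbf{1}$; the published proofs handle this by inducting on the full statement with the stabiliser in the bound and by a separate argument relating $\textrm{Stab}(A'\ast B')$ to the original pair, not by proving the critical case in isolation. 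You flag both issues yourself (``this is where I expect the real difficulty''), which is honest, but as it stands the proposal reduces the theorem to an unproved statement of essentially equal depth. Either supply these two lemmas or, as the paper does, cite \cite{MZ15} or \cite{BL17} for the non-degenerate case.
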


\begin{proof}
  See \cite[Theorem 18]{MZ15} or \cite[Theorem 4.1]{BL17}.
\end{proof}

For any code $C$, the stabiliser of $C$ has dimension at least $1$
since it contains the span of the vector $(1, \ldots, 1)$. On the
other hand, it has been proved in \cite[Theorem 1.2]{KS80} that a code
$C$ has a stabiliser of dimension $>1$ if and only if it is {\em
  degenerated}, i.e. if and only if either it is a direct sum of
subcodes with disjoint supports or any generator matrix of $C$
has a zero column.
This leads to the following analog of Cauchy--Davenport
Theorem (\cite[Theorem~5.4]{TV06}).

\begin{cor}\label{cor:CauchyDavenport}
  Let $A, B \subseteq \Fq^n$ be two codes such that $A \ast B$ is non
  degenerated, then
  \[
    \dim A \ast B \geq \dim A + \dim B - 1.
  \]
\end{cor}

 \section{Former decoding algorithms for Reed--Solomon and
  algebraic geometry codes}\label{section1}

It is known that several different decoding algorithms have been
designed for Reed--Solomon and algebraic geometry codes. In particular,
depending on the algorithm, we are able to solve either
Problem~\ref{pb:bd_dec} up to half the minimum distance, or
to the Johnson bound.

In this section, we recall several decoding algorithms for
Reed--Solomon codes. For all of them, a natural extension to algebraic
geometry codes is known.
Recall that, whenever we
discuss a decoding problem we suppose
Assumption~\ref{as:a_solution_exists} to be satisfied, i.e.  we
suppose that the decoding problem has at least one solution.  Hence,
we can write
\begin{equation}\label{forme_yv}
\yv=\cv+\error,
\end{equation}
for some $\cv\in C$ and $\error\in\mathbb{F}_q^n$ with
$\w(\error)=t$. Note that since $C=\RS{q}{k}$, the codeword $\cv$ can
be written as the evaluation of a polynomial $f(x)$ with
$\deg(f)<k$. The vector $\error$ is called the \textit{error
  vector}. Moreover, we define
\[I_{\error}\eqdef\supp(\error)= \{i\in\{1, \dots, n\}\mid e_i\ne 0\}.\]
Hence, we have $t=\w(\error)=|I_{\error}|$.

\subsection{Welch--Berlekamp algorithm}
Welch--Berlekamp algorithm boils down to a linear system based on $n$
key equations. In this case the decoding radius is given by a
sufficient condition, i.e. if there is a solution, the algorithm will
find it.
\begin{defn}
Given $\yv, \error$ and $I_{\error}$ as above we define 
\begin{itemize}
\item the \textit{error locator polynomial} as
  $\Lambda(X)\eqdef\prod_{i\in I_{\error}} (X-x_i)$;
\item $N(X)\eqdef \Lambda(X)f(X)$.
\end{itemize}
\end{defn}

\medskip\noindent Hence, for any $i\in\{1,\dots, n\}$, the polynomials
$\Lambda$ and $N$ verify
\begin{equation}\label{keyeqBW}
\Lambda(x_i)y_i=N(x_i).
\end{equation}
The aim of the algorithm is then to solve the following
\begin{kprob}\label{probberl}
  Find a pair of polynomials $(\lambda, \nu)$ such that
  $\deg(\lambda)\le t$, $\deg(\nu)\le~t+k-1$ and
\begin{equation}\label{eq:SBW}\tag{$S_{\textrm{WB}}$}
  \forall i\in \{1,\dots n\},\quad \lambda(x_i)y_i=\nu(x_i).
\end{equation}
\end{kprob}

\begin{rem}
Actually, the degrees of $\Lambda$ and $N$ are related by 
\[ \deg(N)\le \deg(\Lambda)+k-1.\] With this constraint the problem
can be solved using Berlekamp--Massey algorithm.  Though in this paper
we chose to consider the simplified constraints of Key Problem
\ref{probberl} in order to have linear constrains, which makes the
analysis of the decoding radius easier. By the following lemma it will
be clear that making this choice has no consequence on the decoding radius
of Welch--Berlekamp algorithm.
\end{rem} 
\medskip  The system \eqref{eq:SBW} is linear and has $n$ equations
in $2t+k+1$ unknowns. We know that the pair $(\Lambda, \Lambda f)$
is in its solutions space. The following result proves that, for
certain values of $t$, actually it is not necessary to find exactly
that solution to solve the decoding problem.

\begin{lem}\label{lemBW}
  Let $t\le \frac{\dd-1}{2}$. If $(\lambda, \nu)$ is a nonzero solution
  of \eqref{eq:SBW}, then $\lambda \neq 0$ and $f=\frac{\nu}{\lambda}$.
\end{lem}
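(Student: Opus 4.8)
The plan is to run the classical root--counting argument underlying Welch--Berlekamp decoding. Write $\cv = \ev_{\xv}(f)$ with $\deg f < k$, so that $y_i = f(x_i) + e_i$ for every $i$, and set $g \eqdef \lambda f - \nu \in \F_q[X]$. Since $\deg \lambda \le t$ and $\deg f \le k-1$, we have $\deg(\lambda f) \le t + k - 1$; as $\deg \nu \le t + k - 1$ by hypothesis, this yields $\deg g \le t + k - 1$.

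Next I would evaluate $g$ at the error--free positions. For every index $i \notin I_{\error}$ we have $e_i = 0$, hence $y_i = f(x_i)$, and the key equation \eqref{eq:SBW} gives $\nu(x_i) = \lambda(x_i) y_i = \lambda(x_i) f(x_i)$, i.e. $g(x_i) = 0$. Since $\w(\error) = t$, there are exactly $n - t$ such indices, and the corresponding $x_i$ are pairwise distinct because the evaluation vector $\xv$ has distinct entries. Thus $g$ admits at least $n - t$ distinct roots.

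The heart of the argument is then a degree comparison. Using $\dd = n - k + 1$ for the MDS code $\RS{q}{k}$ and the hypothesis $t \le \frac{\dd - 1}{2}$, we obtain $2t \le \dd - 1 < \dd = n - k + 1$, equivalently $t + k - 1 < n - t$. Hence $g$ is a polynomial of degree strictly smaller than its number of roots, which forces $g \equiv 0$, that is $\nu = \lambda f$.

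It remains to rule out $\lambda = 0$. If $\lambda$ were zero, then $\nu = \lambda f = 0$ as well, so $(\lambda, \nu) = (0,0)$, contradicting the assumption that the solution is nonzero. Therefore $\lambda \neq 0$, and dividing the identity $\nu = \lambda f$ by $\lambda$ gives $f = \nu / \lambda$, as claimed. I do not expect any genuine obstacle here; the only point demanding care is the strict inequality $t + k - 1 < n - t$, which is precisely what the bound $t \le \frac{\dd-1}{2}$ delivers.
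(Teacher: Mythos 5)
Your proof is correct and is precisely the classical root-counting argument: bound $\deg(\lambda f - \nu)$ by $t+k-1$, observe it vanishes at the $n-t$ error-free (distinct) evaluation points, and use $t \le \frac{\dd-1}{2}$ to force $t+k-1 < n-t$, hence $\lambda f = \nu$. The paper does not spell out a proof but defers to \cite[Theorem 4.2.2]{JH04}, and the argument given there is essentially the same one you reproduce, so there is nothing to flag.
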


\medskip\noindent For the proof we refer for instance to \cite[Theorem
4.2.2]{JH04}. We can finally write the algorithm (see Algorithm
1). Its correctness is entailed by Lemma~\ref{lemBW} whenever
$t\le \frac{\dd-1}{2}$, that is, the decoding radius of Welch--Berlekamp
algorithm is $t=\lfloor\frac{\dd-1}{2}\rfloor\cdot$
\begin{algorithm}\label{AlgoBW}
\caption{Welch-Berlekamp Algorithm}
\textbf{Inputs:} $\yv\in \mathbb{F}_q^n$ as in (\ref{forme_yv}),
$k\le n, t=\w(\error)\le\frac{\dd-1}{2}$.\\
\textbf{Output:} $f\in\mathbb{F}_q[X]_{<k}$ such that
$\dd(\ev_{\xv}(f),\yv)=t$.
\begin{algorithmic}[1]
  \State $(\lambda, \nu)\gets$ arbitrary nonzero element in the solutions
  space of \eqref{eq:SBW}
  \State \Return $f=\frac{\nu}{\lambda}$
\end{algorithmic}
\end{algorithm}
\noindent

\subsection{Power decoding algorithm}

Introduced by Sidorenko, Schmidt and Bossert, \cite{SSB09}, {\em power
  decoding} is inspired from a decoding algorithm for interleaved
Reed--Solomon codes. It consists in considering several ``powers''
(with respect to the star product) of the received vector $\yv$ in
order to have more relations to work on. Given the vector
$\yv=\cv+\error$ we want to correct, we consider the $i$--th powers
$\yv^i$ of $\yv$ for $i=1,\dots, \ell$ (see \S~\ref{subsec:star_prod}
for the definition of $\yv^i$). In this section, we only present the
case $\ell=2$ for simplicity.  We have
\begin{align}
\yv=\ & \cv+\error \label{eq:c+e}\\
\yv^2=\ & \cv^2+2\cv\ast \error+ \error^2\label{eq:c2+e2}.
\end{align}
We then rename $\error$ by $\error^{(1)}$ and
$2\cv\ast \error+ \error^2$ by $\error^{(2)}$ and get
\begin{align}
  \yv &= \cv+\error^{(1)}\label{eq:c+eprime}\\
  \yv^2 &= \cv^2+\error^{(2)}\label{eq:c2+e2prime}.
\end{align}
One can see $\yv^2$ as a perturbation of a word
$\cv^2\in C^2= \RS{q}{2k-1}$ by the error vector $\error^{(2)}$. Hence
$\yv^2$ is an instance of another decoding problem.
In addition, we have the following elementary result which is the key
of power decoding.

\begin{prop}\label{remsupp}
It holds $\supp(\error^{(2)})\subseteq \supp(\error^{(1)})$. 
\end{prop}

\noindent
It asserts that on $\yv$ and $\yv^2$, the errors are localised at the
same positions. More precisely, error positions on $\yv^2$ are error
positions on $\yv$.  Hence, we are in the error model of interleaved
codes: equations (\ref{eq:c+eprime}) and (\ref{eq:c2+e2prime}) can be
regarded as a decoding problem for the interleaving of two codewords
with errors at most at $t$ positions. Therefore, errors can be decoded
simultaneously using the same error locator polynomial.  To do that,
we consider the error locating polynomial
$\Lambda(X)=\prod_{i\in I_{\error}}(X-x_i)$ as for Welch--Berlekamp
algorithm and the polynomials
\[
  N_1\eqdef \Lambda f,\quad N_2\eqdef\Lambda f^2.
\]
Thanks to Proposition~\ref{remsupp}, it is possible to write the key
equations
\begin{equation}\label{Keyeqpow}
  \left\{
    \begin{aligned}
      \Lambda(x_i)y_i= N_1(x_i)\quad &\forall i\in \{1,\dots, n\}\\
      \Lambda(x_i)y_i^2= N_2(x_i)\quad &\forall i\in \{1,\dots, n\}
    \end{aligned}
  \right.
\end{equation}
Consequently, the Power decoding algorithm consists in solving the
following problem.
\begin{kprob}\label{probpow}
  Given $\yv\in\mathbb{F}_q^n$ and $t\in\mathbb{N}$, find
  $(\lambda, \nu_1, \nu_2)$ which fulfill
\begin{equation}\label{eq:Spo2}\tag{$S_{\textrm{Po}}$}
  \left\{
    \begin{aligned}
      \lambda(x_i)y_i=\nu_1(x_i),\quad &\forall i\in \{1,\dots, n\}\\
      \lambda(x_i)y_i^2= \nu_2(x_i),\quad &\forall i\in \{1,\dots, n\}.
    \end{aligned}
  \right.
\end{equation}
with $\deg(\lambda)\le t$ and $\deg(\nu_j)\le t+j(k-1)$ for
$j\in \{1, 2\}$.
\end{kprob}
 
\begin{rem}
  Key Problem \ref{probpow} is slightly different from the problem
  faced in the original paper describing Power Decoding
  (\cite{SSB10}). We used a \textit{key equation} formulation of the
  problem instead of the \textit{syndrome} one. The two formulations
  are equivalent if the right bounds on polynomials' degrees are
  taken (see \cite{R15}). In particular, one should look for
  $(\lambda, \nu_1, \nu_2)$ such that
  $\deg(\nu_j)\le\deg(\lambda)+j(k-1)$ for all
  $j\in \{1, \dots, \ell\}$.  However, similarly as for
  Welch--Berlekmap algorithm, we consider two weaker constraints which
  allow to reduce the problem to a linear system to solve. The price
  is that our Key Problem could get more failure cases than problem in
  \cite{SSB10}. However, we observed experimentally that these cases
  are really rare.
\end{rem}

\medskip\noindent The vector $(\Lambda, \Lambda f, \Lambda f^2)$ is a
solution of the linear system \eqref{eq:Spo2}. Though, at the moment
there is no guaranteed method to recover it among all the 
solutions. We only know that, if $g$ is a polynomial such that
$\deg(g)<k$ and $\dd(\ev_{\xv}(g), \yv)\le t$, then there exists an
error locator polynomial $\Gamma$ of the error with respect to $g$,
such that the vector
\begin{equation}\nonumber
(\Gamma, \Gamma g, \Gamma g^2)
\end{equation}
is solution of \eqref{eq:Spo2}. Among all solutions like that, we want
to pick the one that gives the closest codeword, that is the one such
that $\deg(\Gamma)$ is minimal (see pt.1 in Algorithm 2).
\begin{algorithm}\label{AlgoPD}
\caption{Power decoding algorithm with $\ell=2$}
\textbf{Inputs:} $\yv\in\mathbb{F}_q^n$ as in (\ref{forme_yv}),
$t=\w(\error), k\le n$\\
\textbf{Output:} some $g\in\mathbb{F}_q[X]_{<k}$ such that
$\dd(\ev_{\xv}(g), \yv)\le t$ or failure
\begin{algorithmic}[1]
  \State{$(\lambda, \nu_1, \nu_2)\gets$ a nonzero solution of
    \eqref{eq:Spo2} with $\lambda$ of smallest possible degree.}
  \If{$\left(\lambda|\nu_1 \ \ \mathbf{and}\ \ \lambda| \nu_2 \ \
      \mathbf{and} \ \
      {\left(\frac{\nu_1}{\lambda}\right)}^2=\frac{\nu_2}{\lambda}\right)$}
    $g\eqdef\frac{\lambda}{\nu_1}$ \If{$\dd(\ev_{\xv}(g), \yv)\le
      t$ \textbf{and} $\deg(g)<k$,}
  \State{\Return $g$.}
  \EndIf
  \EndIf \State \Return failure
\end{algorithmic}
\end{algorithm}

\begin{rem}
  The $2n$ equations we obtain in Key Problem \ref{probpow}, consist
  in the key equations for $\yv$ and the key equations for $\yv^2$,
  that is two simultaneous decoding problems. Indeed, the important
  aspect is that these two decoding problems share the error locator
  polynomial $\Lambda$. Hence, by adding $n$ relations, we only add
  $\deg (N_2)+1$ unknowns instead of $\deg(N_2)+t+2$.
\end{rem}

\begin{rem} To compute the decoding radius of the Power
  Decoding algorithm we look for a condition on the size of the system
  \eqref{eq:Spo2}. Note that the algorithm gives one solution or none,
  hence there cannot be a sufficient condition for the correctness
  of the algorithm as soon as $t>\frac{\dd-1}{2}$.
  For this reason, we look for a {\bf necessary condition} for the system
  \eqref{eq:Spo2} to have a solution space of dimension $1$.
\end{rem}

\paragraph{The general case with an arbitrary $\ell$.}
For an arbitrary $\ell$, Key Problem~\ref{probpow} is replaced the following
system:
\begin{equation}\label{eq:Spo}\tag{$S_{\textrm{Po}}$}
  \left\{
    \begin{aligned}
      \lambda(x_i)y_i=\nu_1(x_i),\quad &\forall i\in \{1,\dots, n\}\\
      \vdots \qquad & \\
      \lambda(x_i)y_i^\ell= \nu_\ell(x_i),\quad &\forall i\in \{1,\dots, n\}.
    \end{aligned}
  \right.
\end{equation}
with $\deg \lambda \leq t$ and $\deg \nu_j \leq t + j(k-1)$ for all $j \in
\{1, \ldots, \ell\}$.

\subsubsection{Decoding radius}
Under Assumption~\ref{as:a_solution_exists}, we know that
$(\Lambda, \Lambda f, \Lambda f^2)$ is a solution for \eqref{eq:Spo2}
and hence, the algorithm returns $\cv$ if and only if the solution space of
\eqref{eq:Spo2} has dimension one. 

Let us define the polynomial
\begin{equation}\nonumber
\pi(X)\eqdef\prod_{i=1}^n (X-x_i)
\end{equation}
and consider the bounds in Key Problem~\ref{probpow} on the degrees of
$\nu_1$ and $\nu_2$. If $t+2(k-1)\ge n$, then we would have
$
(0, 0, \pi)\in Sol
$. 
For a larger $\ell$ this condition becomes
\begin{equation}\label{nec_cond: Pow_0}
t<n-\ell(k-1).
\end{equation}
However, bound (\ref{nec_cond: Pow_0}) is actually much larger than
the decoding radius. We look then for a stricter bound on $t$. Another
necessary condition to have a solution space of dimension one
for~(\ref{eq:Spo}), is:
\[\#unknowns\le \#equations +1,\]
which gives $t\le \frac{2n-3k+1}{3}\cdot$

\begin{rem}
  Actually, starting from $\yv = \cv + \error$, Algorithm 2
  could return another word $\cv' \neq\cv$ which would be closer to
  $\yv$. In such a situation, the solution space of (\ref{eq:Spo2})
  will not have dimension $1$ since it will contain a triple
  $(\Lambda, \Lambda f, \Lambda f^2)$ associated to $\cv$
  and a vector $(\Lambda', \Lambda' f', \Lambda' f'^2)$ associated to $\cv'$
  which can be proved to be linearly independent.
  Therefore, the algorithm may return the closest codeword even in situations
  when the dimension of the solutions space of~(\ref{eq:Spo2}) has dimension
  larger than $1$.
  The analysis consists in giving a necessary condition for the algorithm to
  return $\cv$.
\end{rem}

Finally, the same process can be
used for a general $\ell$ and we obtain the following decoding radius
\begin{equation}\label{decradPow}
t\le \frac{2n\ell-k\ell(\ell+1)+\ell(\ell-1)}{2(\ell+1)}\cdot
\end{equation}

\subsection{The error correcting pairs algorithm}\label{sec:ECP}
The Error Correcting Pairs (ECP) algorithm has been designed by
Pellikaan \cite{P92} and independently by K\"otter \cite{K92}. Its
formalism gives an abstract description of a decoding algorithm
originally arranged for algebraic geometry codes \cite{SV90} and whose
description required notions of algebraic geometry. In their works,
Pellikaan and K\"otter, simplified the instruments needed in the
original decoding algorithm and made the algorithm applicable to any
linear code benefiting from a certain elementary structure called
\textit{error correcting pair} and defined in Definition~\ref{ecp}
below.  Given a code $C$ and a received vector $\yv = \cv + \error$
where $\cv \in C$ and $\w (\error) \leq t$ for some positive integer
$t$, the ECP algorithm consists in two steps:
\begin{enumerate}[(1)]
\item\label{item:stepECP_1} find $J\subseteq\{1, \dots, n\}$ such that
  $J\supseteq I_{\error}$, where $I_{\error}$ denotes the support of $\error$;
\item\label{item:stepECP_2} recover the nonzero entries of $\error$.
\end{enumerate}
As said before, these steps can be solved if the code has a $t$-error
correcting pair where $t=\w(\error)$ is small enough.
\begin{defn}\label{ecp}
  Given a linear code $C\subseteq \mathbb{F}_q^n$, a pair $(A, B)$ of
  linear codes, with $A, B\subseteq \mathbb{F}_q^n$ is called
  $t$-\textit{error correcting pair} for $C$ if
\begin{enumerate}[(ECP1)]
\item\label{item:ECP1} $A\ast B\subseteq C^{\perp}$; 
\item\label{item:ECP2} $\dim(A)>t$;
\item\label{item:ECP3} $\dd(B^{\perp})>t$;
\item\label{item:ECP4} $\dd(A)+\dd(C)>n$.
\end{enumerate}
\end{defn}
\begin{rem}\label{rem: swich}
One can observe that, thanks to Remark \ref{propstarprod},
\begin{equation}\nonumber
A\ast B\subseteq C^{\perp}\iff A\ast C\subseteq B^{\perp}.
\end{equation}
\end{rem}
\medskip\noindent Since this notion does not look very intuitive, an
example of error correcting pair for Reed--Solomon codes is given
further in \S~\ref{secECPRS} and an interpretation of the various
hypotheses above is given in light of this example in \S~\ref{subsec:ECPRS}.
For now, we want to explain more precisely how the ECP
algorithm works.

\subsubsection{First step of the error correcting pair algorithm}
In Step (\ref{item:stepECP_1}) of the ECP algorithm, we wish to find a
set which contains
$I_{\error}$. 
A good candidate for $J$ could then be $Z(A({I_{\error}}))$, indeed
the following result (see \cite{P92}), entails that
${I_{\error}}\subseteq Z(A({I_{\error}}))\varsubsetneq \{1,\dots, n\}$
(see Definition~\ref{def:short_punct}).
\begin{prop}\label{dim(A)}
If $\dim(A)> t$, then $A({I_{\error}})\ne 0$.
\end{prop}

\medskip\noindent Though, since we do not know ${I_{\error}}$, we do
not have any information about $A({I_{\error}})$. That is why a new
vector space is introduced:
\begin{equation}\label{eq:M}
  M_1\eqdef\{\av\in A\mid \langle \av\ast \yv,\bv\rangle=0 \ \ \forall
  \bv\in B\}.
\end{equation}
The key of the algorithm is in the following result.

\begin{thm}\label{thmecp}
  Let $\yv = \cv + \error$, ${I_{\error}}=\supp(\error)$ and $M_1$ as
  above. If $A\ast B\subseteq C^{\perp}$, then
\begin{enumerate}[(1)]
\item $A({I_{\error}})\subseteq M_1\subseteq A$;
\item if $\dd(B^{\perp})>t$, then $A({I_{\error}})=M_1$;
\end{enumerate} 
\end{thm}

\begin{proof} See \cite{P92}.
\end{proof}

\medskip\noindent Therefore, if the pair $(A, B)$ fulfills
(ECP\ref{item:ECP1}-\ref{item:ECP3}) in
Definition \ref{ecp}, then $Z(M_1)$ is non trivial and contains
${I_{\error}}$. Therefore, Step (\ref{item:stepECP_1}) of the
algorithm consists in computing $J = Z(M_1)$.

\subsubsection{Second step of the error correcting pair algorithm}
Step~(\ref{item:stepECP_2}) is nothing but the resolution of a linear
system depending on $J$ and the syndrome of $\yv$. First, some
notation is needed.

\begin{notat}
  Let $H$ be a matrix having $n$ columns and
  $J\subseteq\{1, \dots, n\}$. We denote by $H_J$ the submatrix of $H$
  whose columns are those with index $j\in J$.
\end{notat}

\medskip\noindent Suppose we have computed $J\supseteq {I_{\error}}$
in Step~(\ref{item:stepECP_1}) of the algorithm. Consider a full
rank--parity check matrix $H$ for $C$. The vector $\error_J$ satisfies
$H_J\cdot \error_J^T=H\cdot \yv^T.$
and we want then to recover $\error_J$ by solving the linear system 
\begin{equation}\label{systECP}
H_J\cdot \uv^T=H\cdot \yv^T.
\end{equation}
Though, \textit{a priori}, the solution may not be unique. Condition
(ECP\ref{item:ECP4}) in Definition \ref{ecp} yields the following result. 
\begin{lem}\label{lemsist}
If $\ \dd(A)+\dd(C)>n$, $\dim(A)>t$ and $J=Z(M_1)$, then $|J|<\dd(C)$.
\end{lem}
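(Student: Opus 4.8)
The plan is to exhibit a single nonzero word of $M_1$ whose Hamming weight is simultaneously bounded below by $\dd(A)$ and above by $n - |J|$, and then to convert this double bound into the claim by invoking hypothesis (ECP\ref{item:ECP4}).

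First I would verify that $M_1$ is nonzero. Since $\dim(A) > t$, Proposition~\ref{dim(A)} gives $A(I_{\error}) \neq \{0\}$, and Theorem~\ref{thmecp}(1) yields the inclusion $A(I_{\error}) \subseteq M_1$. Hence $M_1 \neq \{0\}$ and we may fix a nonzero word $\av \in M_1$. Note that this is the only place where the hypothesis $\dim(A) > t$ is used, and it is essential: without a nonzero element of $M_1$ the weight argument below has nothing to act on.

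Next I would bound $\w(\av)$ from both sides. Because $M_1 \subseteq A$ and $\av \neq 0$, the word $\av$ is a nonzero codeword of $A$, so $\w(\av) \geq \dd(A)$. On the other hand, by the very definition $J = Z(M_1)$ every coordinate indexed by $J$ vanishes on all of $M_1$; in particular $a_i = 0$ for each $i \in J$, so that $\supp(\av) \subseteq \{1, \dots, n\} \setminus J$ and therefore $\w(\av) \leq n - |J|$.

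Combining the two inequalities gives $\dd(A) \leq n - |J|$, that is $|J| \leq n - \dd(A)$. Finally, hypothesis (ECP\ref{item:ECP4}) asserts $\dd(A) + \dd(C) > n$, equivalently $n - \dd(A) < \dd(C)$, whence $|J| < \dd(C)$, as desired. I do not anticipate any genuine difficulty in this argument; the single point deserving care is the nonvanishing of $M_1$, which is precisely what the dimension assumption guarantees through Proposition~\ref{dim(A)}.
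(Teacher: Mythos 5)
Your proof is correct and follows essentially the same route as the paper: both arguments produce a nonzero $\av\in M_1$ via Proposition~\ref{dim(A)} (and the inclusion $A(I_{\error})\subseteq M_1$), then sandwich its weight between $\dd(A)$ and $n-|J|$, and conclude with $\dd(A)+\dd(C)>n$. The only cosmetic difference is that the paper phrases the upper bound as $|Z(M_1)|\le |Z(\av)|=n-\w(\av)$ rather than as $\supp(\av)\subseteq\{1,\dots,n\}\setminus J$.
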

\begin{proof}
  By Proposition~\ref{dim(A)}, there exists
  $\av\in A({I_{\error}})\setminus \{0\}$. Now, since $\dd(A)+\dd(C)>n$,
  we get
\[|J|=|Z(M_1)|\le|Z(\av)|=n-\w(\av)\le n-\dd(A)< \dd(C).\]
\end{proof}
\begin{thm}\label{thm: thm standard}
  Given $\yv\in\mathbb{F}_q^n$ and $J\subseteq\{1,\dots, n\}$ with
  $t=|J|<\dd(C)$, then there exists at most one solution for
  (\ref{systECP}).
\end{thm}
\noindent
This is a well-known result of coding theory and it is easy to
prove. This theorem, together with Lemma~\ref{lemsist}, entail that if
$J$ contains the support of the error, $\error_J$ is the unique
solution to system (\ref{systECP}). Then, the second step of the
algorithm consists in finding $\error_J$ by solving system
(\ref{systECP}) and recovering $\error$ from $\error_J$ imposing
$e_i=0$ for all $i\notin J$. The entire algorithm is described in
Algorithm 3.
\begin{algorithm}\label{AlgoECP}
\caption{Error correcting pairs algorithm}
\textbf{Inputs:} $C$ linear code, $\yv\in\mathbb{F}_q^n$
as in (\ref{forme_yv}),$t=\w(\error)$, $(A, B)$ a $t$-error
correcting pair for $C$\\
\textbf{Output:} $\cv\in C$ such that $\yv=\cv+\error$ for some
$\error\in\mathbb{F}_q^n$ with $\w(\error)\le t$
\begin{algorithmic}[1]
  \State compute $M_1=\{\av\in A\mid \langle \av\ast \yv, \bv\rangle=0
  \ \ \forall \bv\in B\}$ (linear system)
\State $J\gets Z(M_1)$
\If {system (\ref{systECP}) does not have solution}
\State \Return failure
\EndIf
\State $\error_J\gets$ solution of (\ref{systECP})
\State recover $\error$ from $\error_J$
\State \Return $\cv=\yv-\error$
\end{algorithmic}
\end{algorithm}

\noindent The correctness of the algorithm is proved in \cite{P92}.   
 It is straightforward to see that the algorithm
returns a unique solution and that a sufficient condition for the
algorithm to correct $t$ errors, is the existence of an error
correcting pair with parameter $t$. This consideration, 
leads to the following result.
\begin{cor}[{\cite[Corollary 2.15]{P92}}]
If a linear code $C$ has a $t$-error correcting pair, then
\[t\le \Bigl\lfloor \frac{\dd(C)-1}{2}\Bigr\rfloor.\]
\end{cor}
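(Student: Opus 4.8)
The plan is to deduce the bound from the correctness of the error correcting pairs algorithm combined with the classical sphere--packing obstruction to unambiguous decoding. As noted just above the statement, the existence of a $t$-error correcting pair $(A,B)$ for $C$ guarantees (see \cite{P92}) that the error correcting pairs algorithm, on every input $\yv = \cv + \error$ with $\w(\error) \le t$, returns a \emph{unique} codeword, and that this codeword is exactly $\cv$. Thus a $t$-error correcting pair turns $C$ into a code that deterministically and unambiguously corrects every error pattern of weight at most $t$, and it suffices to show that such a code satisfies $t \le \lfloor (\dd(C)-1)/2\rfloor$.

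First I would record the elementary equivalence $t > \lfloor (\dd(C)-1)/2 \rfloor \iff 2t \ge \dd(C)$, so that writing $d \eqdef \dd(C)$ the claim reduces to proving $2t < d$. I argue by contradiction, assuming $2t \ge d$. Next I would manufacture a received word witnessing two equally plausible codewords: pick a minimum weight codeword $\cv_0 \in C$, so $\w(\cv_0)=d$, and partition its support as $\supp(\cv_0) = S_1 \sqcup S_2$ with $|S_1| = \lceil d/2\rceil$ and $|S_2| = \lfloor d/2 \rfloor$. The assumption $d \le 2t$ yields $\lceil d/2\rceil \le t$, hence $|S_1|,|S_2| \le t$. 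Defining $\yv \in \Fq^n$ by $y_i = (\cv_0)_i$ for $i \in S_1$ and $y_i = 0$ otherwise, one gets $\dd(\yv,\mathbf{0}) = |S_1| \le t$ and $\dd(\yv,\cv_0) = |S_2| \le t$, so the two distinct codewords $\mathbf{0}$ and $\cv_0$ both lie within Hamming distance $t$ of $\yv$.

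Finally I would extract the contradiction. Writing $\yv = \mathbf{0} + \yv$ with $\w(\yv) \le t$, correctness forces the algorithm on input $\yv$ to output $\mathbf{0}$; writing $\yv = \cv_0 + (\yv - \cv_0)$ with $\w(\yv - \cv_0) \le t$, the very same correctness forces it to output $\cv_0$ on the \emph{identical} input $\yv$. Since the algorithm is deterministic and returns a single codeword, this is impossible, so $2t < d$ and therefore $t \le \lfloor (d-1)/2\rfloor$.

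The only genuinely delicate point is the size bound $\lceil d/2\rceil \le t$ that lets both halves of the support be absorbed as admissible errors; everything else is the standard packing argument showing no unique decoder can correct more than half the minimum distance. In particular, no direct manipulation of the algebraic hypotheses (ECP\ref{item:ECP1})--(ECP\ref{item:ECP4}) is needed beyond what is already encapsulated in the algorithm's correctness, which is precisely where the factor of two entering the bound originates.
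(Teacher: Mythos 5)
Your proposal is correct and follows essentially the same route as the paper, which derives the bound from the fact (stated just before the corollary and proved in Pellikaan's work) that the ECP algorithm is a unique decoder correcting any $t$ errors. You merely make explicit the standard packing argument --- splitting the support of a minimum-weight codeword into two halves of size at most $t$ --- that the paper leaves implicit when it says ``this consideration leads to the following result.''
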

\subsection{Error correcting pairs for Reed--Solomon
  codes}\label{secECPRS}
For an arbitrary code, there is no reason that an error correcting pair
exists. 
Indeed, the existence of an ECP for a
given code relies on the existence of a pair $(A, B)$ of codes, both
having a sufficiently large dimension and satisfying
$A * B \subseteq C^{\perp}$, which is actually a very restrictive
condition.
Among the codes for which an ECP exists, there are Reed--Solomon
codes. Indeed,
given $C=\RS{q}{k}$, consider the pair $(A, B)$:
\begin{equation}\label{ECPRS}
A=\RS{q}{t+1},\quad B^{\perp}=\RS{q}{t+k}.
\end{equation}
Recall that thanks to Proposition~\ref{prop:star_prod_GRS}, we have
$
  A * C = \RS{q}{t+k}.
$

\noindent

\begin{lem}\label{keylemECP}
Given $B$ as above, it holds
\begin{equation}\label{KeyeqECP}
\dd(B^{\perp})>t \iff t\le \frac{\dd(C)-1}{2}\cdot
\end{equation}
\end{lem}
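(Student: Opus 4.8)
The plan is to compute the minimum distance of $B^\perp$ explicitly and then translate the inequality $\dd(B^\perp) > t$ into a condition on $t$ using the MDS property of Reed--Solomon codes. Recall from the setup that $B^\perp = \RS{q}{t+k}$, which is a full--support Reed--Solomon code of length $n$ and dimension $t+k$. Since we work in the full--support case with $n = q$, this code is MDS, so its minimum distance is given by the Singleton equality
\[
  \dd(B^\perp) = n - (t+k) + 1.
\]
First I would record this value, and also recall that $C = \RS{q}{k}$ is itself MDS, so $\dd(C) = \dd = n - k + 1$, equivalently $n = \dd + k - 1$.

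Next I would substitute $n = \dd + k - 1$ into the expression for $\dd(B^\perp)$ to eliminate $n$ in favour of $\dd$. This gives
\[
  \dd(B^\perp) = (\dd + k - 1) - (t+k) + 1 = \dd - t.
\]
The whole lemma then reduces to the elementary equivalence
\[
  \dd - t > t \iff \dd > 2t \iff t < \frac{\dd}{2} \iff t \le \frac{\dd-1}{2},
\]
where the last step uses that $t$ is an integer (and $\dd$ is an integer as well, so $t < \dd/2$ is the same as $2t \le \dd - 1$, i.e. $t \le (\dd-1)/2$). Reading the chain of equivalences both directions establishes \eqref{KeyeqECP}.

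I do not expect any genuine obstacle here; the proof is essentially a one--line computation once one invokes the right facts. The only point requiring a little care is making sure $B^\perp = \RS{q}{t+k}$ is indeed MDS, which needs $t + k \le n$ so that the code is a genuine Reed--Solomon code of the stated dimension with the Singleton bound met with equality; this is guaranteed in the regime where the pair $(A,B)$ is a legitimate error correcting pair (in particular $t \le (\dd-1)/2$ forces $t+k \le n$). The second subtle point is the integrality argument converting the strict inequality $t < \dd/2$ into $t \le \lfloor (\dd-1)/2 \rfloor = (\dd-1)/2$ when relevant, but since both $t$ and $\dd$ are integers this is immediate and needs no floor manipulation beyond noting $2t \le \dd - 1$.
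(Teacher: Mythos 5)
Your computation is correct and is exactly the intended argument: the paper states this lemma without proof precisely because it reduces to the MDS identity $\dd(B^\perp)=n-(t+k)+1=\dd(C)-t$ together with the integrality of $t$ and $\dd$. The only point worth tightening is your appeal to ``the regime where $(A,B)$ is a legitimate error correcting pair'' to justify $t+k\le n$, which is slightly circular since the lemma is used to characterise that regime; but the equivalence also holds when $t+k>n$ (there $B^\perp=\F_q^n$, so $\dd(B^\perp)=1$ and both sides of \eqref{KeyeqECP} fail), so no generality is lost.
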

\begin{prop}\label{PropisECP}
  The pair $(A, B)$ of (\ref{ECPRS}) is a $t$-error correcting pair
  for $C$ for any
\begin{equation}\label{halfmindist}
t\le \frac{\dd(C)-1}{2}\cdot
\end{equation}
\end{prop}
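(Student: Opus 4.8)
The plan is to verify the four defining conditions (ECP1)--(ECP4) of Definition~\ref{ecp} for the pair $(A,B)$ given in~\eqref{ECPRS}, namely $A = \RS{q}{t+1}$ and $B^\perp = \RS{q}{t+k}$, under the hypothesis $t \le \frac{\dd(C)-1}{2}$. Most of these will follow directly from the parameters of Reed--Solomon codes and from Lemma~\ref{keylemECP}, which already encapsulates the delicate equivalence; the only genuine work is to assemble them correctly. First I would handle (ECP\ref{item:ECP1}), the star-product condition $A * B \subseteq C^\perp$. By Remark~\ref{rem: swich} (which is just the adjunction property of Remark~\ref{propstarprod}), this is equivalent to $A * C \subseteq B^\perp$. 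But the excerpt already records that $A * C = \RS{q}{t+k}$ by Proposition~\ref{prop:star_prod_GRS}, and by definition $B^\perp = \RS{q}{t+k}$, so in fact $A * C = B^\perp$ and the inclusion holds.

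Next I would check the two dimension/distance conditions on $A$. For (ECP\ref{item:ECP2}) we need $\dim A > t$: since $A = \RS{q}{t+1}$ has dimension $t+1 > t$, this is immediate. For (ECP\ref{item:ECP4}) we need $\dd(A) + \dd(C) > n$. As full-support Reed--Solomon codes are MDS, $\dd(A) = n - (t+1) + 1 = n - t$ and $\dd(C) = n - k + 1$, so $\dd(A) + \dd(C) = 2n - t - k + 1$. The inequality $2n - t - k + 1 > n$ is equivalent to $t < n - k + 1 = \dd(C)$, which certainly holds since $t \le \frac{\dd(C)-1}{2} < \dd(C)$. This leaves (ECP\ref{item:ECP3}), the condition $\dd(B^\perp) > t$; but this is exactly the content delivered by Lemma~\ref{keylemECP}, whose equivalence $\dd(B^\perp) > t \iff t \le \frac{\dd(C)-1}{2}$ turns the hypothesis~\eqref{halfmindist} directly into the desired inequality.

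Having verified all four conditions, the proposition follows by Definition~\ref{ecp}. The main obstacle is not any single step but rather keeping the bookkeeping of the three codes and their dimensions straight while correctly invoking the MDS property and the duality $\RS{q}{k}^\perp = \RS{q}{n-k}$ that underlies the relationship between $B$ and $B^\perp$. In particular one should be careful that (ECP\ref{item:ECP3}), the most substantive condition, is precisely where the threshold $t \le \frac{\dd(C)-1}{2}$ becomes sharp and is the reason the ECP framework cannot exceed half the minimum distance --- this is the gap that the power error locating pairs algorithm in Section~\ref{secPECPRS} is designed to overcome. Since Lemma~\ref{keylemECP} isolates this equivalence, the remaining verifications are routine, and the proof is short.
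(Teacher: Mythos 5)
Your proof is correct and follows essentially the same route as the paper's: both verify (ECP1)--(ECP4) directly, using Proposition~\ref{prop:star_prod_GRS} with the adjunction property for (ECP1), the dimension $t+1$ of $A$ for (ECP2), Lemma~\ref{keylemECP} for (ECP3), and the equivalence $\dd(A)+\dd(C)>n \iff t<\dd(C)$ for (ECP4). Your version merely spells out the MDS computation for (ECP4) that the paper leaves as ``easy to see,'' and omits the converse direction that the paper's proof also sketches but which the statement itself does not require.
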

\begin{proof}
  We have to prove that (\ref{halfmindist}) is a necessary and
  sufficient condition for (ECP\ref{item:ECP1}--\ref{item:ECP4}) in
  Definition \ref{ecp} to hold. First of all, by Lemma~\ref{keylemECP}
  we have (ECP\ref{item:ECP3}). Moreover $\dim(A)=t+1>t$ by definition of
  $A$ and this gives (ECP\ref{item:ECP2}). By Proposition~\ref{prop:star_prod_GRS}, as seen above, the codes $A, B , C$
  verify $A\ast C=B^{\perp}$, then by Remark~\ref{propstarprod} we
  obtain $A\ast B\subseteq C^{\perp}$. Finally, it is easy to see that
  $\dd(A)+\dd(C)>n\iff t<\dd(C)$. Hence if
  $t\le\Bigl\lfloor \frac{\dd(C)-1}{2}\Bigr\rfloor$,
  (ECP\ref{item:ECP1}--\ref{item:ECP4}) hold and conversely.
\end{proof}

\begin{rem}\label{remELP}
  In \S~\ref{secPECPRS}, we are going to work with structures which
  are slightly different from error correcting pairs, that is, we will
  still require (ECP\ref{item:ECP1}, \ref{item:ECP2}) and
  (ECP\ref{item:ECP4}) to hold together with other
  conditions. 
  Note that, given $A$ and $B$ as in
  (\ref{ECPRS}) Conditions (ECP\ref{item:ECP1}, \ref{item:ECP2}) and
  (ECP\ref{item:ECP4})  hold if and only if
  $t<\dd(C)$. 
\end{rem}

\subsection{ECP and Welch--Berlekamp key equations}\label{subsec:ECPRS}
The example of Reed--Solomon also permits to understand the rationale
behind EPC's in light of Welch--Berlekamp algorithm.  Indeed, we now
show that the choice of $M_1$ we made in the ECP algorithm, if one looks
at the key equations of Welch--Berlekamp algorithm, appears to be really
natural. Let us consider $C=\RS{q}{k}$ and the pair $(A, B)$ we
defined in \S~\ref{secECPRS}. We can write (\ref{keyeqBW}) for any
$i\in\{1,\dots, n\}$ using the star product in this way
\[(\Lambda(x_1),\dots, \Lambda(x_n))\ast \yv=(N(x_1),\dots, N(x_n)).\]
From that, we can deduce
\begin{itemize}
\item $(N(x_1),\dots, N(x_n))\in \RS{q}{t+k}=B^{\perp}$;
\item
  $(\Lambda(x_1), \dots, \Lambda(x_n))\in
  \RS{q}{t+1}({I_{\error}})=A({I_{\error}})$;
\item Moreover
  $(\Lambda(x_1), \dots, \Lambda(x_n))\in \underbrace{\{\av\in A\mid
    \langle a\ast \yv, b\rangle=0\ \ \forall \bv\in B\}}_{M_1}$.
\end{itemize}
In other words, the vector $(\Lambda(x_1), \dots, \Lambda(x_n))$
belongs to the space $A({I_{\error}})$ we are looking for in the ECP
algorithm. Moreover it fulfills a property which characterises a space
$M_1\supseteq A({I_{\error}})$, that is exactly the space we define in
the ECP algorithm and that turns to be equal to $A({I_{\error}})$
under certain conditions.

 \section{Power error locating pairs algorithm}\label{secPECPRS}
We now present the Power Error Locating Pairs (PELP) algorithm. As for
the error correcting pairs algorithm, we first give a generic
description of the algorithm and later some examples of its
application. In order to generalise the ECP algorithm to correct more
errors, we introduce a new parameter $\ell$ we call \textit{power} and
define a slightly different structure from error correcting pairs. As
in the previous paragraphs, we first describe that structure and the
algorithm for $\ell=2$ and then explain how to generalise it.
\subsection{The case $\ell=2$}
 In Pellikaan's paper, a structure called {\em error locating
  pairs} is already defined. It is a pair of codes $(A, B)$ which
satisfy (ECP\ref{item:pelp1}, \ref{item:pelp2}) and
(ECP\ref{item:pelp4}). In particular it is shown that, without changing
anything in the algorithm, with such a structure it is possible to
correct errors if the support of
the error vector ${I_{\error}}$ is an \textit{independent $t$-set of
  error position} with respect to the code $B$ (see \cite{P92}).

In the present article, in order to correct beyond half the
designed distance, we do not consider particular error supports, but we
rather choose to work with a more particular structure than error
locating pairs and change the first step of the algorithm.

\begin{defn}[$2$--Power error locating pairs]\label{def:PELP2}
  Given a linear code $C\subseteq\mathbb{F}_q^n$, a pair of linear
  codes $(A, B)$ with $A, B\subseteq\mathbb{F}_q^n$ is a
  $2$--\textit{power} $t$--\textit{error locating pair} for $C$ if
\begin{enumerate}[($2$--PELP1)]
\item\label{item:pelp1} $A\ast B\subseteq C^\perp$;
\item\label{item:pelp2} $\dim(A)>t$;
\item\label{item:pelp3} $\dd(A^{\perp})>t$;
\item\label{item:pelp4} $\dd(A)+\dd(C)>n$;
\item\label{item:pelp5} $\dim(B)+\dim(B^{\perp}\ast C)^{\perp}\ge t$.
\end{enumerate}
\end{defn}

Compared to the definition of error correcting pairs, we removed
(ECP~\ref{item:ECP3}) which is too restrictive to correct errors
beyond half the minimum distance. Instead, in the same spirit as the
power decoding, we look for a necessary condition for the algorithm to
succeed. In this context, under Condition~($2$--PELP\ref{item:pelp3}),
Condition~($2$--PELP\ref{item:pelp5}) provides this necessary condition
together with the key tool for the analysis of the decoding radius of
our algorithm.
\begin{rem}\label{rem_propB}
  In the transition between ECP algorithm and $2$--PELP algorithm, it is
  very important to get rid of the property $\dd(B^{\perp})>t$. Indeed,
  since we want $A\ast C\subseteq B^{\perp}$, if we had
  $\dd(B^{\perp})>t$, then, assuming that $A * C$ is non degenerate (see
  \S~\ref{sss:Kneser})
  we would get
  \[t+\dim(C)\le\dim(A)+\dim(C)-1\le \dim(A\ast C)\le
    \dim(B^{\perp})\le n-t, \] where the second inequality is
  due to Corollary~\ref{cor:CauchyDavenport}.
  This entails that
  $t\le \frac{n-\dim(C)}{2}$, which does not represent an improvement
  of the decoding radius in every situation (see
  \S\ref{sec:PELP_RS}).
\end{rem}

\medskip

Let us consider a code $C$, a word $\yv\in\mathbb{F}_q^n$ such that
$\yv=\cv +\error$ with $\cv\in C$ and $t=\w(\error)$ and let $(A, B)$
be a $2$--power $t$--error locating pair for $C$.

\begin{defn}\label{def:M-M_1}
As in the ECP algorithm, let
  $M_1=\{\av\in A\mid \langle \av\ast\yv, \bv\rangle=0\ \ \forall
  \bv\in B\}$. We then define the spaces
\begin{align*}
  M_2 &\eqdef\{\av\in A\mid \langle \av\ast\yv^2, \vv\rangle=0\ \ \forall \vv\in (B^{\perp}\ast C)^{\perp}\}
\end{align*}
and \[M\eqdef M_1\cap\ M_2.\]
\end{defn}

\begin{notat}
  As in the description of the power decoding algorithm, since we work
  with $\yv$ and $\yv^2$, we indicate with $\error^{(1)}$ and
  $\error^{(2)}$ the vectors such that respectively
  $\yv=\cv+\error^{(1)}$ and $\yv^2=\cv^2+\error^{(2)}$.
\end{notat}

\begin{algorithm}
  \caption{$2$--Power error locating pairs algorithm}
  \label{algoPELP2}
\textbf{Inputs:} $C$ linear code, $\yv\in\mathbb{F}_q^n$ as
in (\ref{forme_yv}), $t=\w(\error)$ and  $(A, B)$ a $2$--power $t$--error
locating pair for $C$\\
\textbf{Output:} $\cv\in C$ such that $\yv=\cv+\error^{(1)}$ for some
$\error^{(1)}\in\mathbb{F}_q^n$ with $\w(\error^{(1)})\le t$
\begin{algorithmic}[1]
\State compute $M=M_1\cap M_2$ by solving a linear system
\State $J\gets Z(M)$
\If {system (\ref{systECP}) does not have any nonzero solution}
\State \Return failure
\EndIf 
\State $\error^{(1)}_J\gets$ nonzero solution of (\ref{systECP})
\State recover $\error^{(1)}$ from $\error^{(1)}_J$
\State \Return $\cv=\yv-\error^{(1)}$
\end{algorithmic}
\end{algorithm}

\medskip\noindent As shown in Algorithm~\ref{algoPELP2}, the only
change from the error correcting pair algorithm consists in computing
a new set $M$, which is smaller than the set $M_1$ considered in the
basic algorithm. The reason why we do so, is that since we are now
working with a $2$--power error locating pair, we are no longer asking
for $\dd(B^{\perp})>t$. Hence, without this property the equality
\begin{equation}\nonumber
A({I_{\error}})=M_1
\end{equation} 
is not entailed and only the inclusion $A({I_{\error}})\subseteq M_1$
still holds.  That is, $M_1$ could be too big with respect to
$A({I_{\error}})$ and the algorithm would
fail. Indeed, if there was $\av\in M_1\setminus A({I_{\error}})$,
that means
$\av_{I_{\error}}\ne 0$, we would have
$J=Z(M_1)\nsupseteq {I_{\error}}$.
\begin{prop}\label{conten}
The set $M$ fulfills $A({I_{\error}})\subseteq M\subseteq M_1\subseteq A$.
\end{prop}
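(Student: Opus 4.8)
The plan is to establish the chain by treating the inclusions separately, the two rightmost ones being immediate and the leftmost one splitting into an $M_1$-part and an $M_2$-part. The inclusions $M = M_1 \cap M_2 \subseteq M_1 \subseteq A$ hold by the very definitions of intersection and of $M_1$, so everything reduces to proving $A(I_{\error}) \subseteq M_1$ and $A(I_{\error}) \subseteq M_2$. The first of these is nothing but part (1) of Theorem~\ref{thmecp}: since ($2$--PELP\ref{item:pelp1}) gives $A \ast B \subseteq C^\perp$, that theorem applies verbatim and yields $A(I_{\error}) \subseteq M_1$. Thus the only genuinely new content is the inclusion into $M_2$.

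To prove $A(I_{\error}) \subseteq M_2$, I would fix $\av \in A(I_{\error})$ and aim to show $\langle \av \ast \yv^2, \vv \rangle = 0$ for every $\vv \in (B^{\perp} \ast C)^{\perp}$. The first step is to replace $\yv^2$ by $\cv^2$. Writing $\yv^2 = \cv^2 + \error^{(2)}$, I would check that $\av \ast \error^{(2)} = \mathbf{0}$ coordinatewise: on an index $i \in I_{\error}$ the factor $a_i$ vanishes because $\av \in A(I_{\error})$, while on an index $i \notin I_{\error}$ the factor $e^{(2)}_i$ vanishes thanks to Proposition~\ref{remsupp}, which guarantees $\supp(\error^{(2)}) \subseteq I_{\error}$. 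Hence $\av \ast \yv^2 = \av \ast \cv^2$, and it suffices to show $\langle \av \ast \cv^2, \vv \rangle = 0$.

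The crux is then to recognise $\av \ast \cv^2$ as an element of $B^{\perp} \ast C$, which is exactly orthogonal to $\vv$. I would regroup $\av \ast \cv^2 = (\av \ast \cv) \ast \cv$ and use that ($2$--PELP\ref{item:pelp1}), combined with the adjunction property of Remark~\ref{propstarprod}, gives the equivalent condition $A \ast C \subseteq B^{\perp}$; hence $\av \ast \cv \in B^{\perp}$. Writing $\mathbf{p} = \av \ast \cv \in B^{\perp}$, the vector $\mathbf{p} \ast \cv$ lies in $B^{\perp} \ast C$, so by definition of $\vv \in (B^{\perp} \ast C)^{\perp}$ we obtain $\langle \av \ast \cv^2, \vv \rangle = \langle \mathbf{p} \ast \cv, \vv \rangle = 0$. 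This closes the argument, gives $\av \in M_2$, and therefore $\av \in M_1 \cap M_2 = M$.

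The routine bookkeeping (the support arguments showing $\av \ast \error = \mathbf{0}$ and $\av \ast \error^{(2)} = \mathbf{0}$) is straightforward; the step I expect to require the most care is the regrouping $\av \ast \cv^2 = (\av \ast \cv) \ast \cv$ together with the correct invocation of the duality dictionary, since the whole point of defining $M_2$ with the companion space $(B^{\perp} \ast C)^{\perp}$ is precisely that it plays for $\yv^2$ the role that $B$ plays for $\yv$ in the ordinary ECP algorithm. Making that analogy explicit is what renders the $M_2$ inclusion transparent.
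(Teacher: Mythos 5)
Your proof is correct and follows essentially the same route as the paper's: the inclusion $A(I_{\error})\subseteq M_1$ is quoted from Theorem~\ref{thmecp}(1), and the inclusion into $M_2$ is obtained by splitting $\yv^2=\cv^2+\error^{(2)}$, killing the error term via the complementary supports (Proposition~\ref{remsupp}), and placing $\av\ast\cv^2$ in $B^{\perp}\ast C$ through the chain $A\ast B\subseteq C^{\perp}\iff A\ast C\subseteq B^{\perp}\implies A\ast C^2\subseteq B^{\perp}\ast C$. Your write-up merely makes the coordinatewise support argument and the regrouping $(\av\ast\cv)\ast\cv$ more explicit than the paper does.
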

\begin{proof}
  We have to prove that $A({I_{\error}})\subseteq M$, that is
  $A({I_{\error}})\subseteq M_1, M_2$. First, we already know that
  $A({I_{\error}})\subseteq M_1$ by Theorem~\ref{thmecp}(1). About
  $M_2$, we adapt the proof of Theorem~\ref{thmecp}(1). First, note
  that we have
\begin{equation}\label{extPowstar}
A\ast B\subseteq C^{\perp}\iff A\ast C\subseteq B^{\perp}\implies A\ast C^2\subseteq B^{\perp}\ast C.
\end{equation}
Given $\av\in A({I_{\error}})$, for any
$\vv\in(B^{\perp}\ast C)^{\perp}$ we obtain
\begin{eqnarray}
\langle \av\ast \yv^2, \vv\rangle &= &\langle \av\ast \cv^2, \vv\rangle+\langle \av\ast \error^{(2)}, \vv\rangle \label{premPow}\\
						  &= & 0.\label{troisiPow}
\end{eqnarray}  
In (\ref{premPow}) we used the bilinearity, while (\ref{troisiPow})
holds because of (\ref{extPowstar}) and the fact that the supports of
$\error^{(2)}$ and $\av$ are complementary (see
Proposition~\ref{remsupp}).
\end{proof}

\medskip
\noindent
Thanks to this proposition, one can see that if we work with a $2$--PELP,
there are more chances to have $A({I_{\error}})=M$ than
$A({I_{\error}})=M_1$, that is why in $2$--PELP algorithm, we look for $M$
instead of $M_1$.  Furthermore, we get the following result.
\begin{thm}\label{corrPELP}
Under Assumption~\ref{as:a_solution_exists}, Algorithm~\ref{algoPELP2}
returns $\cv$ if and only if $A({I_{\error}}) = M$.
\end{thm}
\begin{proof}
  Suppose the algorithm returned $\cv$. This entails that we computed
  a set $J = Z(M)$ which contains $I_{\error}$ and hence $M =
  M(I_{\error})$. Therefore, from Proposition~\ref{conten}, we get
  $M = A(I_{{\error}})$.  Conversely, if $M = A(I_{\error})$, then
  $J = Z(M) \supseteq I_{\error}$. Next, using Conditions
  ($2$--PELP\ref{item:pelp2}) and ($2$--PELP\ref{item:pelp4}), one can prove a result
  very similar to Lemma~\ref{lemsist} (replacing $M_1$ by $M$)
  asserting that the linear system
  (\ref{systECP}) has a unique solution $\error$. Thus, the algorithm
  returns $\cv$.
\end{proof}

\medskip
\noindent
We now show how the properties ($2$--PELP\ref{item:pelp3}) and
($2$--PELP\ref{item:pelp5}) are involved in $2$--PELP algorithm. To do so, we want
to focus on necessary conditions for the algorithm to return $\cv$. By Theorem
\ref{corrPELP}, this is equivalent to look for a necessary condition
to have $A({I_{\error}})=M$, which is the point of the following statement
and explains the rationale behind Condition ($2$--PELP\ref{item:pelp5}).
\begin{thm}\label{thm:nec_cond}
If $A({I_{\error}})=M$, then $\dim(B)+\dim(B^{\perp}\ast C)^{\perp}\ge t$.
\end{thm}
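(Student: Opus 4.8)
The plan is to prove the contrapositive-free statement by a pure dimension count. The hypothesis $A(I_{\error}) = M$ equates two subspaces of $A$: one, $M = M_1 \cap M_2$, is cut out of $A$ by a controlled number of linear conditions (whence a \emph{lower} bound on $\dim M$); the other, $A(I_{\error})$, has a dimension that is pinned down \emph{exactly} by Condition~($2$--PELP\ref{item:pelp3}). Comparing the two yields the inequality.

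First I would bound $\dim M$ from below. By definition $M_1 = \{\av \in A \mid \langle \av \ast \yv, \bv\rangle = 0 \ \forall \bv \in B\}$ is cut out of $A$ by the conditions $\langle \av \ast \yv, \bv\rangle = 0$ as $\bv$ ranges over a basis of $B$, hence by at most $\dim B$ linear equations; likewise $M_2 = \{\av \in A \mid \langle \av \ast \yv^2, \vv\rangle = 0 \ \forall \vv \in (B^{\perp}\ast C)^{\perp}\}$ is cut out of $A$ by at most $\dim(B^{\perp}\ast C)^{\perp}$ linear equations. Since the codimension of an intersection is at most the sum of the codimensions, this gives
\[
  \dim M \;\ge\; \dim A - \dim B - \dim(B^{\perp}\ast C)^{\perp}.
\]

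Second I would compute $\dim A(I_{\error})$ exactly, and this is where Condition~($2$--PELP\ref{item:pelp3}), namely $\dd(A^{\perp}) > t$, enters. Consider the projection $\pi \colon A \to \F_q^{t}$, $\av \mapsto \av_{I_{\error}}$, whose kernel is precisely $A(I_{\error})$. Recall that $\dd(A^{\perp})$ equals the smallest number of linearly dependent columns of a parity check matrix of $A^{\perp}$, i.e. of a generator matrix of $A$; thus $\dd(A^{\perp}) > t = |I_{\error}|$ means that every $t$ columns of a generator matrix of $A$ are linearly independent. Consequently $\pi$ is surjective, the image $A_{I_{\error}}$ has dimension $t$, and
\[
  \dim A(I_{\error}) = \dim A - t.
\]
Combining this with the hypothesis $A(I_{\error}) = M$ and the bound above gives $\dim A - t = \dim A(I_{\error}) = \dim M \ge \dim A - \dim B - \dim(B^{\perp}\ast C)^{\perp}$, and rearranging yields $\dim B + \dim(B^{\perp}\ast C)^{\perp} \ge t$, as desired. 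This also exhibits the rationale behind Condition~($2$--PELP\ref{item:pelp5}): it is exactly the necessary condition extracted here.

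The main obstacle is the \emph{upper} bound on $\dim A(I_{\error})$. The trivial inequality $\dim A(I_{\error}) \ge \dim A - t$ goes the wrong way and is useless here; what the argument genuinely requires is the reverse inequality $\dim A(I_{\error}) \le \dim A - t$, equivalently the surjectivity of $\pi$, equivalently that the $t$ error coordinates form a set of \emph{information positions} for $A$. It is precisely this point that forces the invocation of $\dd(A^{\perp}) > t$; the dimension count cutting out $M$ is by comparison entirely routine.
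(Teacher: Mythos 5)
Your proof is correct, but it takes a genuinely different and more elementary route than the paper's. The paper first reduces the hypothesis $A({I_{\error}})=M$ to $M_{I_{\error}}=\{0\}$ (Lemma~\ref{cond_eq_M}), then characterises $M_{I_{\error}}$ as the intersection $(\error^{(1)}\ast B)_{I_{\error}}^{\perp}\cap{\left(\error^{(2)}\ast(B^{\perp}\ast C)^{\perp}\right)}_{I_{\error}}^{\perp}$ inside $\F_q^{t}$ (Theorem~\ref{scritM_I}, combined with Lemma~\ref{lemma:equivA_I=0} --- which is exactly your surjectivity-of-$\pi$ observation --- to discard the factor $A_{I_{\error}}$), and finally extracts the inequality from the fact that two subspaces of $\F_q^{t}$ meeting trivially have dimensions summing to at most $t$. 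You instead never leave the ambient space $A$: you bound $\codim_A M$ above by the number of defining linear equations of $M_1$ and $M_2$, pin down $\dim A({I_{\error}})=\dim A - t$ via ($2$--PELP\ref{item:pelp3}), and compare. Both arguments invoke ($2$--PELP\ref{item:pelp3}) for the same purpose (the $t$ error positions are information positions for $A$), but yours dispenses entirely with Theorem~\ref{scritM_I} and hence with the adjunction and support manipulations that rely on ($2$--PELP\ref{item:pelp1}); indeed your proof never uses $A\ast B\subseteq C^{\perp}$. What the paper's heavier route buys is the explicit description of $M_{I_{\error}}$, which is reused verbatim for the general case $\ell\ge 2$ and yields the finer intermediate condition $\dim(\error^{(1)}\ast B)_{I_{\error}}+\dim{\left(\error^{(2)}\ast(B^{\perp}\ast C)^{\perp}\right)}_{I_{\error}}\ge t$, of which the stated bound is a relaxation; your count, on the other hand, is shorter, needs fewer hypotheses, and generalises to arbitrary $\ell$ just as directly.
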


\medskip
\noindent Before proving Theorem~\ref{thm:nec_cond}, we need to
present some other results first.
\begin{lem}\label{cond_eq_M}
The following statements are equivalent:
\begin{enumerate}[(i)]
\item $M=A({I_{\error}})$;
\item $M({I_{\error}})=M$;
\item $M_{I_{\error}}=\{0\}$.
\end{enumerate}
\end{lem}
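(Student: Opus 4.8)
The plan is to prove Lemma~\ref{cond_eq_M} by establishing the cycle of implications $(i)\Rightarrow(ii)\Rightarrow(iii)\Rightarrow(i)$, relying on the inclusions $A(I_{\error})\subseteq M\subseteq A$ from Proposition~\ref{conten} and on the basic relationship between shortening, puncturing, and the complement-of-support operators from Definition~\ref{def:short_punct}. The guiding observation is that the statements are all ways of saying that $M$ has no ``nonzero behaviour'' on the error positions $I_{\error}$: statement $(ii)$ says shortening $M$ at $I_{\error}$ changes nothing, and $(iii)$ says puncturing $M$ to $I_{\error}$ yields only zero. These are tautologically equivalent once one unwinds the definitions, since for any code $H$ one has $H(I_{\error})=H$ if and only if every element of $H$ vanishes on $I_{\error}$, i.e. $H_{I_{\error}}=\{0\}$. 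So the real content lives in connecting these to $(i)$.

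For $(i)\Rightarrow(ii)$: if $M=A(I_{\error})$, then every element of $M$ already vanishes on $I_{\error}$ by definition of shortening, so $M(I_{\error})=M$ immediately. For $(ii)\Leftrightarrow(iii)$: this is the tautology mentioned above. The direction $M(I_{\error})=M$ means every $\av\in M$ satisfies $\av_{I_{\error}}=\textbf{0}$, which is precisely $M_{I_{\error}}=\{0\}$; conversely if $M_{I_{\error}}=\{0\}$ then each $\av\in M$ has $\av_{I_{\error}}=\textbf{0}$, so $M\subseteq M(I_{\error})$, and the reverse inclusion $M(I_{\error})\subseteq M$ holds trivially.

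The one implication with genuine content is $(iii)\Rightarrow(i)$, where I would use Proposition~\ref{conten}. Assume $M_{I_{\error}}=\{0\}$, i.e. every element of $M$ vanishes on $I_{\error}$. Then by definition $M\subseteq A(I_{\error})$, since $M\subseteq A$ and elements of $M$ are supported away from $I_{\error}$. Combined with the inclusion $A(I_{\error})\subseteq M$ already furnished by Proposition~\ref{conten}, this yields $M=A(I_{\error})$, which is $(i)$. I expect this to be the main (though still mild) obstacle, in that it is the only step invoking an external result rather than pure definition-chasing; the rest is bookkeeping with the shortening/puncturing notation. One subtlety worth stating carefully is that the shortening operation $A(J)$ in this paper keeps the prescribed zero positions (per the Remark following Definition~\ref{def:short_punct}), so that $A(I_{\error})$ lives in $\mathbb{F}_q^n$ and the inclusions make sense verbatim; I would make sure the argument respects this convention rather than the classical one.
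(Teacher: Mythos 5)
Your proof is correct and follows essentially the same route as the paper's: both arguments reduce everything to the inclusions $A(I_{\error})\subseteq M\subseteq A$ from Proposition~\ref{conten} plus definition-unwinding of shortening and puncturing (the paper phrases the link between $(i)$ and $(ii)$ via the identity $A(I_{\error})=M(I_{\error})$ rather than your cycle of implications, but the content is identical). Your closing remark about the paper's non-standard shortening convention is a sensible precaution but changes nothing substantive.
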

\begin{proof}
  First, notice that $A({I_{\error}})=M({I_{\error}})$ by
  Proposition~\ref{conten}. Hence, \textit{(i)} and \textit{(ii)} are
  equivalent. We now prove \textit{(ii)}$\iff$\textit{(iii)}. It is
  easy to see that, if $M({I_{\error}})=M$, the projection on
  ${I_{\error}}$ of every element of $M$, is the zero vector and
  conversely.
\end{proof}

\medskip\noindent Thanks to this lemma, if we find a necessary
condition for $M_{I_{\error}}=\{0\}$, then we find a necessary
condition for $M=A({I_{\error}})$ and conversely. For this reason, we
now study the object~$M_{I_{\error}}$.
\begin{thm}\label{scritM_I}
  It holds
  $M_{I_{\error}}=A_{I_{\error}}\cap (\error^{(1)}\ast
  B)_{I_{\error}}^{\perp}\cap {\left(\error^{(2)}\ast (B^{\perp}\ast
  C)^{\perp}\right)}_{I_{\error}}^{\perp}$.
\end{thm}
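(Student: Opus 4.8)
The plan is to compute the two projections $(M_1)_{I_{\error}}$ and $(M_2)_{I_{\error}}$ separately, to describe each $M_j$ as the preimage of a subspace of $A_{I_{\error}}$ under the projection $\pi\colon A\to A_{I_{\error}}$, $\av\mapsto\av_{I_{\error}}$, and then to intersect. The guiding observation is that, once the contributions of $\cv$ are removed, the defining conditions of $M_1$ and $M_2$ constrain $\av$ only through $\av_{I_{\error}}$, so the projection can be read off directly.

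First I would analyze $M_1$. For $\av\in A$ and $\bv\in B$, writing $\yv=\cv+\error^{(1)}$ and using bilinearity together with the adjunction property~\eqref{eq:adjunction} gives
\[
  \langle \av\ast\yv,\bv\rangle=\langle \av\ast\bv,\cv\rangle+\langle \av\ast\bv,\error^{(1)}\rangle.
\]
By Condition~($2$--PELP\ref{item:pelp1}) we have $\av\ast\bv\in A\ast B\subseteq C^\perp$, so the first term vanishes, leaving $\langle \av\ast\yv,\bv\rangle=\langle \av\ast\error^{(1)},\bv\rangle$. Since $\error^{(1)}$ is supported on $I_{\error}$, this last quantity equals $\langle \av_{I_{\error}},(\error^{(1)}\ast\bv)_{I_{\error}}\rangle$, an inner product in $\F_q^{|I_{\error}|}$. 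Hence, for $\av\in A$, membership in $M_1$ is equivalent to $\av_{I_{\error}}\in(\error^{(1)}\ast B)_{I_{\error}}^{\perp}$, the orthogonal being taken inside $\F_q^{|I_{\error}|}$.

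The same computation applies to $M_2$. For $\av\in A$ and $\vv\in(B^\perp\ast C)^\perp$, splitting $\yv^2=\cv^2+\error^{(2)}$ yields $\langle \av\ast\yv^2,\vv\rangle=\langle \av\ast\cv^2,\vv\rangle+\langle \av\ast\error^{(2)},\vv\rangle$. Here I would invoke the inclusion $A\ast C^2\subseteq B^\perp\ast C$ established in~\eqref{extPowstar}, which places $\av\ast\cv^2$ in $B^\perp\ast C$ and thus kills the first term against $\vv\in(B^\perp\ast C)^\perp$. By Proposition~\ref{remsupp} we have $\supp(\error^{(2)})\subseteq I_{\error}$, so the remaining term localizes exactly as before, and $\av\in M_2$ becomes equivalent to $\av_{I_{\error}}\in\left(\error^{(2)}\ast(B^\perp\ast C)^\perp\right)_{I_{\error}}^{\perp}$.

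Finally I would assemble the two descriptions. Both conditions depend on $\av$ only through $\av_{I_{\error}}$, so $M=M_1\cap M_2$ is precisely $\pi^{-1}(T)$, where
\[
  T=A_{I_{\error}}\cap(\error^{(1)}\ast B)_{I_{\error}}^{\perp}\cap\left(\error^{(2)}\ast(B^\perp\ast C)^\perp\right)_{I_{\error}}^{\perp}\subseteq A_{I_{\error}}.
\]
As $\pi$ is surjective onto $A_{I_{\error}}$ by definition of the puncturing and $T\subseteq A_{I_{\error}}$, one has $\pi\bigl(\pi^{-1}(T)\bigr)=T$, that is $M_{I_{\error}}=T$, which is the claim. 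The point requiring the most care is the repeated localization step: one must verify that, because $\error^{(1)}$ and $\error^{(2)}$ are both supported in $I_{\error}$, each global inner product collapses to one over the coordinates in $I_{\error}$, so that the membership conditions factor through $\pi$. This is exactly what legitimizes the preimage argument and turns the intersection of two full-length conditions into a clean intersection of subspaces of $\F_q^{|I_{\error}|}$.
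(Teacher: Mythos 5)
Your proposal is correct and follows essentially the same route as the paper: characterise membership in $M_1$ and in $M_2$ by splitting $\yv=\cv+\error^{(1)}$ and $\yv^2=\cv^2+\error^{(2)}$, kill the codeword terms using $A\ast B\subseteq C^\perp$ and $A\ast C^2\subseteq B^\perp\ast C$, localise the remaining terms to $I_{\error}$ via the supports of $\error^{(1)}$ and $\error^{(2)}$, and intersect. Your explicit preimage argument at the end merely spells out what the paper dismisses with ``the result comes now easily,'' and it is a valid way to do so.
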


\begin{rem}
  Note that the notation $(\error^{(1)}\ast B)_{I_{\error}}^{\perp}$
  and
  ${\left(\error^{(2)}\ast (B^{\perp}\ast
      C)^{\perp}\right)}_{I_{\error}}^{\perp}$ could be considered as
  ambiguous since the operation of puncturing and that consisting in
  taking the dual code do not commute. Actually, in this situation,
  there is no ambiguity since the supports of $\error^{(1)}\ast B$ and
  $\error^{(2)}\ast (B^{\perp}\ast C)^\perp$ are contained in
  $I_{\error}$.
\end{rem}

\begin{proof}[Proof of Theorem~\ref{scritM_I}]
  Let us characterise the elements of $A$ that
  are in $M_1$ and $M_2$. Let $\av\in A$,
\begin{eqnarray*}
  \av\in M_1 & \iff & \langle \av\ast \yv, \bv\rangle=0\quad \forall \bv\in B\\
             & \iff & \langle \av, \error^{(1)}\ast \bv\rangle=0\quad \forall \bv\in B\\
             & \iff & \av_{I_{\error}}\in (\error^{(1)}\ast B)_{I_{\error}}^{\perp}.
\end{eqnarray*}
For the second equivalence we used the property
$A\ast B\subseteq C^{\perp}$ which is equivalent to
$A\ast C\subseteq B^{\perp}$ (see Remark~\ref{rem: swich}) while in
the third, we used $\supp(\error^{(1)})={I_{\error}}$. In the same way, it is
possible to prove that given $\av\in A$,
\begin{equation}\nonumber
  \av\in M_2\iff \av_{I_{\error}}\in (\error^{(2)}\ast (B^{\perp}\ast C)^{\perp})_{I_{\error}}^{\perp}.
\end{equation}
The result comes now easily.
\end{proof}

\medskip\noindent
It is actually possible to simplify the form of $M_{I_{\error}}$.
\begin{lem}\label{lemma:equivA_I=0}
  Let $A\subseteq \mathbb{F}_q^n$ be a linear code and
  $t\in\mathbb{N}$. The following facts are equivalent:
\begin{itemize}
\item $\dd(A^\perp)>t$;
\item $A_{J}=\mathbb{F}_q^t$ for any ${J}\subseteq\{1,\dots, n\}$ with
  $|{J}|=t$.
\end{itemize}
\end{lem}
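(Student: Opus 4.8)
The plan is to reduce both conditions to a single statement about low--weight codewords of $A^\perp$, via the following observation: for a fixed $J \subseteq \{1,\dots,n\}$ with $|J| = t$, restriction to the coordinates in $J$ induces a linear isomorphism
\[
  \{\mathbf{v} \in A^\perp \mid \supp(\mathbf{v}) \subseteq J\}
  \;\xrightarrow{\ \sim\ }\;
  (A_J)^\perp,
  \qquad \mathbf{v} \longmapsto \mathbf{v}_J,
\]
where the dual on the right is taken in $\F_q^t$. First I would justify this. If $\supp(\mathbf{v}) \subseteq J$, then for every $\av \in A$ the zero coordinates of $\mathbf{v}$ contribute nothing, so $\langle \mathbf{v}, \av\rangle = \langle \mathbf{v}_J, \av_J\rangle$; hence $\mathbf{v} \in A^\perp$ is equivalent to $\mathbf{v}_J \in (A_J)^\perp$. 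The map $\mathbf{v} \mapsto \mathbf{v}_J$ is visibly linear and injective on vectors supported in $J$, and the identity above shows its image is exactly $(A_J)^\perp$, giving the claimed isomorphism.

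Next I would translate each side of the equivalence into a statement about this space. On the one hand, $A_J = \F_q^t$ if and only if $(A_J)^\perp = \{\mathbf{0}\}$, i.e. if and only if $A^\perp$ contains no nonzero vector supported in $J$. On the other hand, a nonzero $\mathbf{v} \in A^\perp$ satisfies $\w(\mathbf{v}) \le t$ precisely when $\supp(\mathbf{v}) \subseteq J$ for some $J$ of size exactly $t$ (enlarge the support arbitrarily to such a set). Combining these, $\dd(A^\perp) \le t$ holds if and only if there exist some $J$ with $|J| = t$ and a nonzero $\mathbf{v} \in A^\perp$ supported in $J$, which by the isomorphism is equivalent to $(A_J)^\perp \ne \{\mathbf{0}\}$, that is $A_J \ne \F_q^t$, for some such $J$.

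Finally I would pass to the contrapositive: $\dd(A^\perp) > t$ is equivalent to the non--existence of such a $J$, namely to $A_J = \F_q^t$ for every $J$ of size $t$, which is exactly the asserted equivalence. I do not expect a serious obstacle here; the only point requiring care is the bookkeeping in the isomorphism above, where puncturing and taking duals are compatible precisely because the relevant vectors are supported inside $J$, so that the inner product over $\{1,\dots,n\}$ coincides with the inner product over $J$. As a sanity check one may note the equally short route through a generator matrix $G$ of $A$ (a parity--check matrix of $A^\perp$): both conditions simply say that every $t$ columns of $G$ are linearly independent. I would nonetheless present the duality argument above as the main proof, since it stays within the intrinsic language of codes and their duals used throughout the paper.
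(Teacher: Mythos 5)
The paper states this lemma without proof (it is invoked as a standard fact about MDS-type projections and dual distance), so there is no internal argument to compare against; judged on its own, your proof is correct and complete. The central device --- that for a fixed $J$ with $|J|=t$ the puncturing map $\vv \mapsto \vv_J$ restricts to an isomorphism from $\{\vv \in A^{\perp} \mid \supp(\vv) \subseteq J\}$ onto $(A_J)^{\perp}$, precisely because $\langle \vv, \av\rangle = \langle \vv_J, \av_J\rangle$ when $\vv$ is supported inside $J$ --- is exactly the right bookkeeping, and it is the same compatibility of puncturing and duality that the paper itself appeals to in the remark following Theorem~\ref{scritM_I}. Your translation of both bullets into ``$A^{\perp}$ has no nonzero word supported in a $t$-set'' and the passage to the contrapositive are sound; the only implicit hypothesis is $t \le n$ (otherwise the second condition is vacuous), which is harmless in context. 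The alternative you mention in passing --- both conditions say that every $t$ columns of a generator matrix of $A$, equivalently a parity-check matrix of $A^{\perp}$, are linearly independent --- is the textbook one-liner, but your duality formulation has the advantage of matching the intrinsic code-theoretic language the paper uses everywhere else, so either presentation would serve.
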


\begin{cor}
If $(A, B)$ is a $2$--power $t$--error locating pair for the code $C$, then 
\[M_{I_{\error}}=(\error^{(1)}\ast B)_{I_{\error}}^{\perp}\cap
  (\error^{(2)}\ast (B^{\perp}\ast C)^{\perp})_{I_{\error}}^{\perp}.\]
\end{cor}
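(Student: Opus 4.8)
The plan is to derive this statement directly from Theorem~\ref{scritM_I} by showing that, under the hypotheses of a $2$--power $t$--error locating pair, the term $A_{I_{\error}}$ appearing in that intersection is in fact the whole ambient space $\mathbb{F}_q^{t}$ and can therefore be dropped.

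First I would recall that, by the notation fixed for the error vector, $t = \w(\error) = |I_{\error}|$, so that $I_{\error}$ is a subset of $\{1, \dots, n\}$ of cardinality exactly $t$. Consequently all three codes $A_{I_{\error}}$, $(\error^{(1)}\ast B)_{I_{\error}}^{\perp}$ and $(\error^{(2)}\ast (B^{\perp}\ast C)^{\perp})_{I_{\error}}^{\perp}$ are subspaces of $\mathbb{F}_q^{|I_{\error}|} = \mathbb{F}_q^{t}$, so that the intersection in Theorem~\ref{scritM_I} genuinely takes place inside $\mathbb{F}_q^{t}$.

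Next I would invoke Condition~($2$--PELP\ref{item:pelp3}) in Definition~\ref{def:PELP2}, which gives $\dd(A^{\perp}) > t$. Applying Lemma~\ref{lemma:equivA_I=0} with $J = I_{\error}$ (which is legitimate precisely because $|I_{\error}| = t$) yields $A_{I_{\error}} = \mathbb{F}_q^{t}$; in other words, the puncturing of $A$ at the error positions is surjective onto the whole of $\mathbb{F}_q^{t}$. Substituting this equality into the expression of Theorem~\ref{scritM_I}, and using that intersecting with the full ambient space $\mathbb{F}_q^{t}$ leaves a subspace unchanged, one obtains
\[
M_{I_{\error}} = (\error^{(1)}\ast B)_{I_{\error}}^{\perp}\cap (\error^{(2)}\ast (B^{\perp}\ast C)^{\perp})_{I_{\error}}^{\perp}.
\]

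I do not expect any real obstacle here: the corollary is a routine specialisation obtained by feeding Lemma~\ref{lemma:equivA_I=0} into Theorem~\ref{scritM_I}. The only point deserving a moment's care is the verification that $|I_{\error}| = t$, which is exactly what makes Lemma~\ref{lemma:equivA_I=0} applicable to the set $J = I_{\error}$ and hence licenses the replacement of $A_{I_{\error}}$ by $\mathbb{F}_q^{t}$.
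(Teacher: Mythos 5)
Your proof is correct and is exactly the argument the paper intends: the corollary is stated immediately after Lemma~\ref{lemma:equivA_I=0} precisely so that Condition~($2$--PELP\ref{item:pelp3}) together with $|I_{\error}|=t$ lets one replace $A_{I_{\error}}$ by $\mathbb{F}_q^{t}$ in Theorem~\ref{scritM_I} and drop it from the intersection. Nothing is missing.
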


\medskip\noindent It is now possible to prove Theorem \ref{thm:nec_cond}.
\begin{proof}[Proof Theorem \ref{thm:nec_cond}]
Looking for a necessary condition to have $M_{I_{\error}}=\{0\}$, we get
\begin{eqnarray}
M_{I_{\error}}=\{0\} & \iff & (\error^{(1)}\ast B)_{I_{\error}}^{\perp}\cap (\error^{(2)}\ast (B^{\perp}\ast C)^{\perp})_{I_{\error}}^{\perp}=\{0\} \\
                     &\implies & \dim((\error^{(1)}\ast B)_{I_{\error}}^{\perp}) + \dim((\error^{(2)}\ast (B^{\perp}\ast C)^{\perp})_{I_{\error}}^{\perp})\le t \label{neccond2}\\
                     & \implies & \dim (\error^{(1)}\ast B)_{I_{\error}} +
                                  \dim (\error^{(2)}\ast (B^{\perp}\ast C)^{\perp})_{I_{\error}} \ge t
  \\
          & \implies & \dim(B)+\dim((B^{\perp}\ast C)^{\perp})\ge t.
                       \label{eq:cond_for_PELP}
\end{eqnarray}
\end{proof}

\subsection{The general case: $\ell\ge 2$}
It is possible to generalise the algorithm for $\ell\ge 2$. First we
generalise the structure we use.
\begin{defn}[$\ell$--Power error locating pairs]
  Given a linear code $C\subseteq\mathbb{F}_q^n$, a pair of linear
  codes $(A, B)$ with $A, B\subseteq\mathbb{F}_q^n$ is an
  $\ell$--\textit{power} $t$--\textit{error locating pair} for $C$ if
\begin{enumerate}[($\ell$--PELP1)]
\item\label{item:pelp1g} $A\ast B\subseteq C^\perp$;
\item\label{item:pelp2g} $\dim(A)>t$;
\item\label{item:pelp3g} $\dd(A^{\perp})>t$;
\item\label{item:pelp4g} $\dd(A)+\dd(C)>n$;
\item\label{item:pelp5g}
  $\dim(B)+\sum_{i=2}^{\ell}\dim(B^{\perp}\ast C^{i-1})^{\perp}\ge t$.
\end{enumerate}
\end{defn}

\medskip\noindent A generalisation of the notion of $M$ is needed too,
that is, for a generic power $\ell$, $M$ will be the intersection of
$\ell$ spaces.
\begin{defn}
  As in the ECP algorithm, let
  $M_1=\{\av\in A\mid \langle \av\ast\yv, \bv\rangle=0\ \ \forall
  \bv\in B\}$. For any $i=2,\dots,\ell$, we define the space
\begin{align*}
  M_i &\eqdef\{\av\in A\mid \langle \av\ast\yv^i, \vv\rangle=0\ \ \forall \vv\in (B^{\perp}\ast C^{i-1})^{\perp}\}.
\end{align*}
Finally we define \begin{equation}\label{eq:newM}
  M\eqdef \bigcap_{i=1}^{\ell}M_i.
\end{equation}
\end{defn}

\medskip\noindent Then, the algorithm for a general $\ell$ is the same
as Algorithm~\ref{algoPELP2} changing only the definition of $M$ by that
given in \eqref{eq:newM}.
Let us look for a necessary condition for this generalised algorithm
to return $\cv$.  It can be proven that Theorem \ref{corrPELP} can be
adapted to the generalised notion \eqref{eq:newM} of $M$. The
following theorem gives the necessary condition we look for.
\begin{thm}\label{corr_PELPg}
  If $A({I_{\error}})=M$, then
  $\dim(B)+\sum_{i=2}^{\ell}\dim(B^{\perp}\ast C^{i-1})^{\perp}\ge t.$
\end{thm}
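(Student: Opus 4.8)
The plan is to generalize the proof of Theorem~\ref{thm:nec_cond} (the case $\ell=2$) in the most direct way possible, following exactly the same chain of implications but now tracking all $\ell$ constraint spaces simultaneously. First I would establish the analogue of Theorem~\ref{scritM_I} for general $\ell$. The key algebraic fact is the generalization of \eqref{extPowstar}: from $A\ast B\subseteq C^\perp$, equivalently $A\ast C\subseteq B^\perp$, one deduces by iterating the star product that $A\ast C^i\subseteq B^\perp\ast C^{i-1}$ for each $i\in\{2,\dots,\ell\}$. This is exactly what licenses the definition of each $M_i$. Then, repeating the adjunction computation used in the proof of Theorem~\ref{scritM_I} verbatim for each power $i$, one characterizes membership in $M_i$ via
\[
\av\in M_i \iff \av_{I_{\error}}\in\bigl(\error^{(i)}\ast(B^\perp\ast C^{i-1})^\perp\bigr)_{I_{\error}}^{\perp},
\]
where $\error^{(i)}$ is the error on $\yv^i$, whose support is contained in $I_{\error}$ by the generalization of Proposition~\ref{remsupp}. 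Intersecting over all $i$ and invoking Lemma~\ref{lemma:equivA_I=0} (via $\dd(A^\perp)>t$, which is ($\ell$--PELP\ref{item:pelp3g})) to discard the $A_{I_{\error}}$ factor exactly as in the corollary after Theorem~\ref{scritM_I}, I obtain
\[
M_{I_{\error}}=(\error^{(1)}\ast B)_{I_{\error}}^{\perp}\cap\bigcap_{i=2}^{\ell}\bigl(\error^{(i)}\ast(B^\perp\ast C^{i-1})^\perp\bigr)_{I_{\error}}^{\perp}.
\]

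With this description in hand, the remainder mirrors the proof of Theorem~\ref{thm:nec_cond} line by line. By the $\ell$--analogue of Lemma~\ref{cond_eq_M}, the hypothesis $A(I_{\error})=M$ is equivalent to $M_{I_{\error}}=\{0\}$. An intersection of subspaces of $\F_q^t$ being trivial forces the sum of their dimensions to be at most $t$ (this is the generalization of implication \eqref{neccond2}), so
\[
\dim(\error^{(1)}\ast B)_{I_{\error}}^{\perp}+\sum_{i=2}^{\ell}\dim\bigl(\error^{(i)}\ast(B^\perp\ast C^{i-1})^\perp\bigr)_{I_{\error}}^{\perp}\le t.
\]
Passing to complementary dimensions inside the ambient space $\F_q^t$ (each orthogonal is taken in $\F_q^{|I_{\error}|}=\F_q^t$), this converts into a lower bound on the sum of the dimensions of the un-dualized punctured spaces.

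The final step is to observe that multiplying a code by the fixed vector $\error^{(i)}$ and then puncturing at $I_{\error}=\supp(\error^{(i)})$ can only preserve or decrease dimension, and crucially that $\dim(\error^{(1)}\ast B)_{I_{\error}}\le\dim B$ and $\dim(\error^{(i)}\ast(B^\perp\ast C^{i-1})^\perp)_{I_{\error}}\le\dim(B^\perp\ast C^{i-1})^\perp$. Combining these with the displayed inequality yields
\[
t\le\dim(B)+\sum_{i=2}^{\ell}\dim(B^\perp\ast C^{i-1})^\perp,
\]
which is precisely ($\ell$--PELP\ref{item:pelp5g}) and completes the proof. The only genuinely new ingredient compared with $\ell=2$ is verifying that the support-containment $\supp(\error^{(i)})\subseteq I_{\error}$ holds for every power, so that all the punctured-dual spaces live in the same coordinate set $I_{\error}$ and the orthogonality bookkeeping is unambiguous; I expect this to be the main point requiring care, since it underlies both the well-definedness of the notation $(\cdot)_{I_{\error}}^\perp$ and the dimension comparisons in the last step. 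Everything else is a routine replay of the $\ell=2$ argument with a sum of $\ell-1$ terms in place of a single one.
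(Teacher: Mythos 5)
Your overall strategy is the one the paper follows: describe $M_{I_{\error}}$ as the intersection of the punctured duals $(\error^{(1)}\ast B)_{I_{\error}}^{\perp}$ and $\bigl(\error^{(i)}\ast(B^{\perp}\ast C^{i-1})^{\perp}\bigr)_{I_{\error}}^{\perp}$, reduce the hypothesis $A(I_{\error})=M$ to $M_{I_{\error}}=\{0\}$ via Lemma~\ref{cond_eq_M}, and then count dimensions using $\dim\bigl((\error^{(1)}\ast B)_{I_{\error}}\bigr)\le\dim(B)$ and its analogues; the support containment $\supp(\error^{(i)})\subseteq I_{\error}$ that you flag is indeed needed and holds. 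However, your pivotal counting step is false as stated. You claim that a trivial intersection of subspaces of $\F_q^t$ forces the sum of their \emph{dimensions} to be at most $t$. This is true for \emph{two} subspaces (which is all that the proof of Theorem~\ref{thm:nec_cond}, i.e.\ implication \eqref{neccond2}, uses), but it fails for three or more: three distinct lines through the origin in $\F_q^2$ intersect trivially, yet their dimensions sum to $3>2$. Hence your displayed inequality $\dim(\error^{(1)}\ast B)_{I_{\error}}^{\perp}+\sum_{i=2}^{\ell}\dim\bigl(\error^{(i)}\ast(B^{\perp}\ast C^{i-1})^{\perp}\bigr)_{I_{\error}}^{\perp}\le t$ does not follow from $M_{I_{\error}}=\{0\}$ once $\ell\ge 3$. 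Moreover, even granting it, passing to complementary dimensions in $\F_q^t$ would give a lower bound of $(\ell-1)t$, not $t$, on the sum of the dimensions of the un-dualised spaces, so the bookkeeping in your final step is also inconsistent with the inequality you actually wrote down.

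The correct general principle is subadditivity of \emph{codimension}: if subspaces $V_1,\dots,V_\ell$ of $\F_q^t$ satisfy $\bigcap_j V_j=\{0\}$, then $\sum_j (t-\dim V_j)\ge t$. Applied to $V_j={U_j}^{\perp}$ with $U_1=(\error^{(1)}\ast B)_{I_{\error}}$ and $U_i=\bigl(\error^{(i)}\ast(B^{\perp}\ast C^{i-1})^{\perp}\bigr)_{I_{\error}}$, this yields directly $\sum_j\dim U_j\ge t$, and the theorem follows from the dimension bounds $\dim U_1\le\dim(B)$, $\dim U_i\le\dim(B^{\perp}\ast C^{i-1})^{\perp}$. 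The paper packages the same fact differently: it splits the intersection as $(\error^{(1)}\ast B)_{I_{\error}}^{\perp}\cap W$ with $W=\bigcap_{i\ge 2}\bigl(\cdots\bigr)_{I_{\error}}^{\perp}$, applies the two-subspace bound to this pair, and lower-bounds $\dim W$ by $t-\sum_{i\ge2}\dim U_i$ using Remark~\ref{remboundellg}. Either repair is one line, but as written your generalisation of \eqref{neccond2} is a false statement, so the proof does not go through in its present form.
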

\medskip\noindent Again, in order to prove this Theorem, we study the
condition $M_{I_{\error}}=\{0\}$, since it is equivalent to
$A({I_{\error}})=M$ by Lemma \ref{cond_eq_M}.
\begin{thm}
It holds 
\begin{equation}\nonumber
  M_{I_{\error}}=(\error^{(1)}\ast B)_{I_{\error}}^{\perp}\cap \bigcap_{i=2}^{\ell}
  {\left(\error^{(i)}\ast (B^{\perp}\ast C^{i-1})^{\perp}\right)}_{I_{\error}}^{\perp},
\end{equation}
where $\error^{(i)}=\sum_{h=1}^i \binom{i}{h}\cv^{i-h}\ast \error^{h}$
is such that $\yv^i=\cv^i+\error^{(i)}$ for all $i=1,\dots, \ell$.
\end{thm}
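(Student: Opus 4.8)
The plan is to mimic the proof of Theorem~\ref{scritM_I} (the case $\ell=2$): I would characterise membership in each space $M_i$ separately and then intersect. The case $i=1$ is already handled in the proof of Theorem~\ref{scritM_I}, which gives, for $\av\in A$, the equivalence $\av\in M_1 \iff \av_{I_{\error}}\in (\error^{(1)}\ast B)_{I_{\error}}^{\perp}$. So the real work is to prove, for every $i\in\{2,\dots,\ell\}$ and every $\av\in A$, that $\av\in M_i \iff \av_{I_{\error}}\in (\error^{(i)}\ast (B^{\perp}\ast C^{i-1})^{\perp})_{I_{\error}}^{\perp}$.

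For $i\ge 2$, I would fix $\av\in A$ and $\vv\in(B^{\perp}\ast C^{i-1})^{\perp}$, write $\yv^i=\cv^i+\error^{(i)}$, and split by bilinearity $\langle \av\ast\yv^i,\vv\rangle=\langle \av\ast\cv^i,\vv\rangle+\langle \av\ast\error^{(i)},\vv\rangle$. The first term vanishes: generalising \eqref{extPowstar}, from $A\ast C\subseteq B^{\perp}$ (equivalent to $A\ast B\subseteq C^{\perp}$ by Remark~\ref{rem: swich}) one has $A\ast C^i=(A\ast C)\ast C^{i-1}\subseteq B^{\perp}\ast C^{i-1}$, so $\av\ast\cv^i\in B^{\perp}\ast C^{i-1}$ and is therefore orthogonal to $\vv$. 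For the second term, the adjunction property \eqref{eq:adjunction} gives $\langle \av\ast\error^{(i)},\vv\rangle=\langle \av,\error^{(i)}\ast\vv\rangle$; since $\error^{(i)}=\sum_{h=1}^i\binom{i}{h}\cv^{i-h}\ast\error^{h}$ and each summand involves a factor $\error^{h}$ with $h\ge 1$, its support lies in $I_{\error}$ (generalising Proposition~\ref{remsupp}), hence $\error^{(i)}\ast\vv$ is supported on $I_{\error}$ and the pairing only involves $\av_{I_{\error}}$. This produces the desired equivalence.

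Intersecting over all $i$ yields $M=\{\av\in A\mid \av_{I_{\error}}\in S\}$ with $S=(\error^{(1)}\ast B)_{I_{\error}}^{\perp}\cap\bigcap_{i=2}^{\ell}(\error^{(i)}\ast(B^{\perp}\ast C^{i-1})^{\perp})_{I_{\error}}^{\perp}$, and projecting onto the coordinates $I_{\error}$ gives $M_{I_{\error}}=A_{I_{\error}}\cap S$. To remove the factor $A_{I_{\error}}$ I would invoke Condition ($\ell$--PELP\ref{item:pelp3g}), namely $\dd(A^{\perp})>t$: by Lemma~\ref{lemma:equivA_I=0} applied to $J=I_{\error}$, which has cardinality $t$, we get $A_{I_{\error}}=\F_q^{t}$, so $A_{I_{\error}}\cap S=S$, which is exactly the claimed formula.

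The only genuinely delicate point I expect is the vanishing of the $\cv^i$ term, i.e.\ the inclusion $A\ast C^i\subseteq B^{\perp}\ast C^{i-1}$: one must write $\av\ast\cv^i$ as $(\av\ast\cv)\ast\cv^{i-1}$ with $\av\ast\cv\in B^{\perp}$ and $\cv^{i-1}\in C^{i-1}$, so that it lands precisely in the space $B^{\perp}\ast C^{i-1}$ against whose dual $\vv$ is taken. The remainder is bookkeeping: checking $\supp(\error^{(i)})\subseteq I_{\error}$ and noting that all the dual-of-puncture expressions are unambiguous because the relevant supports sit inside $I_{\error}$.
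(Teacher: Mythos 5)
Your proposal is correct and follows essentially the same route as the paper, which simply says to adapt the proof of Theorem~\ref{scritM_I} term by term and then use $A_{I_{\error}}=\F_q^t$ (a consequence of $\dd(A^\perp)>t$ via Lemma~\ref{lemma:equivA_I=0}) to drop the factor $A_{I_{\error}}$. The details you supply — the inclusion $A\ast C^i\subseteq B^{\perp}\ast C^{i-1}$ killing the $\cv^i$ term, the support of $\error^{(i)}$ lying in $I_{\error}$, and the adjunction step — are exactly the intended adaptation.
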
 

\medskip\noindent To prove this result, it is possible to adapt the
proof of Theorem \ref{scritM_I} and observe that it still holds
$A_{I_{\error}}=\mathbb{F}_q^t$. We will use the following remark.
\begin{rem}\label{remboundellg} $ $
  Given a vector space $V$ with $\dim(V)=t$ and
  $A_1,\dots, A_n\subseteq V$, we have
\[\dim\Big(\bigcap_{i=1}^n A_i^{\perp}\Big)\ge t-\sum\dim(A_i);\]
in addition, it is easy to see that
  $\dim((\error^{(1)}\ast B)_{I_{\error}})\le \dim(B)$ and
\begin{equation}\nonumber
  \forall i\in \{2,\dots, \ell\}, \quad  \dim((\error^{(i)}\ast (B^{\perp}\ast C^{i-1})^{\perp})_{I_{\error}})\le \dim((B^{\perp}\ast C^{i-1})^{\perp}).
\end{equation}
\end{rem}

\medskip\noindent Now, it is possible to prove Theorem \ref{corr_PELPg}.
\begin{proof}[Proof Theorem \ref{corr_PELPg}]
It holds
\begin{eqnarray}
  M_{I_{\error}}=\{0\} & \iff & (\error^{(1)}\ast B)_{I_{\error}}^{\perp}\cap
                                \bigcap_{i=2}^{\ell}{\left(\error^{(i)}\ast (B^{\perp}\ast C^{i-1})^{\perp}\right)}_{I_{\error}}^{\perp}=\{0\}\nonumber \\
                       &\implies & \dim((\error^{(1)}\ast B)_{I_{\error}}^{\perp}) + \dim\Big(\bigcap_{i=2}^{\ell}(\error^{(i)}\ast (B^{\perp}\ast C^{i-1})^{\perp})_{I_{\error}}^{\perp}\Big)\le t. \label{neccond2g}
\end{eqnarray}
Now, thanks to Remark \ref{remboundellg}, one can easily see that
(\ref{neccond2g}) implies
\[\dim(B)+\sum_{i=2}^{\ell}\dim(B^{\perp}\ast C^{i-1})^{\perp}\ge t.\]
\end{proof}

\subsection{Complexity}
To conclude this section, let us discuss the complexity of the
algorithm. We denote by $\omega$ the exponent of the complexity of
matrix multiplications. First, recall that the computation of the
star product of two codes of length $n$ costs $O(n^4)$ arithmetic operations in
$\Fq$ using a deterministic algorithm and $O(n^\omega)$ using a
probabilistic algorithm (see for instance \cite[\S~VI.A and
D]{COT17}).

The evaluation of the complexity of the power error locating pairs
algorithm should be divided in two parts:
\begin{itemize}
\item the {\em pre-computation phase}, that should be done once for good and
  is independent from the error and the corrupted codeword;
\item the {\em online phase}, which depends on the corrupted codeword.
\end{itemize}

\subsubsection{The precomputation phase}
This phase consists essentially in computing generator matrices for
the codes ${(B^\perp*C^{i-1})}^\perp$ for $i \in \{1, \dots, \ell\}$.
Each new calculation consists in the computation of a $*$--product and
a dual.  This yields an overall cost of $O(\ell n^\omega)$ operations
in $\Fq$ using a probabilistic algorithm and $O(\ell n^4)$ operations
using a deterministic one.

\subsubsection{The online phase}

\begin{itemize}
\item The computation of each space $M_i$ boils down to the resolution
  of a linear system with $\dim A$ variables and
  $\dim {(B^\perp * C^{i-1})}^\perp$ equations.
  Hence a cost of $O(n^\omega)$ operations in $\Fq$.
\item The computation of $M$ consists in the calculation of $\ell - 1$
  intersections of spaces. Since the cost of the calculation of an
  intersection is $O(n^\omega)$ operations, the cost of the
  computation of $M$ from the knowledge of the $M_i$'s is in
  $O(\ell n^\omega)$
\end{itemize}
In summary, the overall complexity of the online phase is in $O(\ell n^\omega)$
operations in $\Fq$.

\begin{rem}
  Note that the previous complexity analysis is purely generic and
  does not take into account that codes with an error locating pair
  such as Reed--Solomon code may be described by structured matrices
  permitting faster linear algebra.
\end{rem}

\section{$\ell$--PELP algorithm for Reed--Solomon codes}\label{sec:PELP_RS}
We now give some applications of the $\ell$--PELP algorithm, starting with
Reed--Solomon codes. For these codes, the algorithm is much more
intuitive. Indeed, as for the error correcting pairs algorithm, it is
possible to deduce the PELP algorithm from a former decoding algorithm
for Reed--Solomon codes: the {\em power decoding}.

Let us consider the code $C=\RS{q}{k}$ and the pair $(A,B)$, where
$A=\RS{q}{t+1}$ and $B^{\perp}=\RS{q}{t+k}$. We look for the values of
$t$ for which $(A, B)$ is an $\ell$--power $t$--error locating pair for $C$. One can
see that, since we no longer ask for $\dd(B^{\perp})>t$, by Lemma
\ref{keylemECP}, $t$ can be larger than $\frac{\dd(C)-1}{2}$. About
the conditions to fulfill, we already have seen that properties
($\ell$--PELP\ref{item:pelp1}, \ref{item:pelp2}, \ref{item:pelp4})
hold for any $t<\dd(C)$ (see \S\ref{secECPRS}). Let us find the values
of $t$ which verify
\begin{description}
\item[($\ell$--PELP\ref{item:pelp3g})] $\dd(A^{\perp})>t$;
\item[($\ell$--PELP\ref{item:pelp5g})] $\dim(B)+ \sum_{i=2}^\ell \dim(B^{\perp}\ast C^{i-1})^{\perp}\ge t$.
\end{description}
Property ($\ell$--PELP\ref{item:pelp3}) holds for any $t$ since Reed--Solomon
codes are MDS and $A$ has dimension $t+1$. Let us now focus on
property ($\ell$--PELP\ref{item:pelp5}).
By Proposition \ref{prop:star_prod_GRS}, we know that
$B^{\perp}\ast C^{i-1}=\RS{q}{t+i(k-1)+1}$ and these codes are not equal
to $\Fq^n$ as soon as
\begin{equation}\label{weak_cond2}
t< n- \ell (k-1) - 1.
\end{equation}
\medskip
\noindent
If (\ref{weak_cond2}) is satisfied, then the bound in property
($\ell$--PELP\ref{item:pelp5}) becomes
\begin{equation}\label{dec_rad:PELP2RS}
t\le\frac{2n\ell-k\ell(\ell+1)+\ell(\ell-1)}{2(\ell+1)},
\end{equation}
which is the decoding radius for the power decoding algorithm for
a general $\ell$ (see \eqref{decradPow}).
\begin{rem}
Note that (\ref{dec_rad:PELP2RS})
  came in power decoding as a necessary condition to have a unique
  solution for a linear system. Here instead, it comes up as a
  necessary condition for an intersection of some vector spaces to
  be $\{0\}$.
\end{rem}

\subsection{The space $M$ and the key equations of power decoding}
In \S~\ref{subsec:ECPRS}, we have seen that it is possible to relate the
definition of $M_1$ with the key equations of Welch--Berlekamp algorithm.
One can do the same with the definition of $M$ in the power error
locating pairs algorithm and the key equations of the $\ell$--power decoding
algorithm. Here, we only consider the case $\ell=2$, since it is easy to
generalise the idea for a larger $\ell$. It is possible to write
(\ref{Keyeqpow}) in this way
\[
  \left\{
    \begin{array}{lcl}
      \ev_{\xv}(\Lambda)\ast \yv &=& \ev_{\xv}(N_1)\\
      \ev_{\xv}(\Lambda)\ast \yv^2&=& \ev_{\xv} (N_2),
    \end{array}
  \right.
\]
where $\ev_{\xv}$ is the evaluation map introduced in
\eqref{eq:ev_map}.  Hence, we can
deduce 
\begin{itemize}
\item $\ev_{\xv}(N_1)\in \RS{q}{t+k}=B^{\perp}$;
\item $\ev_{\xv}(N_2)\in \RS{q}{t+2k-1}=B^{\perp}\ast C$;
\item $\ev_{\xv}(\Lambda)\in \RS{q}{t+1}({I_{\error}})=A({I_{\error}})$;
\item $\ev_{\xv}(\Lambda)\in M$, where $M$ is the set defined in the
  $2$--power error locating pairs algorithm. Indeed we recall that
  $M=M_1\cap M_2$, where
\begin{eqnarray*}
  M_1 &= &\{\av\in A\mid \langle \av\ast \yv, \bv\rangle=0\ \forall \bv\in B\},\\
  M_2 &= &\{\av\in A\mid \langle \av\ast \yv^2, \vv\rangle=0\ \forall \vv\in (B^{\perp}\ast C)^{\perp}\}.
\end{eqnarray*}
\end{itemize}

\medskip\noindent In other words, in the power decoding algorithm one
works with polynomials, while in the power error locating pairs
algorithm one works with their evaluations. 

\subsection{Equivalence of the two algorithms for Reed--Solomon codes}
Thanks to the link presented in the previous subsection, it is
possible to find an isomorphism between the solution space of power
decoding and the space $M$. For the sake of simplicity, we explicit
this isomorphism in the case $\ell = 2$. The general case can easily
be deduced from the following study.
\begin{thm}\label{thm:iso_sol_m}
  Let $\yv=\mathbb{F}_q^n$ and $t$ a positive integer and suppose we
  run both the power decoding algorithm and the power error locating
  pairs algorithm with the same $t$ and $\ell=2$. Denote by $Sol$ the solution
  space of the linear system (\ref{eq:Spo}) in the power
  decoding. Then the linear map
\begin{equation}\label{eq:phi}
  \varphi \eqdef
  \map{Sol}{M}{(\lambda, \nu_1, \nu_2)}{\ev_{\xv}(\lambda).}
\end{equation}
is bijective.
\end{thm}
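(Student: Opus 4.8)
The plan is to exhibit $\varphi$ as a well-defined $\F_q$--linear bijection by translating, in both directions, between the polynomial constraints of the system \eqref{eq:Spo} defining $Sol$ and the orthogonality constraints defining $M=M_1\cap M_2$, the dictionary being the evaluation map $\ev_{\xv}$. Throughout I use the standing data $A=\RS{q}{t+1}$, $B^{\perp}=\RS{q}{t+k}$, $C=\RS{q}{k}$, together with the necessary condition $t\le n-2k+1$ (equation (\ref{nec_cond: PELP_0})), which via Proposition~\ref{prop:star_prod_GRS} makes $B^{\perp}\ast C=\RS{q}{t+2k-1}$ a genuine Reed--Solomon code and keeps every degree occurring below strictly below $n$, where $\ev_{\xv}$ is injective on $\F_q[X]_{<n}$.

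First I would check that $\varphi$ is well defined, i.e.\ $\ev_{\xv}(\lambda)\in M$ for $(\lambda,\nu_1,\nu_2)\in Sol$. Since $\deg\lambda\le t$ we have $\ev_{\xv}(\lambda)\in\RS{q}{t+1}=A$. The first line of \eqref{eq:Spo} reads exactly $\ev_{\xv}(\lambda)\ast\yv=\ev_{\xv}(\nu_1)$, and the bound $\deg\nu_1\le t+k-1$ gives $\ev_{\xv}(\nu_1)\in\RS{q}{t+k}=B^{\perp}$; hence $\langle\ev_{\xv}(\lambda)\ast\yv,\bv\rangle=0$ for all $\bv\in B$, that is $\ev_{\xv}(\lambda)\in M_1$. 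Likewise the second line gives $\ev_{\xv}(\lambda)\ast\yv^2=\ev_{\xv}(\nu_2)\in\RS{q}{t+2k-1}=B^{\perp}\ast C$, and since $(B^{\perp}\ast C)^{\perp\perp}=B^{\perp}\ast C$ this is precisely the statement $\langle\ev_{\xv}(\lambda)\ast\yv^2,\vv\rangle=0$ for all $\vv\in(B^{\perp}\ast C)^{\perp}$, i.e.\ $\ev_{\xv}(\lambda)\in M_2$. Thus $\ev_{\xv}(\lambda)\in M$ and $\varphi$ is a well-defined linear map.

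Surjectivity runs the same equivalences backwards. Given $\av\in M\subseteq A$, there is a unique $\lambda$ with $\deg\lambda\le t$ and $\ev_{\xv}(\lambda)=\av$. Membership $\av\in M_1$ means $\av\ast\yv\in B^{\perp}=\RS{q}{t+k}$, so $\av\ast\yv=\ev_{\xv}(\nu_1)$ for a unique $\nu_1$ with $\deg\nu_1\le t+k-1$, which after substituting $\av=\ev_{\xv}(\lambda)$ is exactly the first line of \eqref{eq:Spo}. Similarly $\av\in M_2$ gives $\av\ast\yv^2\in(B^{\perp}\ast C)^{\perp\perp}=\RS{q}{t+2k-1}$, whence $\av\ast\yv^2=\ev_{\xv}(\nu_2)$ with $\deg\nu_2\le t+2(k-1)$, the second line of \eqref{eq:Spo}. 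Hence $(\lambda,\nu_1,\nu_2)\in Sol$ and $\varphi(\lambda,\nu_1,\nu_2)=\av$. For injectivity, suppose $\ev_{\xv}(\lambda)=0$; as $\deg\lambda\le t<n$ and $\ev_{\xv}$ is injective on $\F_q[X]_{<n}$, we get $\lambda=0$, and then $\ev_{\xv}(\nu_1)=\ev_{\xv}(\lambda)\ast\yv=0$ and $\ev_{\xv}(\nu_2)=\ev_{\xv}(\lambda)\ast\yv^2=0$; using $\deg\nu_1\le t+k-1<n$ and $\deg\nu_2\le t+2k-2<n$ (both guaranteed by $t\le n-2k+1$) we conclude $\nu_1=\nu_2=0$.

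The only genuinely delicate point is the bookkeeping on degrees and dimensions, and it is precisely where the hypothesis $t\le n-2k+1$ is indispensable: it is what forces $B^{\perp}\ast C$ to equal the proper code $\RS{q}{t+2k-1}$ rather than all of $\F_q^n$ and, equivalently, keeps $\deg\nu_2\le t+2k-2\le n-1$ within the injectivity range of $\ev_{\xv}$. Were $t$ larger, a given $\av$ would admit several admissible $\nu_2$ differing by a multiple of $\pi(X)=\prod_{i}(X-x_i)$, so $\varphi$ would fail to be injective; this is the obstacle to watch. The passage to arbitrary $\ell\ge2$ is identical, replacing the two constraints by the $\ell$ constraints $\av\ast\yv^i\in B^{\perp}\ast C^{i-1}=\RS{q}{t+i(k-1)+1}$ for $i=1,\dots,\ell$, under the analogous bound (\ref{cond_nec:PELP_0gen}).
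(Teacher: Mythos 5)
Your proof is correct and follows essentially the same route as the paper: translating the key equations of \eqref{eq:Spo} into the orthogonality conditions defining $M_1\cap M_2$ via $\ev_{\xv}$, with the inverse map obtained by lifting $\av\in M$, $\av\ast\yv$ and $\av\ast\yv^2$ back to polynomials of bounded degree under the condition $t< n-2(k-1)$. You are in fact slightly more explicit than the paper (which leaves the verification that the two maps are mutually inverse as ``easy to prove''), and you correctly isolate the role of that degree condition in preventing kernel elements such as $(0,0,\pi)$.
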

\begin{proof}
  For the sake of simplicity, we provide the proof in the case
  $\ell = 2$.  The proof in the general case is easy to deduce at the
  cost of heavier notation. First, let us show that $\varphi$ is well
  defined. Let $(\lambda, \nu_1, \nu_2)$ belong to $Sol$. Then, it
  holds
\begin{equation}\label{keqfail}
\begin{cases}
\lambda(x_i)y_i=\nu_1(x_i) & i=1,\dots, n\\
\lambda(x_i)y_i^2=\nu_2(x_i) & i=1,\dots, n.
\end{cases}
\end{equation}
As we have seen in \S~\ref{sec:PELP_RS}, these two
conditions are equivalent to the statement
\begin{equation}\nonumber
\ev_{\xv}(\lambda)\in M_1\cap M_2=M.
\end{equation}
Conversely, given $\av\in M$, there exists
$\lambda\in \mathbb{F}_q[X]$ with $\deg(\lambda)<t+1$ such that
$\ev_{\xv}(\lambda)=\av$. Moreover, since $\av\in M$, we have
\begin{align*}
\av\ast \yv\in B^{\perp}  =\RS{q}{t+k},\ \ \ 
\av\ast\yv^2\in B^{\perp}\ast C  =\RS{q}{t+2k-1}.
\end{align*}
Thus, there exist $\nu_1, \nu_2\in\mathbb{F}_q[X]$ with
$\deg(\nu_1)<t+k$, $\deg(\nu_2)<t+2k-1$ and
\begin{equation}
\ev_{\xv}(\nu_1) =\av\ast \yv,\ \ \ \ev_{\xv}(\nu_2) =\av\ast\yv^2.
\end{equation}
We can then define another map
\begin{equation}
\psi\eqdef \map{M}{Sol}{\av}{(\lambda, \nu_1, \nu_2)},
\end{equation}
where $\lambda, \nu_1$ and $\nu_2$ are the polynomials associated to
$\av$ as before. It is easy to prove that, under the
condition\footnote{That is again bound (\ref{nec_cond: Pow_0}). }
\begin{equation}\label{eq:reminder}
t<n-2(k-1)
\end{equation}
it holds $\varphi\circ \psi=Id_{M}$ and $\psi\circ\varphi=Id_{Sol}$.
\end{proof}

In summary, for Reed--Solomon codes, power decoding and power error
locating pairs algorithms are equivalent. In particular they succeed
or fail for the same instances.

\section{PELP algorithm for algebraic
  geometry codes}\label{section:AGcodes}
As said previously, the power error locating pairs algorithm can be
run on any code with a PELP. We have seen that Reed--Solomon codes
belong to this class of codes. In the sequel, we show that algebraic
geometry codes also belong to it.  Similarly to the case of
Reed--Solomon codes, this algorithm can be compared with the power
decoding algorithm. Power decoding extends naturally from
Reed--Solomon codes to algebraic geometry codes. However, its use for
decoding AG codes in the literature concerns mainly one--point codes
from the Hermitian curve (see \cite{NB15,PRB19}). For this reason, we
give a brief presentation together with an analysis of its decoding
radius in Appendix~\ref{sec:appendix}.

In the sequel, we show that the analysis of the power decoding
provides a decoding radius which is slightly below that of the power
error locating pairs algorithm. Moreover, we observed experimentally
that the decoding radius given by the analysis of the PELP algorithm
is optimal for both the PELP and the power decoding
algorithms. Probably, a more detailed analysis of the power decoding
would provide a sharper estimate of the decoding radius, but the point
is that the analysis of the PELP algorithm provides an optimal radius
in a very elementary manner.

\subsection{Context}
Let $\X$ be a smooth projective geometrically connected curve of genus
$g$ over $\Fq$. Let $G$ be a divisor on $\X$ and
$\points = (P_1, \dots, P_n)$ be an ordered $n$--tuple of pairwise distinct rational
points of $\X$ avoiding the support of $G$.  We denote by $k$ and $\dd$ respectively the dimension and
the minimum distance of the code $\AGcode{\X}{\points}{G}$.  Moreover,
we denote by $D$ the divisor $P_1+\dots +P_n$ and by $W$ the divisor
$(\omega)$ where $\omega\in\Omega(\X)$ is a rational differential form
such that $v_{P_i} (\omega)=~-1$ and $res_{P_i}(\omega)=1$ for any
$i \in \{1,\dots, n\}$. 
We now introduce an extra divisor $F$ on $\X$ and the pair $(A, B)$ with
\begin{equation}\label{eq:elp_for_ag}
A=C_L(\X, \mathcal{P}, F)\ \ \ B=C_L(\X,\mathcal{P}, D+W-F-G).
\end{equation}
This pair of codes is our candidate to be a power error locating pair
for $C$. We analyse the case $\ell=2$ for simplicity (it is easy to
generalise what we are going to see).

\subsection{Decoding Radius}
In order to find the decoding radius of the $2$--power error correcting
pairs algorithm for algebraic geometry codes, we follow the same path
as for Reed--Solomon codes. That is, we look for the pairs $(A,B)$ that satisfy properties ($2$--PELP\ref{item:pelp1}--\ref{item:pelp5}) in Definition \ref{def:PELP2}.
To do so, we write some additional conditions on the degree of the
divisor $F$ and $G$ and on the number of errors $t$. First, note that
Property ($2$--PELP\ref{item:pelp1}) holds by construction of $A$ and $B$. In
order to have properties ($2$--PELP\ref{item:pelp2}, \ref{item:pelp3}) and to know the structure of the code $(B^{\perp}\ast C)^{\perp}$, we
ask for the two following conditions

\begin{cond}\label{condegF}
$\deg(F)\ge t+2g$.
\end{cond}
\begin{cond}
$\deg(G)\ge 2g$.
\end{cond}
\medskip\noindent In particular, it is easy to verify that under these two
additional conditions, we have by
Proposition~\ref{prop:star_prod_AGcodes}
\[(B^{\perp}\ast C)^{\perp}=C_L(\X, \mathcal{P},D+W-F-2G).\]
Let us fix then the value of $\deg(F)$ to be $2g+t$. We now consider the bound given by Condition ($2$--PELP\ref{item:pelp5}) for $\ell=2$
\begin{equation}\label{eq:AG_necessary}
\dim(B)+\dim((B^{\perp}\ast C)^{\perp})\ge t.
\end{equation}
We need to know the exact dimension of these spaces, hence we impose some more conditions on the
degree of the divisor\footnote{Remember that if $C=C_L(\X, \mathcal{P}, G)$ with $2g-2< \deg(G)<n$, then $\dim(C)=\deg(G)-g+1$.} $G$. We ask for 
\begin{cond}
$t<n-2\deg(G)-2g$.
\end{cond}
\noindent Finally, we get the following result.
\begin{prop}\label{thm:pelp_AG}
  Let $\deg(F)=t+2g<n-\deg(G)$, $\deg(G)< 2g$ and
  $t\le n-2\deg(G)-2g$. Then $C=C_L(\X, \mathcal{P}, G)$
  admits a $2$--PELP as in (\ref{eq:elp_for_ag}), if
\begin{equation}\label{rdPECPAG}
t\le \frac{2n-3\deg(G)-2}{3}-\frac{2}{3}g.
\end{equation}
In this case, bound (\ref{rdPECPAG}) gives the decoding radius of the $2$--PELP algorithm.
\end{prop}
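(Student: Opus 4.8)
The plan is to verify that, under the stated degree conditions, the pair $(A,B)$ of (\ref{eq:elp_for_ag}) meets all five requirements ($2$--PELP\ref{item:pelp1})--($2$--PELP\ref{item:pelp5}) of Definition~\ref{def:PELP2}, and then to observe that the last one is \emph{exactly} equivalent to the announced bound (\ref{rdPECPAG}). Three standard facts about algebraic geometry codes will do all the work: the duality $\AGcode{\X}{\points}{H}^{\perp} = \AGcode{\X}{\points}{D+W-H}$ (which is precisely why $W=(\omega)$ is chosen with $v_{P_i}(\omega)=-1$ and $\mathrm{res}_{P_i}(\omega)=1$), the Riemann--Roch dimension formula $\dim \AGcode{\X}{\points}{H} = \deg H - g + 1$ valid for $2g-2 < \deg H < n$, and the Goppa bound $\dd(\AGcode{\X}{\points}{H}) > n - \deg H$.

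First I would dispatch the four conditions that are mere degree bookkeeping. For ($2$--PELP\ref{item:pelp1}), a product $fh$ with $f \in L(F)$ and $h \in L(D+W-F-G)$ lies in $L(D+W-G)$, so evaluating yields $A \ast B \subseteq \AGcode{\X}{\points}{D+W-G} = C^{\perp}$. With $\deg F = t+2g$, Riemann--Roch gives $\dim A = t+g+1 > t$, which is ($2$--PELP\ref{item:pelp2}). Since $A^{\perp} = \AGcode{\X}{\points}{D+W-F}$ has associated degree $n-t-2$, the Goppa bound gives $\dd(A^{\perp}) > t+2 > t$, which is ($2$--PELP\ref{item:pelp3}). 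Finally the Goppa bounds $\dd(A) > n - \deg F$ and $\dd(C) > n - \deg G$ combine to $\dd(A)+\dd(C) > 2n - t - 2g - \deg G \ge n$ under the hypothesis $t \le n - 2\deg G - 2g$, giving ($2$--PELP\ref{item:pelp4}).

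The heart of the proof is ($2$--PELP\ref{item:pelp5}). Using Proposition~\ref{prop:star_prod_AGcodes} to compute $B^{\perp} \ast C = \AGcode{\X}{\points}{F+G} \ast \AGcode{\X}{\points}{G} = \AGcode{\X}{\points}{F+2G}$, one identifies $(B^{\perp} \ast C)^{\perp} = \AGcode{\X}{\points}{D+W-F-2G}$. Riemann--Roch then produces the two dimensions $\dim B = n - t - \deg G - g - 1$ and $\dim (B^{\perp} \ast C)^{\perp} = n - t - 2\deg G - g - 1$, so the inequality $\dim B + \dim(B^{\perp} \ast C)^{\perp} \ge t$ simplifies to $3t \le 2n - 3\deg G - 2g - 2$, which is precisely (\ref{rdPECPAG}). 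Because ($2$--PELP\ref{item:pelp5}) is a \emph{necessary} condition for $A(I_{\error}) = M$ by Theorem~\ref{thm:nec_cond}, and the algorithm returns $\cv$ exactly when $A(I_{\error}) = M$ by Theorem~\ref{corrPELP}, this common value is indeed the decoding radius of the algorithm.

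I expect the main obstacle to be the care needed at the boundary of Riemann--Roch. The dimension formula requires the strict inequality $\deg(D+W-F-2G) > 2g-2$, i.e.\ $t < n - 2\deg G - 2g$; at equality the divisor has degree $2g-2$ and could be special, inflating $\dim(B^{\perp}\ast C)^{\perp}$ beyond $\deg - g + 1$, so the strict form of the third additional condition is what must be invoked. A second point to check is that the degree hypotheses of Proposition~\ref{prop:star_prod_AGcodes} genuinely hold when forming $B^{\perp} \ast C$: one applies it with $\AGcode{\X}{\points}{G}$ (of degree $\ge 2g$ by hypothesis) and $\AGcode{\X}{\points}{F+G}$ (of degree $t + 2g + \deg G \ge 2g+1$), so that the star product is computed exactly rather than merely contained.
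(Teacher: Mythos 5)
Your proof is correct and follows essentially the same route as the paper's: you verify conditions ($2$--PELP1)--($2$--PELP5) one by one via the duality $\AGcode{\X}{\points}{H}^{\perp}=\AGcode{\X}{\points}{D+W-H}$, the Goppa bound and Riemann--Roch, and observe that ($2$--PELP5) reduces exactly to the bound (\ref{rdPECPAG}). You merely carry out explicitly the dimension computations the paper leaves implicit, and you correctly read the hypothesis as $\deg(G)\ge 2g$ (the statement's ``$\deg(G)<2g$'' is a typo, contradicting Additional Condition 2 which is what the star--product computation requires); your boundary remark about $\deg(D+W-F-2G)=2g-2$ is a fair observation, though a special divisor there would only enlarge $\dim(B^{\perp}\ast C)^{\perp}$ and hence cannot break the ``if'' direction.
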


\begin{proof}
  Condition ($2$--PELP\ref{item:pelp1}) is obviously satisfied by the codes
  $A, B$ defined in~(\ref{eq:elp_for_ag}). Moreover, since
  $\deg (F) = t+2g$, we get $\dim(A) > t$,
  i.e. Property~($2$--PELP\ref{item:pelp2}). Property~($2$--PELP\ref{item:pelp4})
  is a consequence of the condition $\deg (F) < n - \deg(G)$, which
  indeed entails \[\dd(A)+\dd(C) \geq 2n - \deg (F) - \deg (G) > n.\]
  Thanks to Additional Condition 1, we have ($2$--PELP\ref{item:pelp3}). Finally one notes that bound in ($2$--PELP\ref{item:pelp5}) becomes the
  bound on $t$ in (\ref{rdPECPAG}) thanks to the additional conditions
  and the property $\deg(F)<n-\deg(G)$.
\end{proof}
\begin{rem}
As for Reed--Solomon codes, we want to have $(B^{\perp}\ast C)\subsetneq \mathbb{F}_q^n$. Indeed in this case if $M$ and $M_1$ are as in Definition \ref{def:M-M_1}, we get $M\subsetneq M_1$ and the decoding radius in (\ref{rdPECPAG}) is usually achieved according to our tests. That is why it is important also to ask
\[t<n-2\deg(G)-g-1.\]
Note that this bound is achieved whenever we are in the hypothesis of Proposition \ref{thm:pelp_AG} and $g>1$.
\end{rem}
\medskip\noindent
The decoding radius can be computed even for
arbitrary values of $\ell$. Indeed, if we impose
$ t\le n-\ell\deg(G)-2g$, we get
\begin{equation}\label{Dec_rad:PELP_AG}
t\le \frac{2n\ell -\ell(\ell+1)\deg(G)}{2(\ell+1)}-g+\frac{g-\ell}{\ell+1}\cdot
\end{equation}

\subsection{Comparison with decoding radii of other algorithms for
  algebraic geometry codes}
We can now compare this decoding radius with the decoding radii of
Sudan algorithm and the power decoding algorithm for algebraic
geometry codes. We have (see \cite[Theorem 2.1]{SW99} and
Appendix~\ref{sec:appendix}):
\begin{align*}
&\hspace{-1cm}\textrm{\bf Sudan} \hspace{-2cm}& t &\le\frac{2n\ell-\ell(\ell+1)\deg(G)}{2(\ell+1)}-g -\frac{1}{\ell+1}\\
&\hspace{-1cm}\textrm{\bf Power\ decoding} \hspace{-2cm}& t &\le\frac{2n\ell-\ell(\ell+1)\deg(G)}{2(\ell+1)}-g -\frac{\ell}{\ell+1}\cdot
\end{align*}
First, note that if
\[
g>\ell-1,
\]
then the decoding radius of the $\ell$--PELP algorithm
(\ref{Dec_rad:PELP_AG}) is even larger than Sudan's algorithm decoding
radius. Furthermore, one can see that for algebraic geometry codes,
the power decoding algorithm and the power error locating pairs
algorithm no longer have the same decoding radius, but the second one
is larger. Actually the implementation of the algorithms put in
evidence that power decoding algorithm is actually able to correct
more than what expressed by its decoding radius, and up to the
recoding radius of the $\ell$--PELP algorithm. It is possible to
explain this by considering that in the power decoding algorithm
something changes once we pass to algebraic geometry codes from
Reed--Solomon codes. Indeed, in both cases, the decoding radius comes
as a necessary condition for a vector space to have dimension one. But
for Reed--Solomon codes, this is equivalent to have a necessary
condition for the algorithm to succeed, while for algebraic geometry
codes this is no longer true.

\medskip By the tests we made, it seems that the bound
(\ref{Dec_rad:PELP_AG}) is optimal. Though we should precise that
we run the algorithms with $\deg(F)=t+2g$. Actually experimentally we
have seen that it is possible to run power decoding algorithm with
$\deg(F)=t+g$ and obtain an empirical decoding radius
\[t
  \le\frac{2n\ell-\ell(\ell+1)\deg(G)}{2(\ell+1)}-\frac{\ell}{\ell+1}\cdot\]
which indeed corresponds to the empirical result obtained in \cite{NB15}.

\subsection{Cryptanalytic application}
In the last fourty years, many attempts for instantiating McEliece
encryption scheme \cite{M78} using algebraic codes have been proposed
in the literature. The use of generalised Reed--Solomon codes is known
to be unsecure since Sidelnikov and Shestakov's attack \cite{SS92}
permitting to recover the whole structure of such a code from the very
knowledge of a generator matrix.  Note that actually, a procedure to
recover the structure of a generalised Reed--Solomon code from the
data of a generator matrix was already known by Roth and Seroussi
\cite{RS85}. Sidelnikov--Shestakov attack has been extended to
algebraic geometry codes from curves of genus 1 and 2
\cite{M07,FM08}. For general algebraic geometry codes, an attack has
been given \cite{CMP17} that permits to recover an error correcting
pair or an {\em error correcting array} from the knowledge of a
generating matrix. This attack does not permits to recover the curve,
the divisor and the evaluation points but is enough to break the
system as soon as the decoder corrects at most half the designed
distance.

In a nutshell, this attack of \cite{CMP17} consists in computing some
filtered sequences of codes from the knowledge of a generator matrix
of $\AGcode{\X}{\points}{G}$. Namely, the codes computed are of the
form $\AGcode{\X}{\points}{iP}$ and $\AGcode{\X}{\points}{G-iP}$ for a
given rational point $P$ and for any integer $i$. For $i$ large
enough, the pair
$(\AGcode{\X}{\points}{iP}, \AGcode{\X}{\points}{G-iP})$ yields an
error correcting pair.

Suppose now that McEliece scheme is instantiated with an algebraic geometry
code and whose decryption step requires to correct beyond half the designed
distance by using Sudan's or power decoding algorithm. {\em Stricto sensu},
such a scheme is out of reach by the attack \cite{CMP17}. However,
the very same approach permits to design a power error locating pair. Then,
Algorithm~\ref{algoPELP2} can be run without requiring any further knowledge
on the curve and the divisor. 
This yields an interesting application of
this abstract formulation of decoding beyond half the minimum distance.
Note that no such cryptographic proposal exists in the literature but
\cite{ZZ18} which is unfortunately out of reach of power error locating pairs
since it requires the use of a Guruswami--Sudan like decoder yielding a decoding
radius close to Johnson bound.

 \section{PELP algorithm for cyclic
  codes}\label{sec:cyclic}
In this section, we give an application of the PELP algorithm for some
cyclic codes. In 1994, Duursma and K\"otter showed in \cite{DK94} that
an ECP algorithm can correct up to half the BCH bound and, in
particular cases, also half the Roos bound (see
Theorem~\ref{Roos_bound} for a definition and \cite{R83} for details).  

\medskip First, we recall the main notions and fix some
notation (for more details see \cite{DK94}). Let us consider a field $\mathbb{F}_q$ and an integer
$n$ with $\gcd(n, q)=1$. Given a vector
$\cv=(c_0,\dots,c_{n-1})\in\mathbb{F}_q^n$, we denote by $c(X)$ the
image of $\cv$ by the following linear map:
\[\map{\mathbb{F}_q^n}{\mathbb{F}_q[X]\slash(X^n-1)}
  {(c_0,\dots,c_{n-1})}{\sum_{i=0}^{n-1}c_iX^i.}\]
It is known that cyclic codes of length $n$ over $\mathbb{F}_q$ are in
correspondence with the factors of the polynomial $X^n-1$. In
particular, given $g(X)|X^n-1$ in $\mathbb{F}_q[X]$, the cyclic code
$C_g$ associated to $g$ is
\[C_g\eqdef\{\cv\in\mathbb{F}_q^n\ \ \textrm{such\ that}\ \
  g(X)|c(X)\}.\] In the same way, the roots of $g$ determine in a unique way the code $C_g$. Hence, let us consider an
extension $\mathbb{F}\supseteq\mathbb{F}_q$ such that $\mathbb{F}$
contains the $n$--th roots of unity and let $\gamma$ be a primitive
$n$--th root of unity.
\begin{defn}\label{matrix_cyclic}
  Given $R=\{i_1,\dots, i_m\}\subseteq \{1,\dots, n\}$, we define the
  $m\times n$ matrix
\[M(R)\eqdef
\begin{pmatrix}
1 & \gamma^{i_1} & \cdots & \gamma^{i_1(n-1)}\\
1 & \gamma^{i_2} & \cdots & \gamma^{i_2(n-1)}\\
\vdots & \vdots  &        & \vdots\\
1 & \gamma^{i_m} & \cdots & \gamma^{i_m(n-1)}
\end{pmatrix}.\]
\end{defn}
\medskip\noindent To any subset $R\subseteq\{1,\dots, n\}$, one can
then associate two cyclic codes.
\begin{defn}
$R$ is called \textit{defining set} for the code $C$ if 
\begin{equation}\label{def_set}
C=\{\cv\in \mathbb{F}_q^n\mid M(R)\cv^T=0\}.
\end{equation}
\end{defn}
\medskip\noindent 
\begin{rem}
  One can see that if $C$ is defined as in \eqref{def_set}, then $C$
  is a cyclic code. Indeed, we have $C=C_g$, where
  $g=\lcm\{m_i(x)\mid i\in R\}$ and $m_i$ is the minimal polynomial of
  $\gamma^i$ on $\mathbb{F}_q$. Note that different defining sets can
  define the same cyclic code $C$. By applying several times Frobenius
  morphism to the set $\{\gamma^i\mid i\in R\}$, one can find the
  maximal defining set, also called \textit{complete}. In this paper
  we will treat a general situation, where a defining set will not
  necessarily be complete.
\end{rem}
\begin{rem}
Note that, if $R$ is a defining set for a code $C$,
then $C=\tilde{C}\cap\mathbb{F}_q^n$, where
$\tilde{C}\subseteq\mathbb{F}^n$ is a cyclic code with parity check matrix
$M(R)$. If we denote by $\dd_R$ the minimum distance of the code
$\tilde{C}$, we get $\dd(C)\ge\dd_R$.
\end{rem}
\begin{defn}\label{def_gen_set}
$R$ is called \textit{generating set} for the code $C$ if 
\begin{equation}\label{gen_set}
C=\{\av M(R) ~|~\av\in\mathbb{F}^m\}.
\end{equation}
\end{defn}
\medskip\noindent We stress that if $R$ is a generating set for a code
$C$, then $C$ is a code with coefficients in the larger alphabet
$\mathbb{F}$ and has generating matrix $M(R)$. In particular,
$\dim_{\F}(C)=|R|$.
\begin{rem}
  Note that a code $C$ as in \eqref{gen_set} is a cyclic code. Indeed
  $C$ is the dual code of the cyclic code $D\subseteq\mathbb{F}^n$
  with defining set $R$ and it is known that the dual of a cyclic code
  is cyclic itself.
\end{rem}

\subsection{Roos bound} 
There are cases where it is possible to bound the minimum distance of
a cyclic code. Apart from the BCH bound, another and more general
bound has been given by Roos (\cite{R83}).
\begin{defn}\label{S+R}
  Given $R\subseteq\{1,\dots, n\}$, denote by $\overline{R}$ the smallest
  set made of consecutive indices modulo $n$ that contains
  $R$. Moreover, if $S$ is another subset of $\{1,\dots, n\}$, we can
  define the sum set
  \[S+R\eqdef\{s+r\mod n\mid s\in S, r\in R\}.\] Finally, given
  $a < n$, we define the set $aR\eqdef\{ar\mod n\mid r\in R\}$.
\end{defn}
It is possible to relate the star product of two cyclic codes to the
sum of their generating sets.
\begin{prop}\label{prop:S+R}
  Let $A$, $B$ and $W$ be three cyclic codes with generating sets
  respectively $S$, $R$ and $S+R$. Then,
\[A\ast B= W.\]
\end{prop}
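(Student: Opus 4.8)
The plan is to interpret the rows of the matrix $M(R)$ as evaluations of monomials at the $n$-th roots of unity, so that the component-wise (star) product of two rows corresponds to the product of the associated monomials, i.e.\ to the addition of their exponents modulo $n$. Concretely, for an index $i\in\{1,\dots,n\}$ I would set
\[
  \word{m}_i \eqdef (1,\gamma^{i},\gamma^{2i},\dots,\gamma^{(n-1)i})
  = {\left(\gamma^{i\ell}\right)}_{\ell=0}^{n-1},
\]
which is precisely the single row of $M(\{i\})$; by Definition~\ref{matrix_cyclic} the rows of $M(R)$ are exactly the vectors $\word{m}_i$ for $i\in R$. Since a generating set describes a code as the row space of the corresponding matrix (Definition~\ref{def_gen_set}), we have $A=\langle \word{m}_s : s\in S\rangle$, $B=\langle \word{m}_r : r\in R\rangle$ and $W=\langle \word{m}_t : t\in S+R\rangle$.

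The single computation driving the whole proof is the star-product identity
\[
  \word{m}_i \ast \word{m}_j = {\left(\gamma^{i\ell}\gamma^{j\ell}\right)}_{\ell=0}^{n-1}
  = {\left(\gamma^{(i+j)\ell}\right)}_{\ell=0}^{n-1} = \word{m}_{(i+j)\bmod n},
\]
where the last equality uses $\gamma^n=1$, so that only the residue of $i+j$ modulo $n$ matters. This is exactly the reduction built into the definition of the sum set $S+R$ in Definition~\ref{S+R}.

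With this in hand I would conclude by a spanning-set argument. By bilinearity of $\ast$, the code $A\ast B$ is spanned by the products $\word{m}_s\ast \word{m}_r$ with $s\in S$ and $r\in R$. By the identity above these are precisely the vectors $\word{m}_{(s+r)\bmod n}$, and as $(s,r)$ ranges over $S\times R$ the index $(s+r)\bmod n$ ranges exactly over $S+R$. Hence the spanning family of $A\ast B$ coincides with $\{\word{m}_t : t\in S+R\}$, which spans $W$; therefore $A\ast B=W$.

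There is no serious obstacle here; the only point requiring care is to phrase the final step as an equality of \emph{spans} rather than of families. Indeed, the map $(s,r)\mapsto(s+r)\bmod n$ need not be injective, so the same generator $\word{m}_t$ of $W$ may arise from several pairs and distinct exponents may collapse under reduction modulo $n$; but this does not affect the equality of the linear spans, since by definition $S+R$ is already the set of residues $(s+r)\bmod n$ with duplicates removed. I would also remark that the argument does not require the rows of $M(R)$ to be linearly independent: the statement $A\ast B=W$ is purely about spans and holds verbatim even when the generating sets are not minimal.
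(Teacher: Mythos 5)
Your proposal is correct and follows essentially the same route as the paper: the key identity $(1,\gamma^{i},\dots,\gamma^{i(n-1)})\ast(1,\gamma^{j},\dots,\gamma^{j(n-1)})=(1,\gamma^{i+j},\dots,\gamma^{(i+j)(n-1)})$ with exponents read modulo $n$, followed by the observation that these products span $A\ast B$ while the corresponding rows of $M(S+R)$ span $W$. Your added remarks on non-injectivity of $(s,r)\mapsto s+r \bmod n$ and on not needing linear independence of the rows are correct but do not change the argument, which matches the paper's proof.
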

\begin{proof}
First, note that for any $j\in S+R$ we get by Definition \ref{S+R}
\[j=s+r \mod n\]
for some $s\in S$ and $r\in R$. Hence,
\begin{equation}\label{easy_equality}
(1, \gamma^j, \dots, \gamma^{j(n-1)})=(1, \gamma^s, \dots, \gamma^{s(n-1)})\ast (1, \gamma^r, \dots, \gamma^{r(n-1)}).
\end{equation}
Now, it is easy to see that the set of generators of $A\ast B$
\[G \eqdef \{(1, \gamma^{s+r},\dots, \gamma^{(s+r)(n-1)} )\mid s\in S, r\in
  R\}\] is equal to the set composed by the rows of the matrix
$M(S+R)$ (see Definition~\ref{matrix_cyclic}). Since, by
Definition~\ref{def_gen_set}, this is a generator matrix for the code
$W$, we get that $G$ is a set of generators for both $A\ast B$ and
$W$, hence $A\ast B=W$.
\end{proof}

\begin{thm}[Roos bound]\label{Roos_bound}
  Let $R, S\subseteq\{1,\dots, n\}$ such that
  $|\overline{S}|\le |S|+\dd_R-2$. Then,
\[\dd_{R+S}\ge |S|+\dd_R-1.\]
\end{thm}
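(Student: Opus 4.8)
The plan is to prove the inequality directly on the code $\tilde{C}_{R+S}$ whose parity--check matrix is $M(R+S)$, whose minimum distance is $\dd_{R+S}$, by showing that every \emph{nonzero} codeword has weight at least $|S|+\dd_R-1$. So I would fix a nonzero word $\cv$ with $M(R+S)\cv^T=0$, set $I=\supp(\cv)$ and $w=\w(\cv)=|I|$, and aim at $w\ge |S|+\dd_R-1$. The whole argument is the star--product analogue of the classical derivation of the BCH bound; the genuinely new difficulty is to accommodate the ``holes'' of $S$ inside its consecutive closure $\overline{S}$, which is exactly where the hypothesis $|\overline{S}|\le|S|+\dd_R-2$ must enter.

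First I would manufacture many codewords of $\tilde{C}_R$ supported on $I$. For each $s\in S$ set $\word{d}^{(s)}\in\mathbb{F}^n$ with $d^{(s)}_l=c_l\gamma^{sl}$. Using $\gamma^{(r+s)l}=\gamma^{rl}\gamma^{sl}$, the identity $M(R+S)\cv^T=0$ yields $M(R)(\word{d}^{(s)})^T=0$ for every $s\in S$; that is, each $\word{d}^{(s)}$ lies in $\tilde{C}_R$ and is supported on $I$. This is the computation underlying Proposition~\ref{prop:S+R}, read coordinatewise.

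Next I would exploit the consecutive closure. Writing $\overline{S}=\{\sigma,\sigma+1,\dots,\sigma+m-1\}$ with $m=|\overline{S}|$, I introduce the linear map
\[
  \Psi\colon \mathbb{F}^{\overline{S}}\longrightarrow \mathbb{F}^{I},\qquad
  \lambda\longmapsto \Bigl(c_l\,\textstyle\sum_{s\in\overline{S}}\lambda_s\gamma^{sl}\Bigr)_{l\in I}.
\]
Since the $\gamma^l$ ($l\in I$) are pairwise distinct and $\overline{S}$ is an interval, $\Psi$ factors as a diagonal map composed with a genuine $m\times w$ Vandermonde matrix, whence $\dim\operatorname{Im}\Psi=\min(m,w)$ and $\dim\ker\Psi=\max(0,m-w)$. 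The reason for passing from $S$ to $\overline{S}$ is precisely the step I expect to be the crux: over $\mathbb{F}$ a \emph{sparse} (generalised) Vandermonde indexed by $S$ alone need not have full rank, so one cannot argue independence of the $\word{d}^{(s)}$ directly, whereas the Vandermonde indexed by the interval $\overline{S}$ is full rank and the cost of the missing columns is paid by $\dim\ker\Psi=m-w$.

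Finally I would combine this with a Singleton bound through $\dd_R$. Let $K=\tilde{C}_R\cap\{\word{u}\in\mathbb{F}^n\mid \supp(\word{u})\subseteq I\}$; restricting to the coordinates of $I$ identifies $K$ with a length-$w$ code all of whose nonzero words have weight $\ge\dd_R$, and since $K$ contains the nonzero $\word{d}^{(s)}$ this forces $w\ge\dd_R$ and $\dim K\le w-\dd_R+1$. Now the subspace $\{\lambda\mid\supp(\lambda)\subseteq S\}\subseteq\mathbb{F}^{\overline{S}}$ has dimension $|S|$ and is mapped by $\Psi$ into $K$, because $\sum_{s\in S}\lambda_s\word{d}^{(s)}\in\tilde{C}_R$ and is supported on $I$; hence it lies in $\Psi^{-1}(K)$ and
\[
  |S|\le \dim\Psi^{-1}(K)\le \dim\ker\Psi+\dim K\le \max(0,m-w)+(w-\dd_R+1).
\]
If $w<m$ the right--hand side equals $m-\dd_R+1$, giving $|\overline{S}|=m\ge|S|+\dd_R-1$ and contradicting the hypothesis $|\overline{S}|\le|S|+\dd_R-2$; hence $w\ge m$, the right--hand side is $w-\dd_R+1$, and this rearranges to $w\ge|S|+\dd_R-1$. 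As $\cv$ was an arbitrary nonzero codeword of $\tilde{C}_{R+S}$, this establishes $\dd_{R+S}\ge|S|+\dd_R-1$.
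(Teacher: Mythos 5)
Your proof is correct. There is nothing in the paper to compare it against: the paper does not prove Theorem~\ref{Roos_bound} at all, its ``proof'' being the single line ``See \cite{R83}''. What you supply is a complete, self-contained argument in the spirit of Roos's original rank computation, and every step checks out. The shifted words $\word{d}^{(s)}$ do lie in $\tilde{C}_R$ with support in $I$, because the $r$-th syndrome of $\word{d}^{(s)}$ is the $(r+s)$-th syndrome of $\cv$, which vanishes by $M(R+S)\cv^T=0$. The rank of $\Psi$ is $\min(|\overline{S}|,w)$: after factoring out the invertible diagonal matrix with entries $c_l\gamma^{\sigma l}$, one is left with a Vandermonde matrix on the pairwise distinct nodes $\gamma^l$, $l\in I$, with the \emph{consecutive} exponents $0,\dots,|\overline{S}|-1$, and you correctly identify this as the sole reason for passing from $S$ to its interval closure $\overline{S}$. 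The Singleton bound applied to $K$ punctured to $I$ gives $\dim K\le w-\dd_R+1$, and the count $|S|\le\dim\ker\Psi+\dim K\le\max(0,|\overline{S}|-w)+(w-\dd_R+1)$, combined with the hypothesis $|\overline{S}|\le|S|+\dd_R-2$, first excludes $w<|\overline{S}|$ and then yields $w\ge|S|+\dd_R-1$, as required. (The only degenerate case, $S=\emptyset$, is trivial and can be dismissed in a sentence.)
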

\begin{proof}
See \cite{R83}.
\end{proof}
\begin{rem}
  In the hypothesis of Theorem \ref{Roos_bound}, if $C$ is the cyclic
  code with defining set $R+S$, since $\dd(C)\ge\dd_{R+S}$, then
  $\dd(C)\ge |S|+\dd_R-1$ as well.
\end{rem}
\begin{rem}
  One can note that in the same hypothesis of Theorem
  \ref{Roos_bound}, the proof given in \cite{R83} can be adapted to
  prove that
\[\dd_{aR+bS}\ge |S|+\dd_R-1\]
for any $a, b\le n$ with $\gcd (a,n)=\gcd (b,n)=1$.

\end{rem}

\subsection{$\ell$--PELP algorithm and Roos bound}
We now focus on cyclic codes with defining set $R+S$ with $R$ and $S$
satisfying the hypothesis of Roos bound
(Theorem~\ref{Roos_bound}). Actually, we will work with the code in
$\mathbb{F}^n$ for the sake of simplicity.

\begin{thm}\label{thm:PELP_Cyclic}
Let $a, b\le n$ with $\gcd (a, n)= \gcd (b, n)=1$ and let
  $A,B\subseteq\mathbb{F}^n$ be cyclic codes with generating sets
  respectively $aS$ and $bR$, where
  \[|\overline{S}|\le |S|+\dd_R-2,\qquad |S|>t,\qquad \dd_S>t.\] Let
  $\tilde{C}\subseteq\mathbb{F}^n$ be the cylic code with parity check
  matrix $M(aS+bR)$ and $k=\dim(\tilde{C})$. Let us suppose that
  \begin{enumerate}[(i)]
  \item for any $i \in \{1, \dots, \ell-1\}$ we have
    $B^{\perp}\ast \tilde{C}^i\varsubsetneq \mathbb{F}_q^n$;
  \item\label{item:subcodes_of_B} any nonzero cyclic subcode of $B$ is
    non degenerated.
  \end{enumerate}
 Then $(A, B)$
  is an $\ell$--power $t$--error locating pair for the code
  $\tilde{C}$ with
  \begin{equation}\label{dec_radius_ell_cyclic}
  t\le \ell n-\Big[\frac{\ell(\ell+1)}{2}(k-1) + \ell(|S|+\delta)+\sum_{i=1}^{\ell-1}\gamma_i\Big],
  \end{equation}
  where $\delta$ and $\gamma_1,\dots, \gamma_{\ell-1}$ fulfill 
  \begin{eqnarray}
  n-k &=&|S|+|R|-1 +\delta \nonumber\\
  \dim(B) &=&\dim((B^{\perp}\ast \tilde{C}^i)^{\perp})+i\dim(\tilde{C})-i+\gamma_i\ \ \ \forall i=1,\dots, \ell-1.\label{cond_cyclic}
  \end{eqnarray}
  \end{thm}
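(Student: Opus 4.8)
The plan is to verify directly the five defining conditions ($\ell$--PELP\ref{item:pelp1g}--\ref{item:pelp5g}) for the pair $(A,B)$ relative to $\tilde C$, and then to check that ($\ell$--PELP\ref{item:pelp5g}) rewrites as exactly the bound (\ref{dec_radius_ell_cyclic}). Throughout I would use that, since $\gcd(a,n)=\gcd(b,n)=1$, multiplication of indices by $a$ (resp.\ $b$) is a bijection of $\{1,\dots,n\}$, so that $\dim(A)=|aS|=|S|$, $\dim(B)=|bR|=|R|$, and $\dim(\tilde C)=k=n-|aS+bR|$.

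Condition ($\ell$--PELP\ref{item:pelp1g}) is the cleanest: by Proposition~\ref{prop:S+R} the cyclic codes with generating sets $aS$ and $bR$ satisfy $A\ast B=W$, where $W$ is the cyclic code with generating set $aS+bR$, i.e.\ the code generated by $M(aS+bR)$. Since $\tilde C=\{\cv\mid M(aS+bR)\cv^T=0\}=W^{\perp}$, we obtain $A\ast B=W=\tilde C^{\perp}$, which is even stronger than the required inclusion. Condition ($\ell$--PELP\ref{item:pelp2g}) is immediate from $\dim(A)=|S|>t$. For ($\ell$--PELP\ref{item:pelp3g}), note that $A^{\perp}$ is the cyclic code with parity-check matrix $M(aS)$, hence with defining set $aS$; the weight-preserving coordinate permutation $j\mapsto a^{-1}j \bmod n$ identifies it with the code of defining set $S$, so $\dd(A^{\perp})=\dd_{aS}=\dd_S>t$.

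The main work is ($\ell$--PELP\ref{item:pelp4g}), $\dd(A)+\dd(\tilde C)>n$, which requires genuine lower bounds on both distances. For $\tilde C$, whose defining set is $aS+bR$, the multiplier form of the Roos bound (the remark following Theorem~\ref{Roos_bound}) together with the hypothesis $|\overline S|\le|S|+\dd_R-2$ gives $\dd(\tilde C)=\dd_{aS+bR}\ge|S|+\dd_R-1$. For $A$ I would use a BCH--type bound: every codeword of $A$ has the form $(Q(\gamma^{aj}))_{j=0}^{n-1}$ for a polynomial $Q$ whose monomials lie in the window $\overline S$ of $|\overline S|$ consecutive exponents; dividing out the lowest monomial leaves a polynomial of degree $\le|\overline S|-1$, which has at most $|\overline S|-1$ roots among the $n$th roots of unity, so a nonzero codeword vanishes in at most $|\overline S|-1$ positions and $\dd(A)\ge n-|\overline S|+1$. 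Adding the two estimates and using $|\overline S|\le|S|+\dd_R-2$ yields $\dd(A)+\dd(\tilde C)\ge n+(|S|-|\overline S|+\dd_R)\ge n+2>n$. I expect this BCH bound on $A$ to be the key technical point, since it is precisely where the Roos hypothesis on $\overline S$ is consumed.

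Finally I would prove ($\ell$--PELP\ref{item:pelp5g}) by direct substitution. Reindexing $j=i-1$ and using the defining relation $\dim((B^{\perp}\ast\tilde C^{j})^{\perp})=\dim(B)-j\dim(\tilde C)+j-\gamma_j=|R|-j(k-1)-\gamma_j$, the left-hand side of ($\ell$--PELP\ref{item:pelp5g}) becomes
\[
  \dim(B)+\sum_{j=1}^{\ell-1}\dim\big((B^{\perp}\ast\tilde C^{j})^{\perp}\big)
  =\ell|R|-(k-1)\tfrac{(\ell-1)\ell}{2}-\sum_{i=1}^{\ell-1}\gamma_i .
\]
Eliminating $|R|$ through $n-k=|S|+|R|-1+\delta$, i.e.\ $\ell|R|=\ell n-\ell(k-1)-\ell(|S|+\delta)$, and simplifying $-\ell(k-1)-(k-1)\tfrac{(\ell-1)\ell}{2}=-(k-1)\tfrac{\ell(\ell+1)}{2}$, this equals $\ell n-\big[\tfrac{\ell(\ell+1)}{2}(k-1)+\ell(|S|+\delta)+\sum_{i=1}^{\ell-1}\gamma_i\big]$, so ($\ell$--PELP\ref{item:pelp5g}) holds exactly for $t$ satisfying (\ref{dec_radius_ell_cyclic}). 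Hypotheses (i) and (ii) enter here to ensure this bound is the genuine radius: (i) is the cyclic analogue of condition (\ref{cond_nec:PELP_0gen}), guaranteeing $(B^{\perp}\ast\tilde C^{i})^{\perp}\ne\{0\}$ so that the higher powers actually shrink $M$, while the non-degeneracy in (ii) ensures the dimension estimates of Remark~\ref{remboundellg} are attained, so that (\ref{dec_radius_ell_cyclic}) is achieved rather than merely an upper bound.
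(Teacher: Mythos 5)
Your verification of ($\ell$--PELP\ref{item:pelp1g})--($\ell$--PELP\ref{item:pelp4g}) follows the paper's proof step for step: Proposition~\ref{prop:S+R} gives $A\ast B=\tilde C^{\perp}$, the window bound $\dd(A)\ge n-|\overline{S}|+1$ (your explicit polynomial argument is just an unpacking of the paper's remark that the code with generating set $\overline{S}$ is generalised Reed--Solomon) combines with the multiplier form of Theorem~\ref{Roos_bound} to give ($\ell$--PELP\ref{item:pelp4g}), and your closing substitution for ($\ell$--PELP\ref{item:pelp5g}) is exactly the arithmetic the paper performs; your handling of general $a,b$ by the coordinate permutation $j\mapsto aj$ is a clean alternative to the paper's ``reduce to $a=b=1$''.

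The one substantive divergence is your account of hypothesis~(\ref{item:subcodes_of_B}). You treat $\delta$ and the $\gamma_i$ as quantities merely \emph{defined} by \eqref{cond_cyclic} and then read (\ref{item:subcodes_of_B}) as a tightness condition (``the bound is achieved rather than merely an upper bound''). That is not its role: in the paper, (\ref{item:subcodes_of_B}) does real work inside the proof of ($\ell$--PELP\ref{item:pelp5g}). Writing $W=(B^{\perp}\ast\tilde C)^{\perp}$, the inclusion $W\ast\tilde C\subseteq B$ together with (\ref{item:subcodes_of_B}) forces $W\ast\tilde C$ to be non degenerated, so the Kneser-type Corollary~\ref{cor:CauchyDavenport} applies and gives $\dim(B)\ge\dim(W)+\dim(\tilde C)-1$, i.e.\ $\gamma_1\ge 0$; Lemmas~\ref{lem:prod_degenerated} and~\ref{lem:dual_degenerated} then propagate non degeneracy to $\tilde C^{i}$ and to $B^{\perp}\ast\tilde C^{i}$, and iterating Corollary~\ref{cor:CauchyDavenport} yields $\gamma_i\ge 0$ for all $i$ and $\delta\ge 0$ from $A\ast B=\tilde C^{\perp}$. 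Your substitution is formally valid for the literal statement, but without this step the correction terms could a priori be negative, which would empty the bound \eqref{dec_radius_ell_cyclic} of its intended meaning (and the comparison with half the Roos bound in the following subsection relies on knowing their signs). The optimality of the radius, by contrast, is only an experimental observation in the paper, not a consequence of (\ref{item:subcodes_of_B}).
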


\begin{rem}
  Note that if $(A,B)$ is an $\ell$--power $t$--error locating pair
  for $\tilde{C}$, then it is an $\ell$--power $t$--error locating pair
  for the cyclic code $C$ with defining set $aS+bR$ as well. Actually
  it is a standard procedure for cyclic codes (see for instance
  \cite{DK94}). In particular, that is why if $\tilde{C}$ is a
  Reed--Solomon code, the optimised choice of PELP for $\tilde{C}$
  with $|S|=t+1$, will give the decoding radius found in
  \S~\ref{sec:PELP_RS}.
\end{rem}

\begin{rem}\label{rem:reformulation}
  Condition~(\ref{item:subcodes_of_B}) on $B$ can be reformulated as follows:
  for any non empty subset $U$ of $R$, there does not exist $i \in \Z/n\Z$
  such that $U + i \equiv U \mod n$.
\end{rem}

Before proving Theorem~\ref{thm:PELP_Cyclic}, we need the two following
lemmas on the notion of {\em degenerated} codes (see \S~\ref{sss:Kneser}).

\begin{lem}\label{lem:prod_degenerated}
  Let $C \subseteq \Fq^n$ be a degenerated code. Then for any code $D \subseteq
  \Fq^n$, the code $C \ast D$ is degenerated too.
\end{lem}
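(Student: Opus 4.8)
The plan is to reduce the statement to a monotonicity property of the stabiliser operator and then invoke the characterisation of degenerated codes recalled after Theorem~\ref{thm:kneser}: by \cite[Theorem~1.2]{KS80}, a code $H$ is degenerated if and only if $\dim \textrm{Stab}(H) > 1$. Thus it suffices to prove that
\[
  \textrm{Stab}(C) \subseteq \textrm{Stab}(C \ast D),
\]
since this immediately gives $\dim \textrm{Stab}(C\ast D) \ge \dim \textrm{Stab}(C) > 1$ whenever $C$ is degenerated, and hence that $C \ast D$ is degenerated.

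The core of the argument is a one-line computation exploiting the associativity and bilinearity of the star product. First I would fix $\xv \in \textrm{Stab}(C)$, i.e. $\xv \ast C \subseteq C$, and show that $\xv$ stabilises every generator of $C\ast D$. Recall that $C \ast D$ is, by definition, the linear span of the products $\cv \ast \word{d}$ with $\cv \in C$ and $\word{d} \in D$. For such a generator,
\[
  \xv \ast (\cv \ast \word{d}) = (\xv \ast \cv) \ast \word{d},
\]
and since $\xv \ast \cv \in C$ by assumption, the right-hand side lies in $C \ast D$. Because the map $\word{z} \mapsto \xv \ast \word{z}$ is $\Fq$--linear, it sends the span of the generators into $C\ast D$, i.e. $\xv \ast (C\ast D) \subseteq C\ast D$, which is exactly the assertion $\xv \in \textrm{Stab}(C \ast D)$. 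This establishes the desired inclusion.

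Combining the inclusion with the characterisation of \cite{KS80} finishes the proof. The argument is short and the only point requiring care is the passage from generators to the full span of $C \ast D$; this is precisely where linearity of $\word{z}\mapsto \xv \ast \word{z}$ is used, so that the containment verified on the spanning set propagates to all of $C\ast D$. I do not expect a genuine obstacle here: unlike a direct approach splitting into the two clauses of the definition of \emph{degenerated} (a zero column, handled trivially since $Z(C) \subseteq Z(C\ast D)$, versus a decomposition into subcodes with disjoint supports, which is more delicate to track through the star product), the stabiliser route treats both cases uniformly.
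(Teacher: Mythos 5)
Your proof is correct and follows exactly the paper's route: the paper's entire proof is the one-line observation that $\textrm{Stab}(C) \subseteq \textrm{Stab}(C \ast D)$, combined implicitly with the characterisation of degenerated codes via stabilisers of dimension $>1$ from \cite{KS80}. You have simply made explicit the associativity computation and the passage from generators to the span, both of which are sound.
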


\begin{proof}
  It suffices to observe that
  $\textrm{Stab}(C) \subseteq \textrm{Stab}(C \ast D)$.
\end{proof}

\begin{lem}\label{lem:dual_degenerated}
  A code $C \subseteq \Fq^n$ is degenerated if and only if $C^{\perp}$
  is degenerated.
\end{lem}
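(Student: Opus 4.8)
The plan is to show that passing to the dual leaves the stabiliser unchanged, and then to invoke the characterisation of degeneracy recalled in \S\ref{sss:Kneser}, namely that a code is degenerated if and only if its stabiliser has dimension strictly greater than $1$ (see \cite[Theorem 1.2]{KS80}). Thus it suffices to prove that $\textrm{Stab}(C) = \textrm{Stab}(C^\perp)$, since this immediately yields $\dim \textrm{Stab}(C) = \dim \textrm{Stab}(C^\perp)$ and hence the desired equivalence.

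To establish the inclusion $\textrm{Stab}(C) \subseteq \textrm{Stab}(C^\perp)$, I would take $\xv \in \textrm{Stab}(C)$, so that $\xv \ast C \subseteq C$ by definition, and an arbitrary $\uv \in C^\perp$. The goal is to check that $\xv \ast \uv \in C^\perp$, that is, $\langle \xv \ast \uv, \cv \rangle = 0$ for every $\cv \in C$. The adjunction property \eqref{eq:adjunction} rewrites this pairing as $\langle \uv, \xv \ast \cv\rangle$; since $\xv \in \textrm{Stab}(C)$ gives $\xv \ast \cv \in C$ and $\uv \in C^\perp$, this pairing vanishes. Hence $\xv \ast \uv \in C^\perp$ and $\xv \in \textrm{Stab}(C^\perp)$.

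The same computation, carried out verbatim for an arbitrary code $D$, shows $\textrm{Stab}(D) \subseteq \textrm{Stab}(D^\perp)$. Applying it once to $D = C$ and once to $D = C^\perp$, and using $(C^\perp)^\perp = C$, yields both inclusions, hence the equality $\textrm{Stab}(C) = \textrm{Stab}(C^\perp)$, and the lemma follows. There is no real obstacle here: the only ingredient beyond the definitions is the adjunction relation between the inner product and the star product, which is exactly \eqref{eq:adjunction}, and the argument is a short symmetric one requiring no case distinction.
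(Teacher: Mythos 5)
Your proof is correct and follows exactly the paper's approach: the paper's own (one-line) proof says to use the adjunction property \eqref{eq:adjunction} to show $\textrm{Stab}(C) = \textrm{Stab}(C^\perp)$ and then appeals to the stabiliser characterisation of degeneracy, which is precisely what you do, with the details of the inclusion and the biduality argument filled in correctly.
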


\begin{proof}
  Using the adjunction property (\ref{eq:adjunction}) of the star product,
  one proves that $\textrm{Stab}(C) = \textrm{Stab}(C^\perp)$.
\end{proof}

\begin{proof}[Proof of Theorem~\ref{thm:PELP_Cyclic}]
  We treat the case $a=b=1$, the general case being an easy
  generalisation. We have by hypothesis $\dim(A)=|S|>t$. Next,
  from Proposition~\ref{prop:S+R}, 
  $A\ast B=\tilde{C}^{\perp}$. Furthermore, we have
  $\dd(A^{\perp})=\dd_S>t$. Hence, properties
  ($\ell$--PELP\ref{item:pelp1g}), ($\ell$--PELP\ref{item:pelp2g}) and
  ($\ell$--PELP\ref{item:pelp3g}) are satisfied.

  Now, let us focus on
  property ($\ell$--PELP\ref{item:pelp4g}). We have that $A$ is
  contained in the code with generating set $\overline{S}$, whose
  distance is $n-|\overline{S}|+1$ (note that it is a generalised
  Reed--Solomon code). Hence, we get $\dd(A)\ge n-|\overline{S}|+1$,
  which, together with Roos bound, gives
  \[\dd(A)+\dd(\tilde{C})\ge n-|\overline{S}|+|S|+\dd_R\ge n+2>n.\]
  
  In order to check Property ($\ell$--PELP\ref{item:pelp5g}),
  we first consider the case $\ell = 2$. Set
  $W\eqdef~(B^{\perp}\ast~\tilde{C})^{\perp}$. Then,
  \begin{equation}\label{eq:last_inclusion}
    W\perp B^{\perp}\ast \tilde{C}\iff W\ast \tilde{C}\subseteq B.
  \end{equation}
  From Condition~(\ref{item:subcodes_of_B}) on $B$, the code
  $W \ast \tilde C$ is non degenerated.
  This last observation, together with inclusion
  (\ref{eq:last_inclusion}) and
  Corollary~\ref{cor:CauchyDavenport} yield
  \[\dim(W)+\dim(\tilde{C})-1+\gamma_1= \dim(B)\]
  for some nonnegative integer $\gamma_1$.  Next, since
  $W \ast \tilde C$ is non degenerated, from
  Lemmas~\ref{lem:prod_degenerated} and~\ref{lem:dual_degenerated},
  the code $W^\perp = B^\perp \ast C$ is non
  degenerated too.  Thus, we get 
  \begin{eqnarray}
    (2\textrm{--PELP}\ref{item:pelp5}):\ \dim(B)+\dim(B^{\perp}\ast \tilde{C})^{\perp} \ge t&\iff &2\dim(B)-\dim(\tilde{C})+1-\gamma_1\ge t \nonumber\\
                                                                                                 &\iff &2|R|-k+1-\gamma_1\ge t.\label{dec_radius_cyclic2}
  \end{eqnarray}
  Now, since $A\ast B = \tilde{C}^{\perp}$, using
  Corollary~\ref{cor:CauchyDavenport} again, we know that there exists
  $\delta\ge 0$ such that
  \begin{equation}\label{dec_radius_cyclic1}
    |S|+|R|-1 +\delta= n-k.
  \end{equation}
  Hence, by (\ref{dec_radius_cyclic2}) and (\ref{dec_radius_cyclic1}),
  property ($2$--PELP\ref{item:pelp5}) is equivalent to
  \[t\le 2n-3k+3-2\delta-\gamma_1 -2|S|.\] It is now easy to
  generalise the proof for $\ell>2$. Indeed if we consider $i>2$ and
  $Z\eqdef(B^{\perp}\ast \tilde{C}^i)^{\perp}$, we have as before
  \[Z\ast \tilde{C}^i\subseteq B.\] From
  Condition~(\ref{item:subcodes_of_B}) together with
  Lemma~\ref{lem:prod_degenerated}, we deduce that $\tilde C^i$ is non
  degenerated. Then, by applying Corollary~\ref{cor:CauchyDavenport}
  iteratively and thanks to Condition~(\ref{item:subcodes_of_B})
  again, we know that there exist two nonnegative integers $\gamma'$
  and $\gamma''$ such that
  \begin{align*}
    \dim(\tilde{C}^i)&=i\dim(\tilde{C})-i+1+\gamma'\\
    \dim(B) &=\dim(Z)+\dim(\tilde{C}^i)-1+\gamma''.
  \end{align*}
  Note that these two equations give \eqref{cond_cyclic} with
  $\gamma_i\eqdef \gamma'+\gamma''$. Finally, by combining
  \eqref{cond_cyclic} and \eqref{dec_radius_cyclic1}, we get that
  Property ($\ell$--PELP\ref{item:pelp5g}) for $\ell>2$ is equivalent
  to the bound in \eqref{dec_radius_ell_cyclic}.
\end{proof}
\begin{rem}
  Note that $\delta$ and the $\gamma_i$'s do not depend only on the choice of
  $R$ and $S$ but also on the parameters $a, b$. Hence, in particular,
  the decoding radius depends as well on $a,b$.
\end{rem}

\subsection{Comparison with Roos bound}
We now would like to compare the obtained decoding radius to Roos
bound. To do so, we consider a particular case of cyclic code. Let
$R,S\subseteq\{1,\dots, n\}$ such that $|R|=|\overline{R}|=r$, $|S|>t$
and $|\overline{S}|\le|S|+\dd_R-2$. Let us denote by $t_{A,B}$ the
decoding radius (\ref{dec_radius_ell_cyclic}) for $\ell=2$ and by
$\dd_{Roos}$ the amount $|S|+\dd_R-1$. By the equality
$|S|+|R|-1+\delta=n-k$, that is $|S|+r-1+\delta=n-k$, we get
\begin{equation}\label{comp:Roos_dec_radius}
  t_{A,B}\ge \frac{\dd_{Roos}-1}{2}\iff k\le\frac{3n+6-3\delta-2\gamma_1 - 4|S|}{5}\cdot
\end{equation}

\begin{rem}
  Observe that, if $\gcd(a,n)=\gcd(b,n)=1$, Roos bound holds even for $aS$ and
  $bR$. Hence (\ref{comp:Roos_dec_radius}) gives an useful information
  about the behaviour of $t_{A,B}$ for any $a,b$ $\in\mathbb{N}$ with
  $(a,n)=(b,n)=1$.
\end{rem} 

\medskip
\noindent
Equivalence (\ref{comp:Roos_dec_radius}) gives a good information
about the parameters to have in order to cross half the Roos
bound. Indeed by some tests we made it has been possible to see that
the decoding radius $t_{A, B}$ is achieved really often.

\begin{exmp}
  We now give an easy example of a $2$--PELP algorithm's application
  on a cyclic code (which is not BCH) where
  $t_{A,B}>\frac{\dd_{Roos}-1}{2}$. Let us consider $n=51$, $q=5$ and
  the sets $S=\{0,\dots, 24\}\cup\{30\}$,
  $R=\{0,\dots,13\}\cup\{19\}$.
  According to Remark~\ref{rem:reformulation}, one can check that
  Condition~(\ref{item:subcodes_of_B}) of Theorem~\ref{thm:PELP_Cyclic}
  is satisfied by $B$.
  We now consider the cyclic code
  $\tilde{C}$ with parity check matrix $M(S+R)$. Since we are in the
  hypothesis of Roos bound and $\dd_R=15$, we obtain
  $\frac{\dd_{Roos}-1}{2}=19$, while $t_{A,B}=23$. In fact, the true
  minimum distance of $\tilde C$ can be computed to be 45. Hence we
  get
\[t_{A,B}=23>\frac{\dd(\tilde{C})-1}{2}=22.\]
\end{exmp}

  \section*{Conclusion}
 We proposed a unified framework for a decoder that can correct beyond
  half the minimum distance. Exactly as error correcting pairs can be
 regarded as an abstraction of Welch--Berlekamp algorithm, our
 approach called power error locating pairs is an abstraction of power
 decoding for Reed--Solomon and algebraic geometry codes.  This
 algorithm applies to any code equipped to a power error locating pair
 structure such as some cyclic codes for instance.  In addition our
 results turn out to have interesting consequences on code based
 cryptography since we proved that a McEliece like system using
 algebraic geometry codes with a secret decoder correcting up to
 Sudan's radius is unsecure.

 On the other hand, our algorithm does not decode Reed--Solomon or
 algebraic geometry codes up to the Johnson radius.  In this
 direction, finding an abstraction of Rosenkilde's extension of power
 decoding \cite{N15} would represent an interesting challenge. Such a
 result would for instance yield an attack to any cryptosystem like the
 one introduced in \cite{ZZ18}.

\section*{Acknowledgements}
The authors express their gratitude to the anonymous referees
for their careful work and their many relevant comments permitting
a significant improvement of this article.
This work was supported by French {\em Agence Nationale de la Recherche}
{\sc Manta} : ANR-15-CE39-0013.

\nocite{*}
\bibliographystyle{alpha}
\newcommand{\etalchar}[1]{$^{#1}$}

\appendix
\section{Power decoding for algebraic geometry
  codes}\label{sec:appendix}
We show how
the power decoding algorithm adapts
to arbitrary algebraic geometry codes. Let $C=C_L(\X, \mathcal{P}, G)$ and
$\textbf{y}=\textbf{c}+\textbf{e}\in\mathbb{F}_q^n$ the word we want
to correct, where $\textbf{c}\in C$. We have then
$$c=\ev_{\mathcal{P}}(f)\text{ with }f\in L(G).$$ 
Furthermore, as in the previous sections, we suppose that
$\w(\textbf{e})=t$ and denote the support of $\textbf{e}$ by ${I_{\error}}$.
We keep the same idea we used in the version of the algorithm for
Reed--Solomon codes. Indeed, let us suppose to have 
$\Lambda\in\mathbb{F}_q(\X)$ such that $\Lambda(P_i)=0$ for all
$i\in {I_{\error}}$. Then, given $\ell\in\mathbb{N}$ we get
\begin{equation}
\Lambda(P_i)y_i^j=\Lambda(P_i)f^j(P_i)\ \ \ \ \forall\ i=1,\dots, n,\ \ j=1,\dots\ell.
\end{equation}
We would like then to find $\Lambda$ as above. It is easy to see that
such a $\Lambda$ has to be searched in $L(F)$ for a certain $F$
such that $\deg(F)\ge t+g$. 
(we will give a better bound for that
soon).
It is possible to see $(\Lambda, f)$ as a solution of
\begin{equation}\label{poweqnonlin}
\lambda(P_i)y_i^j=\lambda(P_i)\phi^j(P_i)\ \ \ \ \forall\ i=1,\dots, n,\ \ j=1,\dots\ell,
\end{equation}
that is, a system of $n\ell$ equations whose unknowns are the
coordinates of $\lambda$ and $\phi$ in the basis of respectively
$L(F)$ and $L(G)$. System (\ref{poweqnonlin}) is not
linear in the unknowns though, hence we linearise it by considering a
new function $\nu_j\coloneqq\lambda \phi^j$ for any equation.
For all
$j\in\{1, \dots, \ell\}$, we get
$$\nu_j\in L(F)L(jG)\subseteq L(F+jG).$$ 
In order to use Theorem~\ref{prop:star_prod_AGcodes}, let us fix $\deg(F)= t+2g$ and suppose $\deg(G)\ge 2g+1$. We get then the following problem.
\begin{kprob}
  Given $\textbf{y}\in\mathbb{F}_q^n$ and $t\in\mathbb{N}$, look for
  $\lambda,\nu_1, \dots, \nu_{\ell}\in\mathbb{F}_q^n(\X)$ such that
\begin{itemize}
\item $\lambda\in L(F)$ with $\deg(F)= t+2g$;
\item $\nu_j\in L(F+jG)$ for all $j=1,\dots,\ell$;
\item $\lambda(x_i)y_i=\nu_j(x_i)$ for all $i=1,\dots, n$ and
  $j=1,\dots, \ell$.
\end{itemize}
\end{kprob} 

\medskip\noindent Therefore, even this case, the power decoding algorithm
consists in solving a linear system and we will just consider a
nonzero solution.
\paragraph{Decoding Radius.}
As in the case of Reed--Solomon codes, we would like to have a solution
space of dimension one. A necessary condition for that, is
\begin{equation}\label{PowDecAGcond}
\# unknowns\le \# equations+1.
\end{equation}
The number of equations is $n\ell$. For the number of unknowns,
we need to know the dimension of the spaces $L(F+jG)$ for all
$j=1,\dots, \ell$. The bounds we have set in the hypothesis give
\begin{equation}\nonumber
\dim(L(F+jG))=t+g+j\deg(G)+1.
\end{equation}
Hence by condition (\ref{PowDecAGcond}) we get the following decoding radius
\begin{equation}\label{decradPowDecAG}
t\le\frac{2n\ell-\ell(\ell+1)\deg(G)}{2(\ell+1)}-g -\frac{\ell}{\ell+1}\cdot
\end{equation}
\begin{rem}
  This bound is not a sufficient condition to have a solution, but it
  is not even a necessary condition. In fact, as for the power decoding algorithm for Reed--Solomon codes, we could find a good
  solution even for a larger value of $t$ and on the other hand the
  algorithm could fail even if $t$ fulfills condition
  (\ref{decradPowDecAG}).
\end{rem}

\end{document}